\newtheorem{theorem}{Theorem}[section]
\newtheorem{lemma}[theorem]{Lemma}
\newtheorem{definition}[theorem]{Definition}
\newtheorem{remark}[theorem]{Remark}
\newmdtheoremenv[backgroundcolor=gray!10,
                 linewidth=0pt,
                 innerleftmargin=4pt,
                 innerrightmargin=4pt,
                 innertopmargin=-2pt,
                 innerbottommargin=4pt,
                 splitbottomskip=4pt]{protocol}[prot]{Protocol}
\newlist{proto}{description}{1}
\setlist[proto]{align=right,labelindent=1em,labelwidth=1.5cm,leftmargin=!,itemsep=0pt}
\newcounter{prob}
\newtheorem{problem}[prob]{Problem}
\newcommand{\eq}[1]{\hyperref[eq:#1]{(\ref*{eq:#1})}}
\renewcommand{\sec}[1]{\hyperref[sec:#1]{Section~\ref*{sec:#1}}}
\newcommand{\thm}[1]{\hyperref[thm:#1]{Theorem~\ref*{thm:#1}}}
\newcommand{\lem}[1]{\hyperref[lem:#1]{Lemma~\ref*{lem:#1}}}
\newcommand{\cor}[1]{\hyperref[cor:#1]{Corollary~\ref*{cor:#1}}}
\newcommand{\itm}[1]{\hyperref[itm:#1]{\ref*{itm:#1}}}
\newcommand{\app}[1]{\hyperref[app:#1]{Appendix~\ref*{app:#1}}}
\newcommand{\dfn}[1]{\hyperref[dfn:#1]{Definition~\ref*{dfn:#1}}}
\newcommand{\fig}[1]{\hyperref[fig:#1]{Figure~\ref*{fig:#1}}}
\newcommand{\clm}[1]{\hyperref[clm:#1]{Claim~\ref*{clm:#1}}}
\newcommand{\alg}[1]{\hyperref[alg:#1]{Algorithm~\ref*{alg:#1}}}
\newcommand{\stp}[1]{\hyperref[stp:#1]{Step~\ref*{stp:#1}}}
\newcommand{\asm}[1]{\hyperref[asm:#1]{Assumption~\ref*{asm:#1}}}
\newcommand{\prot}[1]{\hyperref[prot:#1]{Protocol~\ref*{prot:#1}}}
\newcommand{\prob}[1]{\hyperref[prob:#1]{Problem~\ref*{prob:#1}}}
\newcommand{\rmk}[1]{\hyperref[rmk:#1]{Remark~\ref*{rmk:#1}}}
\newcommand{\expref}[2]{\texorpdfstring{\hyperref[#2]{#1~\ref{#2}}}{#1~\ref{#2}}}
\newcommand{\bra}[1]{\langle #1 \vert}
\newcommand{\ket}[1]{\vert #1 \rangle}
\newcommand{\proj}[1]{\vert #1\rangle\!\langle #1\vert}
\DeclareMathOperator{\tr}{tr}
\let\originalleft\left
\let\originalright\right
\renewcommand{\left}{\mathopen{}\mathclose\bgroup\originalleft}
\renewcommand{\right}{\aftergroup\egroup\originalright}
\newcommand{\A}[0]{\mathcal{A}}
\newcommand{\B}[0]{\mathcal{B}}
\newcommand{\C}[0]{\mathcal{C}}
\newcommand{\F}[0]{\mathcal{F}}
\newcommand{\G}[0]{\mathcal{G}}
\renewcommand{\H}[0]{\mathcal{H}}
\newcommand{\K}[0]{\mathcal{K}}
\newcommand{\R}[0]{\mathcal{R}}
\newcommand{\W}[0]{\mathcal{W}}
\newcommand{\X}[0]{\mathcal{X}}
\newcommand{\Y}[0]{\mathcal{Y}}
\newcommand{\zx}[0]{\textsc{zx}}
\DeclareMathOperator*{\Exp}{\mathbb{E}}
\DeclareMathOperator{\poly}{poly}
\DeclareMathOperator{\negl}{negl}
\newcommand{\yes}{\mathrm{yes}}
\newcommand{\no}{\mathrm{no}}
\renewcommand{\Re}{\mathop{\mathrm{Re}}}
\newcommand{\NP}[0]{\ensuremath{\mathsf{NP}}\xspace}
\newcommand{\QMA}[0]{\mathsf{QMA}}
\newcommand{\QPIA}[0]{\ensuremath{\mathrm{QPIA}}\xspace}
\newcommand{\FS}{\ensuremath{\mathsf{FS}}\xspace}
\newcommand{\BQP}[0]{\ensuremath{\mathsf{BQP}}\xspace}
\newcommand{\MA}[0]{\ensuremath{\mathsf{MA}}\xspace}
\newcommand{\LWE}[0]{\ensuremath{\mathsf{LWE}}\xspace}
\newcommand{\1}[0]{\mathbbm{1}}
\newcommand{\supp}[0]{\mathrm{supp\,}}
\newcommand{\ZZ}[0]{\mathbb{Z}}
\newcommand{\LL}[0]{\mathcal{L}}
\newcommand{\Gen}[0]{\mathsf{Gen}}
\newcommand{\Chk}[0]{\mathsf{Chk}}
\newcommand{\Inv}[0]{\mathsf{Inv}}
\newcommand{\Samp}[0]{\mathsf{Samp}}
\newcommand{\FHE}[0]{\mathsf{FHE}}
\newcommand{\Enc}[0]{\mathsf{Enc}}
\newcommand{\Dec}[0]{\mathsf{Dec}}
\newcommand{\Eval}[0]{\mathsf{Eval}}
\newcommand{\Com}[0]{\mathsf{Com}}
\newcommand{\gen}[0]{\mathsf{gen}}
\newcommand{\commit}[0]{\mathsf{commit}}
\newcommand{\verify}[0]{\mathsf{verify}}
\newcommand{\Sim}[0]{\mathsf{Sim}}
\newcommand{\Ext}[0]{\mathsf{Ext}}
\newcommand{\aP}[0]{\mathsf{P}}
\newcommand{\aV}[0]{\mathsf{V}}
\newcommand{\aS}[0]{\mathsf{S}}
\newcommand{\NIZK}[0]{\mathsf{NIZK}}
\newcommand{\Setup}[0]{\mathsf{Setup}}
\newcommand{\crs}[0]{\mathsf{crs}}
\newcommand{\term}[0]{\mathsf{term}}
\newcommand{\verdict}[0]{\mathsf{verdict}}
\newcommand{\witness}[0]{\tau}
\newcommand{\setup}[0]{\mathsf{setup}}
\newcommand{\st}[0]{\mathsf{st}}
\newcommand{\test}{\ensuremath{\mathfrak{t}}\xspace}
\newcommand{\hada}{\ensuremath{\mathfrak{h}}\xspace}
\newcommand{\genprot}[0]{generalized $\Sigma$-protocol\xspace}
\newcommand{\genprots}[0]{generalized $\Sigma$-protocols\xspace}
\newcommand{\prover}{\ensuremath{\mathcal P}\xspace}
\newcommand{\verifier}{\ensuremath{\mathcal V}\xspace}
\newcommand{\simulator}{\ensuremath{\mathcal S}\xspace}
\newcommand{\PPT}{PPT\xspace}
\newcommand{\QPT}{QPT\xspace}
\newcommand{\bit}{\{0,1\}}
\newcommand{\hybrids}{
  \begin{enumerate}
  \item[$H_0$:] \prot{interactive-attempt}.
  \item[$H_1$:] Same as $H_0$ except that in the setup phase, $\prover$ receives uniform one-time pad keys $\beta,\gamma$ and randomness $r_1$ for the commitment and $\verifier$ receives the commitment $\xi=\commit(\beta,\gamma, r_1)$.
    Moreover, $\prover$ reveals $\beta,\gamma$ and $r_1$ in Round $\prover_2$.
    $\verifier$ accepts if $\verdict'$ returns 1 and $\xi=\commit(\beta,\gamma, r_1)$.
  \item[$H_2$:] Same as $H_1$ except that $\prover$ also sends the commitment to $u$ with message $\chi$, and the randomness $r_2$.
    $\verifier$ accepts if $\verdict'$ returns 1, $\xi=\commit(\beta,\gamma,r_1)$, and $\chi=\commit(u,r_2)$.
  \item[$H_3$:] Same as $H_2$ except that in the setup phase, both $\verifier$ and $\prover$ also receive $csk=\FHE.\Enc_{hpk}(sk)$ and $cs=\FHE.\Enc_{hpk}(s)$.
  \item[$H_4$:] Same as $H_3$ except that in the setup phase both $\verifier$ and $\prover$ get $\crs$ and in Round $\prover_3$, $\prover$ sends $ce$ and in Verdict, $\verifier$ accepts if $ce$ is a valid ciphertext and $\NIZK.\aV(\crs,x,\Dec_{hsk}(ce))=1$.
Notice that this is \prot{interactive}.
  \end{enumerate}
}
\begin{document}

\title{Non-interactive classical verification\\ of quantum computation}
\author{Gorjan Alagic$^{1,2,3}$\quad Andrew M. Childs$^{1,3,4}$ \quad  Alex B. Grilo$^{5,6}$ \quad Shih-Han Hung$^{1,3,4}$ \\[10pt]
\small{$^1$Joint Center for Quantum Information and Computer Science, University of Maryland} \\
\small{$^2$National Institute of Standards and Technology, Gaithersburg, Maryland} \\
\small{$^3$Institute for Advanced Computer Studies, University of Maryland} \\
\small{$^4$Department of Computer Science, University of Maryland} \\
\small{$^5$CWI, Amsterdam} \\
\small{$^6$QuSoft, Amsterdam} \\ [10pt]
}

\date{}

\maketitle

\begin{abstract}
In a recent breakthrough, Mahadev constructed an interactive protocol that enables a purely classical party to delegate any quantum computation to an untrusted quantum prover. In this work, we show that this same task can in fact be performed \emph{non-interactively} and in \emph{zero-knowledge.}

  Our protocols result from a sequence of significant improvements to the original four-message protocol of Mahadev. We begin by making the first message instance-independent and moving it to an offline setup phase. We then establish a parallel repetition theorem for the resulting three-message protocol, with an asymptotically optimal rate. This, in turn, enables an application of the Fiat-Shamir heuristic, eliminating the second message and giving a non-interactive protocol. Finally, we employ classical non-interactive zero-knowledge (NIZK) arguments and classical fully homomorphic encryption (FHE) to give a zero-knowledge variant of this construction. This yields the first purely classical NIZK argument system for $\QMA$, a quantum analogue of $\NP$.

We establish the security of our protocols under standard assumptions in quantum-secure cryptography. Specifically, our protocols are secure in the Quantum Random Oracle Model, under the assumption that Learning with Errors is quantumly hard. The NIZK construction also requires circuit-private FHE.
\end{abstract}

\section{Introduction}

Quantum computing devices are expected to solve problems that are infeasible for classical computers. However, as significant progress is made toward constructing quantum computers, it is challenging to verify that they work correctly. This becomes particularly difficult when devices reach scales that rule out direct classical simulation.

This problem has been considered in various models, such as with multiple entangled quantum provers~\cite{RUV13,McK16,GKW15,HPDF15,FH15,NV17,CGJV19,Grilo19} or with verifiers who have limited quantum resources~\cite{BFK09,Bro18,MF16,ABOEM17}. Such solutions are not ideal since they require assumptions about the ability of the provers to communicate or require the verifier to have some quantum abilities.

In a major breakthrough, Mahadev recently described the first secure protocol enabling a purely classical verifier to certify the quantum computations of a single untrusted quantum prover~\cite{Mah18}. The Mahadev protocol uses a quantum-secure cryptographic assumption to give the classical verifier leverage over the quantum prover. Specifically, the protocol is sound under the assumption that the Learning with Errors (LWE) problem does not admit a polynomial-time quantum algorithm. This assumption is widely accepted, and underlies some of the most promising candidates for quantum-secure cryptography~\cite{NIST19}.

\paragraph{The Mahadev protocol.}

Mahadev's result settled a major open question concerning the power of \emph{quantum-prover interactive arguments} (QPIAs). In a QPIA, two computationally-bounded parties (a quantum prover $\prover$ and a classical verifier $\verifier$) interact with the goal of solving a decision problem. Mahadev's result showed that there is a four-round\footnote{We take one round to mean a single one-way message from the prover to the verifier, or vice-versa. The Mahadev protocol involves four such messages.} QPIA for $\BQP$ with negligible completeness error and constant soundness error $\delta \approx 3/4$. The goal of the protocol is for the verifier to decide whether an input Hamiltonian $H$ from a certain class (which is $\BQP$-complete) has a ground state energy that is low (YES) or high (NO). 

The protocol has a high-level structure analogous to classical $\Sigma$-protocols~\cite{Dam02}:
\begin{enumerate}
\item \verifier generates a private-public key pair $(pk,sk)$ and sends $pk$ to \prover;
\item \prover prepares the ground state of $H$ and then coherently evaluates a certain classical function $f_{pk}$. This yields a state of the form
\begin{align}\label{eq:mah-sketch}
\sum_x \alpha_x \ket{x}_X \ket{f_{pk}(x)}_Y\,,
\end{align}
where the ground state is in a subregister of register $X$. \prover measures the output register $Y$ and sends the result $y$ to \verifier. Note that \prover now holds a superposition over the preimages of $y$.
\item \verifier replies with a uniformly random \emph{challenge} bit $c \in \{0,1\}$.
\item If $c=0$ (``test round''), \prover measures the $X$ register in the computational basis and sends the outcome. If $c=1$ (``Hadamard round''), \prover measures $X$ in the Hadamard basis and sends the outcome.
\end{enumerate}
After the four message rounds above are completed, the verifier uses their knowledge of $H$ and the secret key $sk$ to either accept or reject the instance $H$.

\paragraph{Our results.}

In this work, we show that the Mahadev protocol can be transformed into protocols with significantly more favorable parameters, and with additional properties of interest.  Specifically, we show how to build non-interactive protocols (with setup) for the same task, with negligible completeness and soundness errors. One of our protocols enables a verifier to publish a single public ``setup'' string and then receive arbitrarily many proofs from different provers, each for a different instance. We also construct a non-interactive protocol that satisfies the zero-knowledge property~\cite{BFM88}.

In principle, one could ask for a slightly less interactive protocol: one where the prover and the verifier both receive the instance from some third party, and then the prover simply sends a proof to the verifier, with no setup required. While we cannot rule such a protocol out, constructing it seems like a major challenge (and may even be impossible). In such a setting, the proof must be independent of the secret randomness of the verifier, making it difficult to apply the ``cryptographic leash'' technique of Mahadev. On the other hand, without cryptographic assumptions, such a protocol would result in the unlikely inclusion \BQP $\subseteq$ \MA~\cite{Aar10}.

All of our results are conditioned on the hardness of the \LWE{} problem for quantum computers; we call this {\em the \LWE assumption}. This assumption is inherited from the Mahadev protocol. For the zero-knowledge protocol, we also require fully-homomorphic encryption (FHE) with circuit privacy~\cite{OPCPC14}.
Our security proofs hold in the Quantum Random Oracle Model (QROM)~\cite{BDFLSZ11}. For simplicity, in our exposition we assume that the relevant security parameters are polynomial in the input \BQP instance size $n$, so that efficient algorithms run in time $\poly(n)$ and errors are (ideally) negligible in $n$.

\paragraph{Warmup: A non-interactive test of quantumness.}\label{NI-test}

To explain our approach, we first briefly describe how to make the ``cryptographic test of quantumness'' (CTQ) of~\cite{BCMVV18} into a non-interactive protocol (with setup.) This is a significantly simplified version of the Mahadev protocol: there is no ground state, and the initial state \eq{mah-sketch} is simply in uniform superposition over $X$. The soundness error is $1/2$, meaning that a \emph{classical} prover can convince the verifier to accept with probability at most $1/2$~\cite{BCMVV18}. A quantum prover can easily answer both challenges, so the completeness is~$1$.

To reduce the interaction in this protocol, we perform two transformations. First, we repeat the protocol independently in parallel $k$ times, with the verifier accepting if and only if all $k$ copies accept. We then remove Round 3 via the Fiat-Shamir transform~\cite{FS86}: the prover computes the challenges $\mathbf c = (c_1, c_2, \dots, c_k) := \mathcal H(y_1, \dots, y_k)$ themselves via a public hash function $\mathcal H$. This allows the prover to go directly to Round 4, i.e., measuring the $X$ registers. The verifier then performs the $k$-fold accept/reject verdict calculations, using coins $\mathbf c$ computed in the same manner. The result is a two-message protocol. Moreover, since the keys are drawn from a fixed distribution, we can give the $pk_j$ to the prover and the $sk_j$ to the verifier in an offline setup phase, so that the protocol only requires one message in the online phase. We refer to this protocol as NI-CTQ.

Since the soundness experiment of NI-CTQ only involves classical provers, and the verifier is also classical, soundness can be deduced from existing classical results. Specifically, standard parallel repetition\footnote{A subtlety is that this is a private-coin protocol. However, the $c=0$ branch  is publicly simulable so~\cite{HPWP10} applies. Alternatively, one can apply the techniques of~\cite{Hai09}.} theorems~\cite{Hai09,HPWP10,BHT19} combined with soundness of Fiat-Shamir~\cite{BR95,FS86,PS00} yield the fact that NI-CTQ has negligible soundness and completeness errors, in the Random Oracle Model (ROM).

\paragraph{Transforming the Mahadev protocol.}

Similar to NI-CTQ above, we will apply various transformations to the Mahadev verification protocol itself:
\begin{enumerate}
\item making the first message instance-independent (i.e., moving it to an offline setup phase);
\item applying parallel repetition, via a new parallel repetition theorem;
\item adding zero-knowledge, by means of classical NIZKs and classical FHE; and
\item applying Fiat-Shamir (in the QROM~\cite{BDFLSZ11}).
\end{enumerate}
Unlike with NI-CTQ, however, establishing that these transformations satisfy desirable properties is much more challenging. For instance, since cheating provers can now be quantum, classical parallel repetition theorems do not apply.

\paragraph{Instance-independent setup.}
The first transformation is relatively simple to describe, at a high level.
In the Mahadev protocol, the initial message depends on a sequence of basis choices ($X$ or $Z$) for measuring the ground state of a $\mathsf{ZX}$ Hamiltonian. These choices need to be consistent with a particular two-local term drawn from some distribution $\mathcal D$.
Clearly, a \emph{random} choice is correct with probability $1/4$. Now, if we consider multiple copies of the ground state, and each copy is assigned both a random choice of bases and a random term from $\mathcal D$, then about $1/4$ of the copies get a consistent assignment. We can then make the initial message instance-independent by increasing the number of copies of the ground state in the Mahadev protocol by a constant factor. We establish this fact\footnote{More precisely, we apply this transformation at the level of the Morimae-Fitzsimons protocol~\cite{MF16}, an important building block of the Mahadev protocol.} in \expref{Lemma}{lem:modified-mf} below. We refer to the result as ``the three-round Mahadev protocol,'' and denote it by $\mathfrak M$.

\paragraph{Parallel repetition.}
The $k$-fold sequential repetition of a protocol is a simple way of decreasing the original soundness error $\delta$ to $\delta^k$, at the cost of multiplying the number of interaction rounds by $k$. Parallel repetition is much more desirable because it does not increase the number of rounds. However, even in the case of purely classical protocols, proving that parallel repetition reduces soundness error is often quite difficult, and may require adapting the protocol itself~\cite{BIN97,Hai09}.

Does parallel repetition work for quantum-prover interactive arguments? The Mahadev protocol is a natural case to consider since it already exhibits the full decisional power of QPIAs, namely $\BQP$. However, several complications arise when attempting to establish parallel repetition using classical techniques. First, the Mahadev protocol is clearly private-coin, precisely the category that is challenging even classically~\cite{BIN97,Hai09}. Second, classical proofs of parallel repetition typically involve constructing a prover (for the single-copy protocol) that uses many rounds of nested rejection sampling. The quantum analogue of such a procedure is quantum rewinding,
which can only be applied in special circumstances \cite{Wat09,ARU14} and seems difficult to apply to parallel repetition.

In this work, we establish a new parallel repetition theorem with alternative techniques, suited specifically for the Mahadev protocol. We show that, for NO instances, the accepting paths of the verifier for the two different challenges ($c=0$ and $c=1$) correspond to two nearly (computationally) orthogonal projectors. We also establish that this persists in $k$-fold parallel repetition, meaning that each pair of distinct challenge strings $\mathbf c, \mathbf c' \in \{0,1\}^k$ corresponds to nearly orthogonal projectors. From there, a straightforward argument shows that the prover cannot succeed for more than a non-negligible fraction of challenge strings. Our result shows that $k$-fold parallel repetition yields the same optimal soundness error $\delta^k$ as sequential repetition.

Taken together with the first transformation, the result is a three-round \QPIA (with offline setup) for verifying $\BQP$, with negligible completeness and soundness errors. We denote this protocol by $\mathfrak M^k$.

\begin{theorem}\label{thm:parallel-intro}
Under the \LWE assumption, the $k$-fold parallel repetition $\mathfrak M^k$ of the three-round Mahadev protocol $\mathfrak M$ is a three-round protocol (with offline setup) for verifying $\BQP$ with completeness $1 - \negl(n)$ and soundness error $2^{-k} + \negl(n)$.
\end{theorem}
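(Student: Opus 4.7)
\emph{Completeness} follows by a union bound: since $\mathfrak{M}$ has completeness $1 - \negl(n)$ (by \expref{Lemma}{lem:modified-mf}), the probability that all $k = \poly(n)$ independent copies accept is at least $1 - k\cdot\negl(n) = 1 - \negl(n)$.

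For \emph{soundness}, the plan is to reformulate the cryptographic ``leash'' behind Mahadev's soundness analysis as a statement about computational near-orthogonality of the accepting projectors, and then lift that property to the $k$-fold parallel setting. Fix a NO instance and a purified efficient prover strategy for $\mathfrak{M}$: after the first message, the prover's residual register holds a state $\rho$ on which each challenge $c\in\{0,1\}$ induces an acceptance projector $\Pi_c$. The first step is a single-copy lemma: under the \LWE assumption, a suitable overlap quantity (e.g.\ $\|\Pi_0\Pi_1\sqrt{\rho}\|$, or equivalently $\tr(\Pi_0\Pi_1\rho\Pi_1\Pi_0)$) is negligible for any such efficient prover; this essentially repackages the soundness argument of~\cite{Mah18} in operator form. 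The second step is to show that in $\mathfrak{M}^k$ the joint projectors $\{\Pi_{\mathbf c}\}_{\mathbf c\in\{0,1\}^k}$ are pairwise computationally near-orthogonal: if $\mathbf c\neq\mathbf c'$ differ in coordinate $i$, then any efficient prover witnessing non-orthogonality of $(\Pi_{\mathbf c},\Pi_{\mathbf c'})$ is compiled by a reduction into one witnessing non-orthogonality of $(\Pi_0,\Pi_1)$ for a single external copy embedded at coordinate $i$, with the reduction sampling the other $k-1$ setups internally using honestly generated keys. Finally, an operator-theoretic finish: writing $\Sigma = \sum_{\mathbf c}\Pi_{\mathbf c}$, pairwise near-orthogonality of the $2^k$ projectors yields $\|\Sigma\|\leq 1+\negl(n)$ via the bound $\|\Sigma\|^2 \leq \|\Sigma\| + \sum_{\mathbf c\neq \mathbf c'}\|\Pi_{\mathbf c}\Pi_{\mathbf c'}\|$, so the average acceptance probability $2^{-k}\sum_{\mathbf c}\tr(\Pi_{\mathbf c}\rho)$ is at most $2^{-k}+\negl(n)$, as claimed.

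The main obstacle is the parallel-lift step. Unlike in classical parallel repetition, where single-copy strategies can be extracted from multi-copy ones by conditioning or rejection sampling, here the cheating prover may produce a state arbitrarily entangled across all $k$ coordinates, and the reduction must internally simulate $k-1$ auxiliary copies \emph{while preserving the quantum correlations the adversary exploits}. This is feasible because the setup of $\mathfrak{M}$ is purely classical and hence known to the reduction for the internally generated coordinates, but quantitative care is needed: the single-copy orthogonality error from step one must be uniformly negligible over honestly-sampled auxiliary setups and tight enough to survive the $2^{2k}$-term combinatorial blow-up when summing over cross pairs in the final operator bound.
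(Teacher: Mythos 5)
Your plan matches the paper's proof essentially step for step: a single-copy ``adaptive orthogonality'' lemma extracted from Mahadev's analysis, a reduction that lifts it to any pair of distinct challenge strings $\mathbf c\neq\mathbf c'$ by embedding one external copy at a differing coordinate $i$ and internally simulating the other $k-1$ honestly-keyed setups, and a Cauchy--Schwarz-type bound showing that pairwise near-orthogonality of the $2^k$ projectors forces $\bra{\Psi_{pk}}\sum_{\mathbf c}\Pi_{\mathbf c}\ket{\Psi_{pk}}\leq 1+\negl(n)$, with the $2^{-k}$ averaging prefactor absorbing the $2^{2k}$ cross terms exactly as you anticipate. The one device you leave implicit is how the compiled single-copy prover becomes one to which Mahadev's perfect-prover characterization applies: the paper's reduction runs a repeat-until-success projection onto the image of the test-round projector $\Pi^{U_b}_{s,sk,b}$ before answering, which is also why the orthogonality statement must be phrased on the prover's state averaged over keys rather than in operator norm as in your $\|\Pi_{\mathbf c}\Pi_{\mathbf c'}\|$ bound.
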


\paragraph{Zero-knowledge.}

Zero-knowledge is a very useful cryptographic property of proof systems. Roughly, a protocol is zero-knowledge if the verifier ``learns nothing'' from the interaction with the honest prover, besides the fact that the relevant instance is indeed a ``yes'' instance. This notion  is formalized by requiring an efficient simulator whose output distribution is indistinguishable from the distribution of the outcomes of the protocol.

In our next result, we show how to modify the protocol $\mathfrak{M}^k$ of \expref{Theorem}{thm:parallel-intro} to achieve zero-knowledge against arbitrary classical verifiers. Our approach is similar to that of~\cite{CVZ19}, but uses a purely classical verifier. Instead of the prover providing the outcomes of the measurements to be checked by the verifier (as in $\mathfrak{M}^k$), a classical non-interactive zero-knowledge proof (NIZK) is provided. However, the \NP statement ``the measurements will pass verification'' depends on the inversion trapdoor of the verifier, which must remain secret from the prover. To overcome this obstacle, we use classical fully homomorphic encryption (FHE). In the setup phase, an encryption of the verifier's secret keys is provided to the prover, enabling the prover to later compute the NIZK homomorphically. To establish the zero-knowledge property, we require the FHE scheme to have circuit privacy, which means that the verifier cannot learn the evaluated circuit \emph{from the ciphertext} provided by the prover. To prove the zero-knowledge property, we also need the extra assumption that the setup phase is performed by a trusted third party, since we cannot rely on the verifier to perform it honestly anymore.

In classical zero-knowledge arguments, it is common to consider efficient provers who are provided an $\NP$-witness of the statement to prove. In the quantum setting, if we assume that the quantum polynomial-time prover has access to a quantum proof of a $\QMA$ statement,\footnote{$\QMA$ is a decision problem class which is a quantum analogue of $\NP$. In $\QMA$, an untrusted quantum proof is provided to a quantum poly-time verifier.} we achieve the following.

\begin{theorem}[informal]\label{thm:zk-intro}
Under the \LWE assumption, if circuit-private FHE exists, then there exists a three-round zero-knowledge argument for $\QMA$ (with trusted setup) with negligible completeness and soundness error. \end{theorem}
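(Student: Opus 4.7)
The plan is to modify the three-round protocol $\mathfrak{M}^k$ of Theorem~\ref{thm:parallel-intro} so that the prover, instead of sending its measurement outcomes in the clear, \emph{commits} to them in the first round and later proves, via a classical NIZK, that the committed outcomes would have been accepted by the $\mathfrak{M}^k$ verdict. Since the verdict depends on the verifier's LWE secret key $sk$ and on hidden bits $s$, a trusted setup hands out FHE encryptions $csk=\FHE.\Enc_{hpk}(sk)$ and $cs=\FHE.\Enc_{hpk}(s)$ so that the prover can evaluate the NIZK-proof-generation circuit homomorphically and send a single ciphertext $ce$; the verifier decrypts and feeds the result to $\NIZK.\aV$. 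Concretely, in Round $\prover_1$ the prover prepares the $\mathfrak{M}^k$ state from its $\QMA$ witness and sends the $y$-register together with commitments $\xi=\commit(\beta,\gamma,r_1)$ and $\chi=\commit(u,r_2)$, where $\beta,\gamma,u$ are one-time-pad keys randomizing its later outcomes; in Round $\verifier_1$ the verifier returns a uniform challenge $\mathbf{c}\in\{0,1\}^k$; in Round $\prover_2$ the prover opens $\beta,\gamma,r_1$ and sends $ce$. The verifier accepts iff the opening is consistent and $ce$ decrypts to an accepting NIZK proof for the statement $x$ asserting ``the outcomes committed in $\chi$ pass the $\mathfrak{M}^k$ verdict under $sk,s,\beta,\gamma,\mathbf{c}$.'' Completeness is immediate from Theorem~\ref{thm:parallel-intro}, NIZK completeness, and FHE correctness.

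Soundness I would prove by a hybrid argument moving from an ``interactive-attempt'' protocol through the five hybrids $H_0,\dots,H_4$ defined in the preamble to the actual zero-knowledge protocol. The steps $H_0\to H_1$ and $H_1\to H_2$ are justified by computational hiding of the commitment scheme against \QPT distinguishers: adding further commitments to uniformly random strings cannot help a cheating prover. The step $H_2\to H_3$ is justified by IND-CPA security of the quantum-secure FHE scheme, since handing out $csk,cs$ is indistinguishable from handing out encryptions of zero. Finally, $H_3\to H_4$ uses adaptive soundness of the NIZK in the CRS model combined with FHE correctness: any \QPT prover winning $H_4$ must, except with negligible probability, produce a valid NIZK proof, and if the underlying statement were false this would break NIZK soundness. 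Chaining the hybrids backwards, a successful \QPT cheater in the zero-knowledge protocol $H_4$ yields a cheater against $\mathfrak{M}^k$, whose success is bounded by $2^{-k}+\negl(n)$, which is negligible for $k=\omega(\log n)$.

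Zero-knowledge is obtained by constructing a simulator $\simulator$ that takes advantage of the trusted setup. $\simulator$ generates all keys itself, including the NIZK $\crs$ together with its trapdoor, sends commitments to zero in place of $\xi,\chi$, and after receiving the classical verifier's challenge $\mathbf{c}$ invokes the NIZK simulator to obtain a proof $\pi$ for $x$ (which $\simulator$ can evaluate in the clear, having produced the verifier's secrets) and sends $ce=\FHE.\Enc_{hpk}(\pi)$. Indistinguishability from a real transcript reduces to three ingredients: computational hiding of the commitments; zero-knowledge of the NIZK; and, crucially, \emph{circuit privacy} of the FHE scheme, which is needed because in the real protocol $ce$ is the output of a homomorphic evaluation of a witness-dependent circuit, while in the simulation it is a fresh encryption of the same plaintext. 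Circuit privacy guarantees that these two distributions are statistically close conditional on the decrypted message.

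The main obstacle is the $H_3\to H_4$ soundness step: one must extract, from a \QPT prover whose output $ce$ may be entangled with residual quantum state, a classical cheating strategy for $\mathfrak{M}^k$, without any rewinding. The reduction hinges on observing that, once the commitments and the challenge are fixed, the NIZK statement $x$ is a fixed classical statement about the committed outcomes, so binding of the commitment together with NIZK adaptive soundness (against quantum provers in the CRS model) forces the committed outcomes to coincide with a genuine $\mathfrak{M}^k$ response. Arguing this cleanly in one shot, rather than via quantum rewinding, is the delicate point, and it is where the robustness of the parallel-repetition theorem of Theorem~\ref{thm:parallel-intro} against single-shot \QPT provers is crucial.
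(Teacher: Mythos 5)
Your overall architecture matches the paper's: one-time-pad the witness, commit to the pad keys and the response, and replace the plaintext response with a NIZK proof computed homomorphically under FHE encryptions of $sk$ and $s$ supplied by a trusted setup, with soundness via the hybrids $H_0,\dots,H_4$ and zero-knowledge via commitment hiding, NIZK zero-knowledge, and FHE circuit privacy. Two of your soundness justifications are slightly off but repairable: the step $H_1\to H_2$ needs only monotonicity (adding a commitment-consistency check can only lower the acceptance probability), not hiding, and the step $H_0\to H_1$ needs the observation that conjugating $H$ by $X^\beta Z^\gamma$ preserves its spectrum together with perfect binding and the public-coin structure, so that revealing $(\beta,\gamma)$ after the challenge cannot help a cheating prover.

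There are, however, two genuine gaps. First, your protocol has the prover send $\chi=\commit(u,r_2)$ in Round $\prover_1$, but $u$ is the tuple of measurement outcomes whose basis (computational vs.\ Hadamard) is determined by the challenge $\mathbf c$; the commitment to $u$ cannot precede the challenge and must go in Round $\prover_2$, as in the paper. Second, and more importantly, your simulator never explains how to produce the prover's first message $y$, which in the real protocol is the image of the one-time-padded witness state under the NTCF/NTIF sampling procedure and is therefore witness-dependent. This is the heart of the zero-knowledge proof: the paper's simulator samples uniform preimages $(b,x)$ and queries $f_{pk}$ to generate $y$, and the indistinguishability argument shows that, because $\xi$ computationally hides $(\beta,\gamma)$, the committed state is effectively maximally mixed, so the real $y$ is statistically close to one obtained from uniform preimages. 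Relatedly, the simulator must also forge (using $sk$) a response $u$ whose commitment $\chi$ is consistent with the transcript; "commitments to zero" for $\chi$ works by hiding, but the $y$-simulation cannot be waved away and requires the quantum-one-time-pad argument you have not supplied.
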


\paragraph{Fiat-Shamir transformation.}

Note that in the protocols discussed above (both $\mathfrak{M}^k$ and its ZK-variant), the second message of the verifier to the prover is a uniformly random $\mathbf c \in \{0,1\}^k$. In the final transformation, we eliminate this ``challenge'' round. This is done via the well-known Fiat-Shamir transform~\cite{FS86}: we ask the prover to generate the challenge bits $\mathbf c \in \{0,1\}^k$ themselves by evaluating a public hash function $\H$ on the transcript of the protocol thus far. In our case, recalling \eq{mah-sketch}, this means that the prover selects $\mathbf c := \H(H, pk, y)$. Note that $pk$ and $y$ are now both $k$-tuples, since we are transforming $k$-fold parallel-repeated protocols. Of course, the verifier also needs to adapt their actions at the verdict stage, using $\mathbf c = \H(H, pk, y)$ when deciding whether to accept or reject. The resulting protocols now only have a setup phase and a single message from the prover to the verifier.

A standard approach with Fiat-Shamir (FS) is to establish security in the Random Oracle Model, in the sense that FS preserves soundness up to a loss that is negligible provided $\H$ has a superpolynomially-large range \cite{BR95,PS00}. It is straightforward to see that this last condition is required; it is also the reason that we applied parallel repetition prior to FS. A well-known complication in the quantum setting is that quantum computers can evaluate any public classical function $\H$ in superposition via the unitary operator $U_{\H} \colon \ket{x}\ket{y} \mapsto \ket{x}\ket{y \oplus \H(x)}.$ This means that we must work in the Quantum Random Oracle Model (QROM) \cite{BDFLSZ11}, which grants all parties oracle access to $U_{\H}$. Proving the security of transformations like FS in the QROM is the subject of recent research, and newly developed techniques have largely shown that FS in the QROM preserves soundness for so-called $\Sigma$-protocols~\cite{DFMS19,LZ19}. Extending those results to our protocols is relatively straightforward. Applying FS to the three-round verification protocol from \expref{Theorem}{thm:parallel-intro} then yields the following.

\begin{theorem}\label{thm:noninteractive-intro}
Let $k=\omega(\log n)$, and let $\FS(\mathfrak{M}^k)$ denote the protocol resulting from applying Fiat-Shamir to the $k$-fold parallel repetition of the three-round Mahadev protocol. Under the \LWE assumption, in the QROM, $\FS(\mathfrak{M}^k)$ is a non-interactive protocol (with offline setup) for verifying $\BQP$ with negligible completeness and soundness errors. \end{theorem}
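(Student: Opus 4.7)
The plan is to establish completeness directly from the honest semantics of Fiat-Shamir, and to derive soundness by combining \expref{Theorem}{thm:parallel-intro} with a Fiat-Shamir-in-the-QROM theorem of the type proved in~\cite{DFMS19,LZ19}.

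For completeness, I observe that the honest prover of $\FS(\mathfrak{M}^k)$ runs the honest prover of $\mathfrak{M}^k$ step-by-step, except that it replaces the verifier's random challenge $\mathbf{c}$ by $\mathbf{c} = \mathcal{H}(H, pk, y)$ for the random oracle $\mathcal{H}$. Since $\mathcal{H}(H,pk,y)$ is a uniformly random element of $\{0,1\}^k$, independent of $y$ from the prover's point of view, the joint distribution over $(pk, y, \mathbf{c}, \text{response})$ is identical to the honest execution of $\mathfrak{M}^k$. Hence the acceptance probability of the verifier coincides with that of $\mathfrak{M}^k$, which is $1-\negl(n)$ by \expref{Theorem}{thm:parallel-intro}.

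For soundness, I would first cast $\mathfrak{M}^k$ as a generalized $\Sigma$-protocol: the offline setup samples $(pk,sk)$ and hands $pk$ to \prover and $sk$ to \verifier; on input $H$, \prover sends $y$, \verifier replies with a uniformly random $\mathbf{c}\in\{0,1\}^k$, and \prover sends the measurement outcomes, with acceptance determined as a function of $(H,pk,sk,y,\mathbf c,\text{response})$. The measure-and-reprogram lemma of~\cite{DFMS19} (and the alternative compressed-oracle approach of~\cite{LZ19}) implies that for any three-message protocol of this shape with soundness error $\varepsilon$ against \QPT provers, the Fiat-Shamir-compiled protocol has soundness at most $(q+1)^2\cdot \varepsilon$ in the QROM against any \QPT prover making $q$ oracle queries. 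Substituting the bound $\varepsilon = 2^{-k}+\negl(n)$ of \expref{Theorem}{thm:parallel-intro} and $q=\poly(n)$ yields soundness $\poly(n)\cdot 2^{-k} + \negl(n)$, which is $\negl(n)$ because $k=\omega(\log n)$.

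The main obstacle is verifying that the Fiat-Shamir-in-QROM theorems genuinely apply to $\mathfrak{M}^k$, which is private-coin and relies on an offline setup rather than being a textbook public-coin $\Sigma$-protocol. Two points need care. First, $pk$ is produced in the setup phase rather than by the prover; in the soundness experiment we can regard the challenger as sampling $(pk,sk)$ before any interaction with the cheating prover, absorbing $pk$ into the first prover message and $sk$ into the verification predicate. Second, the verification algorithm depends on the secret trapdoor $sk$, but this does not interfere with the measure-and-reprogram argument, since that argument only requires that the challenge $\mathbf{c}$ be uniformly random and public and that the verdict be a deterministic function of the transcript plus auxiliary data fixed before the challenge is issued. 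Once these points are formalized, the reduction is syntactic and the claimed bound follows.
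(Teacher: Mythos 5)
Your proposal is correct and takes essentially the same route as the paper: \sec{FS} casts $\mathfrak{M}^k$ as a ``generalized $\Sigma$-protocol'' (with the offline setup playing the role of the verifier's secret-randomness-dependent first message), conditions on that setup randomness to apply the measure-and-reprogram theorem of~\cite{DFMS19} copy-by-copy of the randomness, and combines the resulting $O(q^2)$ multiplicative loss with the $2^{-k}+\negl(n)$ bound of \thm{parallel-intro}. The only nit is that the measure-and-reprogram reduction also incurs an additive loss of order $q/|\C|$ which your stated bound $(q+1)^2\varepsilon$ omits; this is harmless here because the challenge space has size $2^k$ with $k=\omega(\log n)$, so that term is negligible as well.
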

If we instead apply the Fiat-Shamir transform to the zero-knowledge protocol from \expref{Theorem}{thm:zk-intro}, we achieve the following.
\begin{theorem}[informal]\label{thm:nizk-intro}
Under the \LWE assumption, in the QROM, there exists a classical non-interactive zero-knowledge argument (with trusted offline setup) for $\QMA$, with negligible completeness and soundness errors.
\end{theorem}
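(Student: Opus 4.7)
The plan is to apply the Fiat-Shamir transform to the three-round zero-knowledge argument $\mathfrak{Z}^k$ of \expref{Theorem}{thm:zk-intro}, following exactly the template used to pass from \expref{Theorem}{thm:parallel-intro} to \expref{Theorem}{thm:noninteractive-intro}. The prover computes the $k$-fold challenge itself as $\mathbf{c} := \H(\st)$, where $\st$ is its first-round message (the $k$-tuple of measurement outcomes together with the commitments $\xi$, $\chi$ and the FHE ciphertexts $ce$ from the ZK construction), and sends $\st$ and the corresponding response in a single non-interactive message; the verifier recomputes $\mathbf{c}$ via $\H$ and runs the original verdict procedure. The trusted setup phase of $\mathfrak{Z}^k$ is inherited unchanged.

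Completeness and soundness then follow with only cosmetic modifications to the argument for \expref{Theorem}{thm:noninteractive-intro}. By construction $\mathfrak{Z}^k$ has the same $\Sigma$-protocol-like shape as $\mathfrak{M}^k$ (prover message, uniform public-coin challenge in $\bit^k$, prover response) and, as guaranteed by \expref{Theorem}{thm:zk-intro}, negligible interactive soundness $2^{-k}+\negl(n)$; taking $k=\omega(\log n)$ and invoking the Fiat-Shamir-in-QROM soundness results of~\cite{DFMS19,LZ19} yields negligible non-interactive soundness in the QROM, and completeness is immediate. For zero-knowledge, let $\Sim$ be the interactive simulator supplied by \expref{Theorem}{thm:zk-intro}. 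I would build a non-interactive simulator $\Sim'$ that, on an instance $x$ and a \QPT{} classical verifier $\verifier^*$ with quantum oracle access to $\H$, runs $\Sim$ to obtain $(\setup,\st,\mathbf{c},\text{resp})$, adaptively reprograms $\H$ to send $\st \mapsto \mathbf{c}$, and hands $(\setup,\st,\text{resp})$ together with the reprogrammed oracle to $\verifier^*$. Indistinguishability from a real execution is then established by a two-step hybrid: one hybrid swaps the real interactive transcript for $\Sim$'s output (negligible advantage by zero-knowledge of $\Sim$), and a second hybrid performs the reprogramming (negligible advantage by QROM reprogramming).

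The main obstacle is controlling this last reprogramming step in the QROM. Unlike in the classical ROM, $\verifier^*$ may make polynomially many superposition queries whose amplitudes touch the point $\st$ before $\st$ is even produced, so one cannot argue trivially that the reprogramming goes unnoticed. The standard remedy is to show that $\st$ has super-logarithmic min-entropy conditioned on $\verifier^*$'s view, and then invoke an adaptive reprogramming lemma of the type in~\cite{DFMS19} to bound the distinguishing advantage by $\sqrt{q\cdot 2^{-\omega(\log n)}}=\negl(n)$, where $q=\poly(n)$ is the verifier's query budget. The requisite min-entropy is present here because $\st$ absorbs fresh randomness from the one-time-pad keys $\beta,\gamma$, the commitment openings $r_1,r_2$, and the FHE encryption randomness introduced in hybrids $H_1$--$H_4$ of the interactive ZK protocol. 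Combining completeness, soundness, and zero-knowledge then yields a classical \NIZK{} argument for $\QMA$ (with trusted offline setup) with negligible completeness and soundness errors, as claimed.
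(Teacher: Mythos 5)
Your overall strategy matches the paper's: apply Fiat--Shamir to the three-round zero-knowledge protocol (\prot{interactive}), and get completeness and soundness from the generalized-$\Sigma$-protocol Fiat--Shamir theorem (\thm{g-sigma}) with $k=\omega(\log n)$. Two points deserve comment.

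First, a concrete flaw in your protocol description: you hash ``the $k$-tuple of measurement outcomes together with the commitments $\xi$, $\chi$ and the FHE ciphertexts $ce$'' to produce $\mathbf c$. But $\chi=\commit(u;r_2)$ and $ce$ are part of the \emph{response}: $u$ is obtained by measuring in bases determined by $\mathbf c$, and $ce$ encrypts a NIZK proof for the instance $x=(H,s,sk,\xi,y,\mathbf c,\chi)$, which contains $\mathbf c$ explicitly. Feeding $\chi$ and $ce$ into the hash that defines $\mathbf c$ is circular and leaves the protocol ill-defined. The hash input must be only the setup message and the commitment $y$ (as in \prot{fs-generalized-sigma}); $\chi$ and $ce$ belong to the single prover message but not to the hash preimage, and the NIZK instance is then redefined using $\mathbf c=\H(x,w,y)$.

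Second, your zero-knowledge argument takes a genuinely different route from the paper's. You run the interactive simulator to get a transcript with challenge $\mathbf c$ and then adaptively reprogram $\H$ so that $\H(\st)=\mathbf c$, which forces you to invoke a QROM adaptive-reprogramming lemma and to establish super-logarithmic min-entropy of the first message. The paper avoids reprogramming entirely: because the setup is trusted, the simulator of \sec{zk-property} holds the NTCF/NTIF trapdoors and FHE secret key and can produce an accepting-looking response for \emph{any} challenge handed to it after the commitment. It is therefore a straight-line simulator, and the non-interactive simulator simply queries the honest random oracle for $\mathbf c=\H(\cdot)$ and proceeds; indistinguishability reduces directly to the interactive ZK proof with no oracle-programming step. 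Your reprogramming route can be made to work (the images $y$ of the NTCF/NTIF sampling do carry enough fresh entropy), but it adds a lemma and a min-entropy claim that the challenge-universality of the simulator renders unnecessary.
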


\paragraph{Related results.}

Broadbent, Ji, Song, and Watrous \cite{BJSW16} presented the first quantum zero-knowledge proofs for $\QMA$ with efficient provers. Vidick and Zhang~\cite{VZ19} combined this protocol with the Mahadev protocol~\cite{Mah18} to make the communication classical. Broadbent and Grilo~\cite{BG19} showed a ``quantum $\Sigma$'' zero-knowledge proof for $\QMA$ (with a quantum verifier). In the non-interactive setting, Coladangelo, Vidick, and Zhang~\cite{CVZ19} constructed a non-interactive zero-knowledge argument with quantum setup and Broadbent and Grilo~\cite{BG19} showed a quantum statistical zero-knowledge proof in the secret parameter model.

We remark that Radian and Sattath~\cite{RS19} recently established what they call ``a parallel repetition theorem for NTCFs,'' which are the functions $f_{pk}$ in the Mahadev protocol. However, the context of~\cite{RS19} is very different from that of our \expref{Theorem}{thm:parallel-intro}. They work with 1-of-2 puzzles, not \BQP verification; in particular, their soundness experiment is quite different. Moreover, their parallel repetition theorem follows from a purely classical result.

After an initial version of our work was made public, showing how the Mahadev protocol can be reduced to four rounds using parallel repetition and the Fiat-Shamir transform, Chia, Chung, and Yamakawa posted a preprint~\cite{CCY19} describing the same result, with an alternative proof of parallel repetition. They also showed how to make the verifier run in time polylogarithmic in the instance size using indistinguishability obfuscation. Our work was performed independently, and we subsequently improved our result to make the protocol non-interactive with setup and zero-knowledge.

\paragraph{Open problems.}

This work raises several natural open questions. First, is it possible to prove the soundness of our protocol when the oracle $\H$ is instantiated with a concrete (e.g., correlation-intractable~\cite{PS19}) hash function? Our current analysis only applies in an idealized model.

It is also natural to study parallel repetition for general QPIAs. Natural examples include the protocols of~\cite{GV19,VZ19,BCMVV18}. It is known that parallel repetition does not reduce the soundness error even for all classical private-coin protocols~\cite{BIN97}. However, if the classical protocol is slightly modified to to include ``random terminations'' with low probability, then parallel repetition is in fact possible~\cite{Hai09,HPWP10,BHT19}. It is an open question whether similar mild relaxations can enable a parallel repetition theorem for any QPIA.

Finally, we remark that a classical NIZK protocol (in the Random Oracle Model with setup) could also be achieved using the techniques of locally simulatable codes/proofs~\cite{GriloSY19,BG19}. We leave as an open problem understanding whether such a protocol could give us useful properties that are not achieved with our current approach.

\paragraph{Organization.}

The remainder of the paper is organized as follows.
In \sec{preliminaries}, we introduce QPIAs, the local Hamiltonian problem as it relates to $\BQP$ verification, the Mahadev protocol, and non-interactive zero knowledge.
In \sec{indep}, we explain how to make the initial step of the protocol instance-independent.
In \sec{parallel}, we show that parallel repetition reduces the soundness error of the Mahadev protocol at the optimal rate.
In \sec{zk}, we describe how to make the protocol zero-knowledge.
Finally, in \sec{FS}, we show that under the Fiat-Shamir transformation, a generalization of $\Sigma$-protocols (which includes the protocols of interest to us) remains secure in the QROM, and we use this to establish non-interactive protocols (with offline setup) for verifying $\BQP$ and for zero-knowledge verification of $\QMA$.

\section{Preliminaries}\label{sec:preliminaries}

\subsection{Notation and conventions}

Most algorithms we consider are efficient, meaning that they run in time polynomial in both the input size (typically $n$) and the security parameter (typically $\lambda$). We assume that $n$ and $\lambda$ are polynomially-related. The two main classes of algorithms of interest are \PPT (probabilistic poly-time) and \QPT (quantum poly-time). We say that  $f = \negl(n)$ if $f = o(n^{-c})$ for every constant $c$. We denote by $U_f$ the efficient map that coherently implements a classical function $f\colon \bit^n \to \bit^m$, i.e., $U_f\ket{x}\ket{y} = \ket{x}\ket{y \oplus f(x)}$, when there exists an efficient deterministic circuit that computes $f$. We make use of commitment schemes and fully-homomorphic encryption; for completeness we define them in \app{standard-primitives}.

\subsection{Quantum-prover interactive arguments}

A quantum-prover interactive argument (\QPIA) is an interactive protocol between two polynomially-bounded parties, a quantum prover and a classical verifier, interacting over a classical channel. A \QPIA is described by a pair of algorithms: the \PPT algorithm of the honest verifier \verifier, and the \QPT algorithm of the honest prover \prover.

\begin{definition}\label{dfn:qpia}
Fix a language $L \subseteq \{0,1\}^*$ and a \QPIA $(\prover, \verifier)$.
  We say that $(\prover, \verifier)$ is a \QPIA for $L$ with \emph{completeness} $c$  and \emph{soundness error} $s$ if
\begin{itemize}[nosep]
\item for all $x \in L$, $\Pr[\text{\prover and \verifier accept $x$}] \geq c$.
\item for all $x \notin L$ and for all \QPT algorithms $\prover'$, $\Pr[\text{$\prover'$ and \verifier accept $x$}] \leq s$.
\end{itemize}
  The \emph{completeness error} is $1-c$ and the \emph{soundness} is $1-s$.
\end{definition}

As discussed above, the soundness of a \QPIA can be amplified via sequential repetition, and in some cases also with parallel repetition.

\subsection{The local Hamiltonian problem and verification for $\BQP$}\label{sec:local-H}

Any promise problem $L=(L_{\yes},L_{\no})\in\QMA$ can be reduced to the local Hamiltonian problem such that for $x\in L_{\yes}$, the Hamiltonian $H_x$ has a low-energy ground state $\ket{\psi_x}$, and for $x\in L_{\no}$, all quantum states have large energy \cite{KSV02}.
While the quantum witness $\ket{\psi_x}$ may be hard to prepare for general $L \in \QMA$, it can be prepared efficiently if $L\in\BQP$. Furthermore, the
problem remains QMA-complete even with a Hamiltonian that can be measured
by performing standard ($Z$) and Hadamard ($X$) basis measurements \cite{BL08,CM16}.

\begin{problem}[The 2-local ZX-Hamiltonian problem \cite{BL08,CM16,MF16}]\label{prob:ZX}
  The 2-local ZX-Hamiltonian promise problem $\zx=(\zx_{\yes},\zx_{\no})$, with parameters $a,b\in \mathbb{R}$, $b>a$ and gap $(b-a)>\poly(n)^{-1}$, is defined as follows. An instance is a local Hamiltonian
\begin{align*}
  H = \sum_{i<j} J_{ij} (X_iX_j + Z_iZ_j)
\end{align*}
where each $J_{ij}$ is a real number
  such that $2\sum_{i<j} |J_{ij}| = 1$
and each $X_i$ (resp.\ $Z_i$) is a Pauli $X$ (resp.\ Pauli $Z$) gate acting on the $i$th qubit.
For $H\in \zx_{\yes}$, the smallest eigenvalue of $H$ is at most $a$, while if $H\in \zx_{\no}$, the smallest eigenvalue of $H$ is at least $b$.
\end{problem}

Note that given the normalization factors, we can see that each term ($X_iX_j$ or $Z_iZ_j$) is associated with the probability $J_{ij}$.
We denote the 2-local ZX-Hamiltonian problem with parameters $a,b\in\mathbb{R}$ by $\zx_{a,b}$.

When working with Hamiltonian terms $S$, we overload the notation for convenience. First, we write $S_j$ to denote the Pauli operator assigned by $S$ to qubit $j$, so that $S = \bigotimes_j S_j$. Second, we write $i \in S$ to indicate that $i$ is a qubit index for which $S$ does not act as the identity, i.e., $S_i \neq \1$.

Morimae and Fitzsimons present a protocol (the ``MF protocol'')
with a quantum prover and a limited verifier who only needs the ability to perform single-qubit $X$ and $Z$ basis measurements \cite{MF16}. The prover $\prover$ prepares the ground state of the Hamiltonian and sends it to $\verifier$, who then samples a term $S$ with probability $p_S$ and performs the corresponding measurement. Notice that to estimate the energy of term $S$, only $Z$ or $X$ basis measurements are necessary. In the original protocol of~\cite{MF16}, the qubits are sent individually; in the variant below, we simply have the prover send the entire state all at once.

\begin{protocol}[A variant of the MF protocol]\label{prot:MF}~
\begin{proto}
\item[Setup.] $\prover$ and $\verifier$ receive an instance of \prob{ZX}, namely a Hamiltonian $H_x=\sum_S p_S\frac{\1+m_S S}{2}$, where each $S$ is a tensor product of $X,Z,\1$ and $m_S \in \{\pm 1\}$.
\item[Round~1.] $\prover$ prepares a quantum state $\rho$ and sends it to $\verifier$.
\item[Verdict.] $\verifier$ samples a term $S$ with probability $p_S$ and performs the measurement $\{M_1=\frac{\1+S}{2},M_{-1}=\frac{\1-S}{2}\}$ on $\rho$, getting an outcome $e$.
  $\verifier$ accepts if $e=-m_S$.
\end{proto}
\end{protocol}

Since $M_{m_S}=\frac{\1+m_S S}{2}=\1-M_{-m_S}$ and $H_x=\sum_S p_S M_{m_S}$, the success probability of the protocol with input state $\rho$ is
\begin{align*}
  \sum_S p_S \tr(M_{-m_S} \rho)
  &= 1 - \sum_s p_S \tr(M_{m_s} \rho)
  = 1-\tr(H_x\rho).
\end{align*}
Since $b-a>\poly(|x|)^{-1}$, the error in the MF protocol can be made negligible by parallel repetition: $\verifier$ receives $T$ copies of the ground state of $H$ and performs an independent test on each copy. By accepting if at least $(2-a-b)T/4$ copies accept, both the completeness and soundness errors are suppressed to negligible with polynomial $T(|x|)$ (cf.~\cite[Theorem~8.4]{Mah18}). For a detailed proof of $\QMA$ gap amplification, see \cite[Section~3]{VW16}.

In the following discussion, the term $S$ is encoded by an $n$-bit string $h(S)$:
for each qubit $i\in S$, set $h_i=0$ for a $Z$ basis measurement and $h_i=1$ for an $X$ basis measurement.
For other qubits, the choice is irrelevant but we set $h_i=0$ for concreteness. We let
$\alpha_{h,\rho} := \tr(M_{-m_S}\rho)$
denote the success probability with $\rho$ when $h=h(S)$ is sampled in \prot{MF}.

\subsection{The Mahadev protocol for $\BQP$ verification}\label{sec:measurement-protocol}

\paragraph{Required primitives.}

The protocol relies crucially on two special classes of functions: Noisy Trapdoor Claw-free Functions (NTCFs) $\F$ and Noisy Trapdoor Injective Functions (NTIFs) $\G$. Both classes of functions are constructed based on the presumed hardness of the Learning with Errors (LWE) problem \cite{BCMVV18,Mah18}. We now sketch the properties of these function families. For complete details, and for the LWE construction, see~\cite{BCMVV18}. Let $\lambda$ be a security parameter and let $q\geq 2$ be prime. Choose parameters $\ell=\poly(\lambda)$, $n=\Omega(\ell\log q)$, and $m=\Omega(n\log q)$.

The NTCF family $\F=\{f_{pk}\}_{pk\in\K_\F}$ is a family of keyed functions
\begin{align*}
f_{pk}\colon \{0,1\}\times\X\to D_\Y
\end{align*}
which, on input a public key $pk\in\K_\F:=\ZZ_q^{m\times n}\times\ZZ_q^m$, a bit $b$, and $x\in\X:=\ZZ_q^n$, outputs a distribution $f_{pk}(b,x)$ over $\Y:=\ZZ_q^m$.
Each function $f_{pk}\in\F$ satisfies the \emph{injective pair} property: there exists a perfect matching $\mathcal R_{pk} \subset\X\times\X$ such that $f_{pk}(0, x_0) = f_{pk}(1, x_1)$ if and only if $(x_0, x_1) \in \mathcal R_{pk}$.

The NTCF family is equipped with the following polynomial-time algorithms:
\begin{enumerate}
  \item $\Gen_\F$, on input $1^\lambda$, outputs a secret-public key pair $(pk,sk)$.
  \item $\Chk_\F$ is a deterministic algorithm for checking if $(b,x)$ and $y$ form a preimage-image pair of $f_{pk}$. On input $b,x,y$, $\Chk_\F$ outputs 1 iff $y\in\supp(f_{pk}(b,x))$.
  \item $\Inv_\F$ is a deterministic algorithm for inverting the function $f_{pk}$. On input secret key $sk$, bit $b$, and image $y$, $\Inv_\F$ returns the preimage $x_{b,y}$ such that $y\in\supp(f_{pk}(b,x_{b,y}))$, or outputs $\mathsf{reject}$ if no such preimage exists.
  \item $\Samp_\F$ is an efficient quantum process which,
    on input $pk$ and $b \in \{0,1\}$, returns a quantum state negligibly close to
\begin{align}\label{eq:samp}
  \frac{1}{|\X|^{1/2}}\sum_{x\in\X}\ket{b}\ket{x}\ket{\psi_{f_{pk}(b,x)}},
\end{align}
where $\ket{\psi_{p}}:=\sum_{y\in\Y} \sqrt{p(y)}\ket{y}$ for distribution $p$.
By the injective pair property,
we have $\langle \psi_{f_{pk}(b,x)}|\psi_{f_{pk}(b',x')}\rangle =1$ if $(b,x)=(b',x')$, or there exists $(x_0,x_1)\in \R_{pk}$ such that $(b,x,b',x')=(0,x_0,1,x_1)$ or $(1,x_1,0,x_0)$.
This implies that the states in $\{\ket{\psi_{f_{pk}(b,x)}}\}$ can be perfectly distinguished by performing a standard basis measurement.
Thus, intuitively, we may consider an \emph{ideal} version of these functions, i.e., the distribution $p$ is concentrated at a single point.
\end{enumerate}

Similarly, the NTIF family $\G=\{g_{pk}\}_{pk\in\K_\G}$ is a family of keyed functions
\begin{align*}
g_{pk}\colon \{0,1\}\times\X\to D_{\Y}
\end{align*}
which, on input a public key $pk\in\K_\G$, a bit $b$, and $x\in\X$, outputs a distribution $g_{pk}(b,x)$ over $\Y$. Instead of the injective pair property of NTCFs, NTIFs satisfy an \emph{injectivity} property: for all $(x, b) \neq (x', b')$, $\supp g_{pk}(b, x) \cap \supp g_{pk}(b', x') = \emptyset.$ An NTIF family is also equipped with a tuple of four polynomial-time algorithms $(\Gen_\G,\Chk_\G,\Inv_\G,\Samp_\G)$, defined exactly as in the NTCF case (but with $g$ in place of $f$, and $\G$ instead of $\F$.)

We remark that the states \eq{samp} prepared by $\Samp_\F$ and $\Samp_\G$ should be contrasted with the ``idealized'' state described in \eq{mah-sketch} in our sketch of the protocol.

\paragraph{The protocol.}

The Mahadev protocol \cite{Mah18} for $\BQP$ verification allows $\verifier$ to request an $X$ or $Z$ basis measurement outcome without revealing the basis to $\prover$.
The aim of the protocol is to verify that the prover's response, when appropriately decoded, is close to the measurement outcomes of some $n$-qubit quantum state $\rho$. Crucially, this guarantee holds simultaneously for all basis choices $h\in\{0,1\}^n$, where $0$ denotes a $Z$ basis measurement and $1$ denotes an $X$ basis measurement. With this guarantee, the verifier can then apply the verification procedure of the MF protocol to the decoded responses of the prover, knowing that this will correctly decide whether the instance should be accepted or rejected.

In the following protocol, for each qubit, if $\verifier$ requests a $Z$ basis measurement, then an NTIF key is sent;
if $\verifier$ requests an $X$ basis measurement, then an NTCF key is sent. Since $\Chk_\F$ and $\Chk_\G$ are identical, we denote them by $\Chk$.
Similarly, $\Samp_\F$ and $\Samp_\G$ are identical, so we denote them by $\Samp$.
We let $\Gen(1^\lambda,h)$ for $h\in\bit^*$ denote the following key generation algorithm:
for every bit $i$ of $h$, run $(pk_i,sk_i)\gets\Gen_\G(1^\lambda)$ if $h_i=0$ and $(pk_i,sk_i)\gets\Gen_\F(1^\lambda)$ if $h_i=1$.
Set $pk=(pk_i)_i$ and $sk=(sk_i)_i$ and output the key pairs $(pk,sk)$.

We now the describe the protocol when $\verifier$ and $\prover$  are honest.

\begin{protocol}[Mahadev protocol]\label{prot:Mahadev}~
\begin{proto}
\item[Setup.]
  Choose a security parameter $\lambda \geq n$. Both $\prover$ and $\verifier$ receive an instance of \prob{ZX}, namely $H=\sum_S p_S\frac{\1+m_S S}{2}$.
\item[Round~$\verifier_1$.]
  $\verifier$ samples $r$ terms $S=(S_1,\ldots,S_r)$ and computes $h=h(S)$.
  $\verifier$ generates the key pair $(pk,sk)\gets\Gen(1^\lambda,h)$ and sends $pk$ to $\prover$.
\item[Round~$\prover_1$.]
$\prover$ prepares $r$ copies of the $n$-qubit ground state $\ket{\phi}^{\otimes r}=\sum_{b\in\bit^n}\phi_b\ket{b}_W$ of $H$ in register $W$.
For $j\in[r],\ell\in[n]$ and each qubit $W_{j\ell}$ in $W$, $\prover$ performs $\Samp$ on input the key $pk_i$ coherently and yields a quantum state negligibly close to
\begin{align*}
\frac{1}{|\X|^{n/2}}\sum_{x\in\X^n} \sum_{b \in \{0,1\}^{nr}} \phi_{b}\ket{b}_W\ket{x}_X\ket{\psi_{f_{pk}(b,x)}}_Y\,.
\end{align*}
Next, $\prover$ measures $Y$ and sends the outcome $y$ to $\verifier$.

\item[Round $\verifier_2$.] $\verifier$ responds with a uniformly random ``challenge'' bit $c \in \bit$.  We call $c=0$ ``test round'' and $c=1$ ``Hadamard round,'' and set labels $\test = 0$ and $\hada = 1$.
\item[Round $\prover_2$.] If $c=\test$, $\prover$ measures $WX$ in the computational basis.
If $c=\hada$, $\prover$ measures $WX$ in the Hadamard basis.
  In either case, $\prover$ sends the measurement outcome $(w, t)$ to $\verifier$;

\item[Verdict.]
      In a test round, $\verifier$ accepts if $\bigwedge_{j\in[r],\ell\in[n]}\Chk(pk_{j\ell},w_{j\ell},t_{j\ell},y_{j\ell})=1$.

      In a Hadamard round, $\verifier$ performs the following: for each copy $j$ and qubit $\ell\in[n]$,
      \begin{enumerate}
        \item \label{h1-check} if $h_{j\ell}=0$, run $(e_{j\ell},x_{b_{j\ell},y_{j\ell}})\gets\Inv_\G(sk_{j\ell},y_{j\ell})$.
          If $h_{j\ell}=1$, run $x_{0,y_{j\ell}}\gets\Inv_\F(sk_{j\ell},0,y_{j\ell})$ and $x_{1,y_{j\ell}}\gets\Inv_\G(sk_{j\ell},1,y_{j\ell})$ to get both preimages.
          Then compute $e_{j\ell}=t_{j\ell}\cdot(x_{0,y_{j\ell}}\oplus x_{1,y_{j\ell}})\oplus w_{j\ell}$.

          If any of the above the algorithms returns $\mathsf{reject}$ for any $j,\ell$, or $t_{j\ell}$ is trivial (e.g., $0^n$, see~\cite{BCMVV18}), reject.
          Otherwise store $e=(e_{j\ell})_{j\in[r],\ell\in[n]}$ as the witness to the next check.
        \item \label{h2-check} $\verifier$ sets $v_{j}=1$ if the witness $(e_{j\ell})_{\ell\in S_j}$ satisfies $M_{-m_{S_{j}}}$ (defined in \prot{MF}).

      \end{enumerate}
      Finally, $\verifier$ accepts if $\sum_{j\in[r]} v_j \geq (2-a-b)r/4$.
\end{proto}
\end{protocol}

\begin{theorem}[Theorems 1.1 and 8.6 in~\cite{Mah18}]\label{thm:mah-main}
Under the \LWE assumption, \prot{Mahadev} is a four-message quantum-prover interactive argument for the class \BQP with completeness error $\negl(n)$ and soundness error $3/4 + \negl(n)$.
\end{theorem}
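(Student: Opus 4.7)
The plan is to prove completeness and soundness separately. Completeness follows from a direct inspection of the honest prover's strategy together with Hoeffding concentration across the $r$ copies, while soundness is the cryptographic heart of the argument, relying on the NTCF/NTIF commitment structure and the hardness of \LWE.

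For completeness, I would trace through the honest execution. Controlled on the coherent ground state $\ket{\phi}^{\otimes r}$ in $W$, the coherent application of $\Samp$ to each of the $nr$ qubits produces a state negligibly close to a uniform superposition over preimages $x$, with register $Y$ holding $\ket{\psi_{f_{pk}(b,x)}}$ conditioned on the $W$-value $b$. Measuring $Y$ leaves a superposition on $WX$ supported on preimages of $y$. In a test round, computational-basis measurement returns a valid preimage $(w,t)$, so the $\Chk$ check passes on every coordinate up to $\negl(n)$ error by correctness of $\Gen$ and $\Samp$. In a Hadamard round, a direct calculation using the Fourier dual of the NTCF/NTIF ensemble shows that the decoded bits $e_{j\ell}$ produced by $\Inv_\F$/$\Inv_\G$ are distributed as independent single-qubit measurements of $\ket{\phi}$ in the requested basis $h$. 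The \prot{MF} analysis then supplies per-copy acceptance probability at least $1-a$, and Hoeffding on the number of accepting copies pushes the completeness error below any $\negl(n)$ by taking $r=\poly(n)$ sufficiently large.

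For soundness, fix a NO instance and a \QPT{} cheater $\prover'$, and let $p_{\test}, p_{\hada}$ denote the probabilities of passing in each challenge branch. Since $c$ is uniform, the goal reduces to showing $p_{\test}+p_{\hada}\le 3/2+\negl(n)$. The strategy is to extract from $\prover'$ an implicit ``committed'' quantum state $\rho^{*}$ on $nr$ qubits whose honest $h$-basis measurement outcomes are computationally indistinguishable from the verifier's trapdoor-decoded Hadamard responses. This extraction relies on the \emph{adaptive hardcore bit} property of NTCFs under \LWE{}~\cite{BCMVV18}, which forbids any \QPT{} adversary from outputting both a preimage $(b, x_{b})$ and the inner product $\langle t, x_{0}\oplus x_{1}\rangle$ on the complementary preimage. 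Once $\rho^{*}$ is identified, the NO promise caps the per-copy \prot{MF}-acceptance at $1-b$, which together with the Hadamard-round threshold $(2-a-b)r/4$ and Hoeffding yields, conditioned on the test subspace, a Hadamard-pass probability of at most $1/2+\negl(n)$. Combining this with the generic bound $p_{\hada}\le (1-p_{\test})+p_{\test}\cdot(1/2+\negl(n))$ gives $p_{\test}+p_{\hada}\le 1 + p_{\test}/2 + \negl(n)\le 3/2+\negl(n)$.

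The main obstacle is the extraction of the committed state $\rho^{*}$ from an arbitrary quantum cheater. Unlike classical soundness proofs, one cannot rewind the prover to collect both challenge responses; the committed state must be constructed ``in place'' from the post-Round-$\prover_1$ state via a carefully chosen measurement, and the resulting distribution of decoded Hadamard outputs must be shown computationally close to honest measurements of $\rho^{*}$. This is accomplished by a hybrid argument that, coordinate by coordinate, swaps NTIF keys for NTCF keys (or vice versa) using their key-indistinguishability under \LWE{}, with each swap preserving acceptance probabilities up to $\negl(n)$ loss. The resulting $3/4$ bound is tight, because nothing prevents a trivial prover from passing the test round with probability $1$; this is precisely why the present paper subsequently amplifies to $\negl(n)$ soundness via parallel repetition.
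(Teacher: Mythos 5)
First, a point of order: the paper does not prove this statement at all --- it is imported verbatim from Mahadev (Theorems 1.1 and 8.6 of \cite{Mah18}), so there is no in-paper proof to match; the closest the paper comes is the short recap of Mahadev's soundness accounting given after \lem{soundness-k}. Judged as a sketch of Mahadev's proof, your completeness paragraph is essentially right, but the soundness argument has a genuine gap at its center. The inequality you call ``generic,'' $p_{\hada}\le(1-p_{\test})+p_{\test}\cdot(1/2+\negl(n))$, is not generic --- it silently assumes you can decompose the prover's post-commitment state into a ``test-passing part'' and a ``test-failing part'' and treat the Hadamard-round acceptance as a classical convex combination over that decomposition. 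The acceptance projectors for the two challenges act on the same quantum state after different adaptive unitaries $U_\test,U_\hada$, so no such decomposition is available for free; establishing a statement of this kind is precisely the content of Mahadev's Claims 5.7, 7.1 and 7.3 (and of \lem{perfect-succ-prob} in this paper), and the loss one actually incurs is of order $\sqrt{p_{h,\test}}$ (via a gentle-measurement/trace-distance argument), not linear in the test-failure probability. Your conditional bound of $1/2+\negl(n)$ is also unexplained: for a prover that is (nearly) perfect on the test round, the Hadamard acceptance on a NO instance is bounded by $\alpha_{h,\rho}+\negl(n)$, which after the MF amplification over $r$ copies is \emph{negligible}, not $1/2$; the $1/2$ appears to be reverse-engineered to land on $3/4$.

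Relatedly, your closing tightness claim is wrong. A prover who passes the test round with probability $1$ and fails the Hadamard round wins with probability $1/2$, not $3/4$; this shows soundness error at least $1/2$, and indeed the present paper observes (after \lem{soundness-k}) that Mahadev's $3/4$ is an artifact of a loose triangle-inequality combination of the two Hadamard sub-checks, and that the single-copy soundness error is in fact at most $1/2+\negl(n)$. So $3/4$ is an upper bound one can cite, not a tight constant, and the reason the paper amplifies by parallel repetition is to drive a constant error to negligible, not because $3/4$ is optimal.
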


\subsection{NIZK for NP}

We will use NIZK protocols $\NIZK=(\Setup,\aP,\aV,\aS)$ for \NP. The protocol is described in terms of the following algorithms.
\begin{enumerate}
  \item (Setup) $\Setup(1^\lambda)$ outputs a common reference string $\crs$ on input $1^\lambda$.
  \item (Prove) $\aP(\crs,x,u)$ outputs a string $e$ on input $\crs$, instance $x$, and witness $u$.
  \item (Verify) $\aV(\crs,x,e)$ outputs a bit $b\in\bit$ on input $\crs$, instance $x$, and proof $e$.
  \item (Simulate) $\aS(x)$ outputs a transcript $(\crs,x,e)$.
\end{enumerate}
We use the Peikert-Shiehian construction based on \LWE~\cite{PS19}. For any \NP language $\LL$, the protocol satisfies the following:
\begin{enumerate}
  \item Completeness: for $(x,u)\in\R_\LL$,
    \begin{align*}
      \Pr\left[\aV(\crs,x,e)=1\middle|\begin{aligned} \crs &\gets\Setup(1^\lambda) \\ e &\gets\aP(\crs,x,w) \end{aligned}\right]\geq 1-\negl(n).
    \end{align*}

  \item Adaptive soundness: for any quantum adversary $\A$, 
    \begin{align*}
      \Pr\left[x\notin\LL\wedge\aV(\crs,x,e)=1\middle|\begin{aligned} \crs &\gets\Setup(1^\lambda) \\ (x,e) &\gets\A(\crs) \end{aligned}\right]\leq\negl(n).
    \end{align*}
  \item Zero-knowledge: for $(x,u)\in\R_\LL$, the distributions $\{(\crs,\aP(\crs,x,u))\}$ and $\{\aS(x)\}$ are computationally indistinguishable.
\end{enumerate}

\begin{remark}
  In a \textbf{classical} NIZK for $\QMA$, there are two differences from the above: (i.) the witness provided to the prover is a quantum state $\ket{\psi}$ instead of the classical value $u$, (ii.) $\aP$ is a quantum polynomial-time algorithm, with classical output, and (iii.) our NIZK protocol for $\QMA$ works in the random oracle model with setup, instead of the (desired) common reference string model.
\end{remark}

\section{Instance-independent key generation}\label{sec:indep}

In this section, we modify the MF protocol such that the sampling of the Hamiltonian term is independent of the performed measurements. This change allows the keys in the Mahadev protocol to be generated before the parties receive the input Hamiltonian, in an offline setup phase.

In our variant, for some $r = \poly(n)$, the verifier $\verifier$ samples $r$ $n$-bit strings $h_1,\ldots,h_r$ uniformly and $r$ independent 2-local terms $S_1,\ldots,S_r$ according to distribution $\pi$ (in which $S$ is samplied with the probability $p_S$ from \expref{Protocol}{prot:MF}).
We say the bases $h_i$ and the terms $S_i$ are \emph{consistent} if, when the observable for the $j$th qubit in $S_i$ is $Z$ (resp., $X$) then the $j$th bit of $h_i$ is $0$ (resp., $1$).
Since $h_i$ is uniformly sampled and $S_i$ is $2$-local, we have
\begin{align*}
  \Pr_{S_i\gets\pi, h_i\gets\bit^n}[\text{$S_i$ and $h_i$ are consistent}] \geq\frac{1}{4}.
\end{align*}
In an $r$-copy protocol, we let $A:=\{i\in[r]: \text{$h_i$ and $S_i$ are consistent}\}$ and denote $t=|A|$.
For each $i \in A$, $\verifier_i$ decides as in the MF protocol: if $i \notin A$, then $\verifier_i$ accepts.
Thus we consider the following protocol.
\begin{protocol}[A modified parallel-repeated MF protocol for $\zx_{a,b}$]\label{prot:modified-MF}~
  \begin{proto}
    \item[Setup.] $\verifier$ samples the bases $h_1,\ldots,h_r\gets\bit^n$ uniformly.
    \item[Round~1.] $\prover$ sends the witness state $\rho$ ($r$ copies of the ground state).
    \item[Round~2.]
      $\verifier$ measures the quantum state $\rho$ in the bases $h_1,\ldots,h_r$.
      For each copy $i\in[r]$, $\verifier$ samples terms $S_1,\ldots,S_r\gets\pi$.
      $\verifier$ records the subset $A\subseteq [r]$ of consistent copies.
      For each copy $i\in A$, $\verifier$ sets $v_i=1$ if the outcome satisfies $M_{-m_S}$ and 0 otherwise.
      $\verifier$ accepts if $\sum_{i\in A} v_i\geq (2-a-b)|A|/4$.
  \end{proto}
\end{protocol}
For sufficiently large $r$, with high probability, there are around $r/4$ consistent copies. Thus to achieve the same completeness and soundness, it suffices to increase the number of copies by a constant factor. We thus have the following fact, proved formally in \app{modifiedmf}.

\begin{restatable}{lemma}{modifiedmf}\label{lem:modified-mf}
  The completeness error and soundness error of \prot{modified-MF} are negligible, provided $r=\smash{\omega\bigl(\frac{\log n}{(b-a)^2}\bigr)}$ copies are used.
\end{restatable}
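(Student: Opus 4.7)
The plan is to combine two concentration arguments: one to show that the number of consistent copies $|A|$ is a constant fraction of $r$ with overwhelming probability, and a second one that imports the completeness and soundness guarantees of the original parallel-repeated \prot{MF}, applied only to the copies in $A$.

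The first step is a Chernoff bound on $|A|$. Because $h_1,\dots,h_r$ are sampled uniformly in $\{0,1\}^n$ and $S_1,\dots,S_r$ are sampled i.i.d.\ from $\pi$, with all choices mutually independent across $i$, the indicators $\mathbf{1}_{i\in A}$ are i.i.d.~Bernoullis; and because $h_i$ is uniform while each $S_i$ is $2$-local, one has $\Pr[i\in A\mid S_i=S]=1/4$ for every $S$, so $\Pr[i\in A]=1/4$. A standard Chernoff bound then gives $|A|\geq r/8$ except with probability $\exp(-\Omega(r))=\negl(n)$ for $r=\omega(\log n)$.

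The second step is to reduce the analysis of the consistent copies to that of \prot{MF}. The key distributional observation is that since $\Pr[\text{consistent}\mid S_i=S]=1/4$ for every $S$, conditioning on $i\in A$ does not alter the marginal distribution of $S_i$ (it is still $\pi$). For $i\in A$, measuring the qubits $j\in S_i$ of copy $i$ in basis $h_i$ produces exactly the outcome that \prot{MF} would obtain by measuring $S_i$ directly, and since the measurements on distinct registers commute, the verifier's joint experiment on the copies in $A$ has the same statistics as \prot{MF} parallel-repeated $|A|$ times. Hence the rule $\sum_{i\in A}v_i\geq (2-a-b)|A|/4$ is the MF verdict on those copies, with per-copy accept probability $\geq 1-a$ in the YES case (the honest prover sends ground states) and $\leq 1-b$ in the NO case (for any joint cheating state on the $r$ registers, only the marginals on $A$ enter, and each has energy $\geq b$). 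The standard gap amplification analysis for MF-style protocols (see the discussion after \prot{MF} and \cite[Section~3]{VW16}) then upper bounds both errors by $\exp(-\Omega(|A|(b-a)^2))$; plugging in $|A|\geq r/8$ with $r=\omega(\log n/(b-a)^2)$ makes both negligible, and a union bound with the first concentration event completes the proof. The main subtlety -- and the reason the bases are sampled uniformly rather than derived from the terms -- is verifying that consistency does not bias $S_i$; without that invariance one would need to re-analyze \prot{MF} for a distorted term distribution, which would in general shift the threshold constant.
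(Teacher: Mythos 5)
Your proposal is correct and follows essentially the same route as the paper's proof: condition on the set $A$ of consistent copies (each copy is consistent with probability exactly $1/4$, independently), observe that the experiment restricted to $A$ is the parallel-repeated MF test with the usual per-copy acceptance probabilities, and apply Hoeffding plus the standard $\QMA$ gap-amplification argument to get error $\exp(-\Omega(|A|(b-a)^2))$. The only cosmetic difference is that the paper averages the Hoeffding bound exactly over the binomial distribution of $|A|$ (obtaining a closed form $2\bigl(\tfrac{e^{-g^2/2}+3}{4}\bigr)^r$), whereas you use a Chernoff bound $|A|\geq r/8$ together with a union bound; both yield the same conclusion.
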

\begin{remark}\label{rmk:alpha}
We stress that the terms $S_i$ are sampled {\em independently} of the interaction in the protocol. We let $\term(H,s)$ denote the deterministic algorithm that outputs a term from $H$ according to distribution $\pi$ when provided the randomness $s\in\bit^p$ for sufficiently large polynomial $p$. For bases $h\in\bit^{nr}$ and $s\in\bit^p$, we let $\alpha_{h,s,\rho}$ denote the success probability when $\prover$ presents quantum state $\rho$.
\end{remark}

We embed \prot{modified-MF} into our classical verification protocol by modifying the Mahadev protocol (\prot{Mahadev}) as follows.
First, note that the measurement in random bases $h$ can be achieved by sending random keys, so the key generation can be done before the parties receive the instance.
In the Verdict stage, if the protocol enters a test round (i.e., $c=0$), then $\verifier$ checks as in \prot{Mahadev}.
If the protocol enters a Hadamard round (i.e., $c=1$), $\verifier$ uses the secret key to compute the measurement outcome, as in check \ref{h1-check} of the Verdict stage of \prot{Mahadev}.
Once the outcomes are successfully computed, $\verifier$ samples terms $S_1,\ldots,S_r\gets\pi$ and checks the consistent copies.
Since the outcome must be computationally indistinguishable from measuring a quantum state in bases $h$,
\lem{modified-mf} applies, and \thm{mah-main} holds for our variant protocol. We refer to this variant as ``the three-round Mahadev protocol'' and denote it by $\mathfrak{M}$.

\section{A parallel repetition theorem for the Mahadev protocol}\label{sec:parallel}

In a $k$-fold parallel repetition of $\mathfrak{M}$, the honest prover runs the honest single-fold prover independently for each copy of the protocol. Meanwhile, the honest verifier runs the single-fold verifier independently for each copy, accepting if and only if all $k$ verifiers accept. The completeness error clearly remains negligible. We now analyze the soundness error and establish a parallel repetition theorem.

In preparation, we fix the following notation related to the Verdict stage of $\mathfrak{M}$. We will refer frequently to the notation established in our description of \expref{Protocol}{prot:Mahadev} above, which applies to $\mathfrak{M}$ as well. First, the check $\bigwedge_{j\in[r],\ell\in[n]}\Chk(pk_{j\ell},w_{j\ell},t_{j\ell},y_{j\ell})=1$ in a test round is represented by a projection $\Pi_{sk,\test}$ acting on registers $WXY$.
Specifically, this is the projector whose image is spanned by all inputs $(w,t,y)$ that are accepted by the verifier in the Verdict stage. Note that running $\Chk$ does not require the trapdoor $sk$, but the relation implicitly depends on it.
For notational convenience, we will also denote $\Pi_{sk,\test}$ as $\Pi_{s,sk,\test}$, though the projector does not depend on $s$.

Second, the two Hadamard round checks \ref{h1-check} and \ref{h2-check} of the Verdict stage are represented by projectors
$\Lambda_{sk,\hada,1}$ and $\Lambda_{s,sk,\hada,2}$, respectively. These two projectors commute since they are both diagonal in the standard basis. We define the overall Hadamard round projector $\Pi_{s,sk,\hada}:=\Lambda_{sk,\hada,1}\Lambda_{s,sk,\hada,2}$.

\subsection{A lemma for the single-copy protocol}

We begin by showing an important fact about the single-copy protocol: the verifier's accepting paths associated to the two challenges (denoted $\test$ and $\hada$ for ``test'' and ``Hadamard,'' respectively) correspond to nearly orthogonal\footnote{Strictly speaking, the projectors are only nearly orthogonal when applied to states prepared by efficient provers.} projectors. Moreover, in a certain sense this property holds even for input states that are adaptively manipulated by a dishonest prover after they have learned which challenge will take place. This fact is essential in our analysis of the parallel repetition of many copies in the following sections.

\paragraph{The setup.}

As discussed in \cite{Mah18}, any prover $\prover$ can be characterized as follows.
First, pick a state family $\ket{\Psi_{pk}}$; this state is prepared on registers $WXYE$ after receiving $pk$.
Here $Y$ is the register that will be measured in Round $\prover_1$, $W$ and $X$ are the registers that will be measured in Round $\prover_2$, and $E$ is the private workspace of $\prover$. Then, choose two unitaries $U_\test$ and $U_\hada$ to describe the Round~$\prover_2$ actions of $\prover$ before any measurements, in the test round and Hadamard round, respectively. Both $U_\test$ and $U_\hada$ act on $WXYE$, but can only be classically controlled on $Y$, as they must be implemented after $\prover$ has measured $Y$ and sent the result to the verifier. (Of course, a cheating prover is not constrained to follow the honest protocol, but we can nevertheless designate a fixed subsystem $Y$ that carries their message.) We will write $\prover = (\ket{\Psi_{pk}}, U_\test, U_\hada)$, where it is implicit that $\ket{\Psi_{pk}}$ is a family of states parameterized by $pk$.

At the end of the protocol, the registers $WXY$ are measured and given to the verifier. Recall that we can view the final actions of the verifier as applying one of two measurements: a test-round measurement or a Hadamard-round measurement. Let $\Pi_{s,sk,\test}$ and $\Pi_{s,sk,\hada}$ denote the ``accept'' projectors for those measurements, respectively. For a given prover $\prover$, we additionally define
\begin{align*}\label{eq:single-copy-projections}\nonumber
\Pi_{s,sk,\test}^{U_\test} &:= U_\test^\dag(\Pi_{s, sk, \test} \otimes\1_E)U_\test\,, \\
\Pi_{s,sk,\hada}^{U_\hada} &:= U_\hada^\dag(H_{WX} \Pi_{s, sk, \hada} H_{WX} \otimes\1_E)U_\hada\,,
\end{align*}
where $H_{WX}$ denotes the Hadamard transform on registers $WX$, i.e., the Hadamard gate applied to every qubit in those registers. These projectors have a natural interpretation: they describe the action of the two accepting projectors of the verifier on the initial state $\ket{\Psi_{pk}}$ of the prover, taking into account the (adaptive) attacks the prover makes in Round $\prover_2$.

\paragraph{A key lemma.}

We now prove a fact about the single-copy protocol. The proof is largely a matter of making some observations about the results from \cite{Mah18}, and then combining them in the right way.

Recall that, after the setup phase, for any instance $H$ of the ZX-Hamiltonian problem (\prob{ZX}), $\mathfrak{M}$ begins with the verifier $\verifier$ making a measurement basis choice $h\in\bit^{nr}$ for all the qubits.
After interacting with a prover $\prover$, the verifier either rejects or produces a candidate measurement outcome, which is then tested as in \prot{modified-MF}.
We let $D_{\prover,h}$ denote the distribution of this candidate measurement outcome for a prover $\prover$ and basis choice $h$, averaged over all measurements and randomness of $\prover$ and $\verifier$. It is useful to compare $D_{\prover, h}$ with an ``ideal'' distribution  $D_{\rho, h}$ obtained by simply measuring some $(nr)$-qubit quantum state $\rho$ (i.e., a candidate ground state) according to the basis choices specified by $h$, with no protocol involved.
\begin{lemma}\label{lem:perfect-succ-prob}
Let $\prover = (\ket{\Psi_{pk}}, U_\test, U_\hada)$ be a prover in $\mathfrak{M}$ such that, for every $h \in \bit^{nr}$ and $s\in\bit^p$,
\begin{equation}\label{eq:perfect}
\Exp_{(pk,sk)\gets\Gen(1^\lambda,h)}[\bra{\Psi_{pk}}\Pi_{s,sk,\test}^{U_\test}\ket{\Psi_{pk}}] \geq 1 - \negl(n)\,.
\end{equation}
Then there exists an $(nr)$-qubit quantum state $\rho$ such that, for every $h,s$,
\begin{equation*}
\Exp_{(pk,sk)\gets\Gen(1^\lambda,h)}[\bra{\Psi_{pk}}\Pi_{s,sk,\hada}^{U_\hada}\ket{\Psi_{pk}}]\leq \alpha_{h,s,\rho}+\negl(n)\,,
\end{equation*}
where $\alpha_{h,s,\rho}$ (see \rmk{alpha}) is the success probability in the MF protocol with basis choice $h$ and quantum state $\rho$.
\end{lemma}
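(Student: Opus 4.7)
The plan is to adapt the single-copy soundness analysis of \cite{Mah18} to our instance-independent setting. The essential content is that any prover achieving near-perfect test-round acceptance is effectively \emph{committed} to a fixed quantum state $\rho$, in the sense that its Hadamard-round response is constrained to be consistent with measuring $\rho$ in the bases $h$.

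\textbf{Step 1 (commitment structure from test acceptance).} By the hypothesis, for every $h$ and $s$, the expected value of $\bra{\Psi_{pk}}\Pi_{s,sk,\test}^{U_\test}\ket{\Psi_{pk}}$ is $1-\negl(n)$. The gentle-measurement lemma then implies that $U_\test\ket{\Psi_{pk}}$ is trace-distance $\negl(n)$ close to its projection onto the image of $\Pi_{s,sk,\test}\otimes\1_E$, again in expectation over $\Gen(1^\lambda,h)$. By the injective-pair property of $\F$ and the injectivity of $\G$, this means that upon measuring $Y$ to obtain $y$, the post-measurement state on $WX$ is (approximately) supported on the unique valid preimage, in the NTIF case, or the unique pair of preimages, in the NTCF case, of $y$.

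\textbf{Step 2 (extracting $\rho$).} I would define $\rho$ by reference to a single fixed basis $h_0$, say $h_0=0^{nr}$ (all NTIF keys). Concretely, run the prover to obtain $\ket{\Psi_{pk_0}}$, apply $U_\test$, measure $Y$ to get $y_0$, and use the trapdoor $sk_0$ together with $\Inv_\G$ to extract a preimage bit $b_{j\ell}$ from each $y_{0,j\ell}$. Tracing out everything but $W$ (and averaging over all randomness of $\prover$, $\Gen$, and the $Y$-measurement) yields the candidate state $\rho$. It is crucial that this definition does not depend on $h$: by the LWE-based computational indistinguishability of NTIF and NTCF keys, substituting any $g_{pk_i}$ for $f_{pk_i}$ (or vice versa) only perturbs the extracted state by a negligible amount, so extracting with $h_0=0^{nr}$ yields the same $\rho$, up to negligible trace distance, as extracting with any other reference basis.

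\textbf{Step 3 (Hadamard-round bound).} Fix an arbitrary $h$ and $s$. I would then invoke the structural analysis of Section~7 of \cite{Mah18}: combining Step~1 with the adaptive hardcore bit property of $\F$, the bits $e=(e_{j\ell})$ produced by the first part of the Hadamard-round Verdict stage are computationally indistinguishable (under LWE) from the outcome of measuring $\rho$ in the bases indicated by $h$. The second part of the Verdict stage is then exactly the MF check using the term $\term(H,s)$, whose acceptance probability on these measurement outcomes is by definition $\alpha_{h,s,\rho}$. Combining, the overall Hadamard-round acceptance probability is at most $\alpha_{h,s,\rho}+\negl(n)$, as required.

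\textbf{Main obstacle.} The real technical work sits in Step~3: arguing that the \emph{adaptive} attack $U_\hada$ cannot do better than an honest measurement of the committed $\rho$. This is the content of Mahadev's single-copy soundness theorem, and it depends delicately on the adaptive hardcore bit property of $\F$ together with the commitment structure from Step~1. I would import these results as black boxes rather than reprove them, limiting the new contribution here to checking that they transport correctly through the instance-independent modification (\expref{Protocol}{prot:modified-MF}) and that the extracted $\rho$ is genuinely $h$-independent, as argued in Step~2.
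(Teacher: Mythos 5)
Your proposal follows essentially the same route as the paper: both reduce the statement to Mahadev's single-copy structural results (her Claims 5.7 and 7.3 --- near-perfect prover $\Rightarrow$ nearby ``trivial'' prover $\Rightarrow$ a single committed state $\rho$ whose measurement statistics in every basis $h$ are computationally indistinguishable from the prover's Hadamard-round outputs), and then observe that the Hadamard-round acceptance probability is, up to this indistinguishability, exactly the MF acceptance probability $\alpha_{h,s,\rho}$. One caveat: your Step~2 claim that the extracted $\rho$ is independent of the reference basis ``by computational indistinguishability of NTIF and NTCF keys'' does not hold as stated, since the extraction uses the trapdoor $sk_0$ (which is correlated with the key type and unavailable to an efficient distinguisher); the basis-independence of $\rho$ is precisely the nontrivial content of Mahadev's Claim 7.3, which should be imported as a black box rather than re-derived this way --- as it is, your argument remains correct because Step~3 ultimately defers to that claim anyway.
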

\begin{proof}
Up to negligible terms, \eq{perfect} means that $\prover$ is what Mahadev calls a \emph{perfect prover}.
She establishes two results ({\cite[Claim~7.3]{Mah18}} and {\cite[Claim~5.7]{Mah18}})
which, when taken together, directly imply the following fact about perfect provers.
For every perfect prover $\prover$, there exists an efficiently preparable quantum state $\rho$ such that $D_{\prover,h}$ is computationally indistinguishable from $D_{\rho,h}$ for all basis choices $h \in \bit^{nr}$.
In particular, the proof is obtained in two steps.
First, for every perfect prover, there exists a nearby ``trivial prover'' whose attack in a Hadamard round commutes with standard basis measurement on the committed state \cite[Claim~5.7]{Mah18}.
Second, for every trivial prover, the distribution is computationally indistinguishable from measuring a consistent quantum state $\rho$ in any basis $h$ \cite[Claim~7.3]{Mah18}.
Mahadev shows this for exactly perfect provers, but the proofs can be easily adapted  to our ``negligibly-far-from-perfect'' case.

Now consider two ways of producing a final accept/reject output of the verifier.
In the first case, an output is sampled from the distribution $D_{\prover, h}$ and the verifier applies the final checks in $\mathfrak{M}$.
In this case, the final outcome is obtained by performing the measurement $\{\Pi_{s,sk,\hada}^{U_\hada}, \1 - \Pi_{s,sk,\hada}^{U_\hada}\}$ on the state $\ket{\Psi_{pk}}$, and accepting if the first outcome is observed.
In the second case, an output is sampled from the distribution $D_{\rho, h}$ and the verifier applies the final checks in the MF protocol.
In this case, the acceptance probability is $\alpha_{h,s,\rho}$ simply by definition.
The result then follows directly.
\end{proof}

Notice that for the soundness case, there is no state that succeeds non-negligibly in the MF protocol.
In this case, \lem{perfect-succ-prob} implies that for perfect provers the averaged projection $\Exp_{(pk,sk)\gets\Gen(1^\lambda,h),h,s}[\bra{\Psi_{pk}}\Pi_{s,sk,\hada}^{U_\hada}\ket{\Psi_{pk}}]$ is negligible. In other words,
provers who succeed almost perfectly in the test round must almost certainly fail in the Hadamard round. We emphasize that this is the case even though the prover can adaptively change their state (by applying $U_\test$ or $U_\hada$) after learning which round will take place. This formalizes the intuitive claim we made at the beginning of the section about ``adaptive orthogonality'' of the two acceptance projectors corresponding to the two round types.

\subsection{The parallel repetition theorem}

\paragraph{Characterization of a prover in the $k$-fold protocol.}

We now discuss the behavior of a general prover in a $k$-fold protocol. We redefine some notation, and let $\verifier$ be the verifier and $\prover$ an arbitrary prover in the $k$-fold protocol.

In the Setup phase, the key pairs $(pk_1,sk_1),\ldots,(pk_k,sk_k)$ are sampled according to the correct NTCF/NTIF distribution.\footnote{Recall that the keys are sampled by choosing uniform bases $h$ followed by running $\Gen(1^\lambda,h)$.}
The secret keys $sk = (sk_1,\ldots,sk_k)$ and the corresponding bases $h$ are given to $\verifier$, whereas $pk = (pk_1,\ldots,pk_k)$ is given to $\prover$ (in the register $PK =(PK_1,\ldots,PK_k)$).

In Round~$\prover_1$, without loss of generality, the action of $\prover$ prior to measurement is to apply a unitary $U_0=\sum_{pk}\proj{pk}_{PK}\otimes (U_{0,pk})_{WXYE}$ to the input state $\ket{pk}_{PK}\ket{0}_{WXYE}$. Each of $W,X,Y$ is now a $k$-tuple of registers, and $E$ is the prover's workspace. To generate the ``commitment'' message to $\verifier$, $\prover$ performs standard basis measurement on $Y$.
We write $\ket{\Psi_{pk}}_{WXYE}=\sum_y\beta_y\ket{\Psi_{pk, y}}_{WXE}\ket{y}_Y$. When the measurement outcome is $y$, the side state $\prover$ holds is then $\ket{\Psi_{pk,y}}_{WXE}$. In the following analysis of the success probability of $\prover$,
we consider the superposition $\ket{\Psi_{pk}}_{WXYE}$ instead of a classical mixture of the states $\ket{\Psi_{pk,y}}_{WXE}$
using the principle of deferred measurement.

In Round $\verifier_2$, after receiving the commitment $y=(y_1,\ldots,y_k)$ from the prover, $\verifier$ sends challenge coins $c := (c_1,\ldots,c_k)\in\{0,1\}^k$. For the remainder of the protocol, we take the following point of view. We assume that $\prover$ and $\verifier$ share access to a register $Y$ whose state is fixed forever to be the standard basis state $y$. This is clearly equivalent to the real situation, as there $\prover$ measured $Y$ and committed to the outcome $y$ by sending it to $\verifier$.

In Round $\prover_2$, without loss of generality, the action of $\prover$ consists of a general operation (that can depend on $c$), followed by the honest action. The general operation is some efficient unitary $U_c$ on $WXYE$. The honest action is measurement in the right basis, i.e., for each $i$, $W_iX_i$ is measured in the standard basis (if $c_i=0$) or the Hadamard basis (if $c_i=1$). Equivalently, the honest action is \emph{(i.)} apply $\mathfrak H^c_{WX}:=\bigotimes_{i=1}^k (H^{c_i})_{W_iX_i}$, i.e., for each $\{i : c_i = 1\}$ apply a Hadamard to every qubit of $W_iX_i$, and then \emph{(ii.)} apply standard basis measurement.

In the Verdict stage, $\verifier$ first applies for each $i$ the two-outcome measurement corresponding to the $\Pi_{s_i,sk_i,c_i}$ from the single-copy protocol. The overall decision is then to accept if the measurements accept for all $i$. We let
\begin{equation}\label{eq:prod-meas}
\left(\Pi_{s,sk,c}\right)_{WXY}:=\bigotimes_{i=1}^k\left(\Pi_{s_i,sk_i,c_i}\right)_{W_iX_iY_i}
\end{equation}
denote the corresponding acceptance projector for the entire $k$-copy protocol. The effective measurement on $\ket{\Psi_{pk}}_{WXYE}$ is then described by the projection
\begin{align*}
\left(\Pi_{s,sk,c}^{U_c}\right)_{WXYE}:=(U_c^\dag)_{WXYE}(\mathfrak H^c\Pi_{s,sk,c,y} \mathfrak H^c\otimes\1_E)(U_c)_{WXYE}\,.
\end{align*}
The success probability of $\prover$, which is characterized by the state $\ket{\Psi_{pk}}$ and family of unitaries $\{U_c\}_{c \in \bit^n}$, is thus
\begin{align*}
\Exp_{(pk,sk)\gets\Gen(1^\lambda,h),h,s,c}\bigl[\bra{\Psi_{pk}}\Pi_{s,sk,c}^{U_c}\ket{\Psi_{pk}}\bigr].
\end{align*}

\paragraph{The proof of parallel repetition.}\label{sec:proof}

Recall that \lem{perfect-succ-prob} states that the projectors corresponding to the two challenges in $\mathfrak{M}$ are nearly orthogonal, even when one takes into account the prover's adaptively applied unitaries. We show that this property persists in the $k$-copy protocol. Specifically, we show that all $2^k$ challenges are nearly orthogonal (in the same sense as in \lem{perfect-succ-prob}) with respect to any state $\ket{\Psi_{pk}}$ and any post-commitment unitaries $U_c$ of the prover.

This can be explained informally as follows. For any two distinct challenges $c \neq c'$, there exists a coordinate $i$ such that $c_i \neq c_i'$, meaning that one enters a test round in that coordinate while the other enters a Hadamard round. In coordinate $i$, by the single-copy result (\lem{perfect-succ-prob}), the prover who succeeds with one challenge should fail with the other. A complication is that, since we are dealing with an interactive argument, we must show that a violation of this claim leads to an \emph{efficient} single-copy prover that violates the single-copy result. Once we have shown this, we can then apply it to any distinct challenge pairs $c \neq c'$. It then follows that we may (approximately) decompose $\ket{\Psi_{pk}}$ into components accepted in each challenge, each of which occurs with probability $2^{-k}$. We can then use this decomposition to express the overall success probability of $\prover$ in terms of this decomposition. As $\ket{\Psi_{pk}}$ is of course a normalized state, it will follow that the overall soundness error is negligibly close to $2^{-k}$.

The ``adaptive orthogonality'' discussed above is formalized in the following lemma. Recall that any prover in the $k$-fold parallel repetition of $\mathfrak{M}$ can be characterized by a state family $\{\ket{\Psi_{pk}}\}_{pk}$ that is prepared in Round~$\prover_1$ and a family of unitaries $\{U_c\}_{c\in\{0,1\}^k}$ that are applied in Round~$\prover_2$.

\begin{restatable}{lemma}{ortm}\label{lem:nearly-orthogonal-measurements}
Let $\prover$ be a prover in the $k$-fold parallel repetition of $\mathfrak{M}$
that prepares $\ket{\Psi_{pk}}$ in Round~$\prover_1$ and performs $U_c$ in Round~$\prover_2$. Let $a, b \in \bit^k$ such that $a \neq b$ and choose $i$ such that $a_i \neq b_i$. Then there is an $(nr)$-qubit quantum state $\rho$ such that for every basis choice $h$ and randomness $s$,
\begin{align*}
\Exp_{(pk,sk)\gets\Gen(1^\lambda,h)}\left[\bra{\Psi_{pk}}\Pi_{s,sk,b}^{U_b}\Pi_{s,sk,a}^{U_{a}}+\Pi_{s,sk,a}^{U_{a}}\Pi_{s,sk,b}^{U_b}\ket{\Psi_{pk}}\right] \leq 2\alpha_{h_i,s_i,\rho}^{1/2}+\negl(n)\,,
\end{align*}
where $\alpha_{h_i,s_i,\rho}$ (see \rmk{alpha}) is the success probability with $\rho$ conditioned on the event that $h_i$ is sampled.
\end{restatable}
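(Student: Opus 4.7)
My strategy is to reduce the $k$-fold cross-term to a single-coordinate bound at index $i$ and then invoke \lem{perfect-succ-prob} via a carefully constructed single-copy prover. Assume without loss of generality that $a_i=\test$ and $b_i=\hada$; the other case is symmetric. Since the Verdict projector $\Pi_{s,sk,c}$ is the tensor product \eq{prod-meas} of per-coordinate projectors, it is dominated by its coordinate-$i$ factor tensored with $\1$ on the remaining registers. Conjugating by $\mathfrak H^c U_c$ preserves this operator inequality, and noting that $\mathfrak H^a$ acts trivially on $W_iX_i$ (since $a_i=0$) while $\mathfrak H^b$ applies Hadamard on $W_iX_i$ (since $b_i=1$), we obtain $\Pi_{s,sk,a}^{U_a}\leq \tilde\Pi_a$ and $\Pi_{s,sk,b}^{U_b}\leq \tilde\Pi_b$, where
\begin{align*}
\tilde\Pi_a := U_a^\dag\bigl((\Pi_{sk_i,\test})_{W_iX_iY_i}\otimes\1\bigr)U_a, \quad \tilde\Pi_b := U_b^\dag\bigl((H_{W_iX_i}\Pi_{s_i,sk_i,\hada}H_{W_iX_i})_{W_iX_iY_i}\otimes\1\bigr)U_b
\end{align*}
are the ``coordinate-$i$ only'' versions of the two acceptance projectors. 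In particular, $\Pi_a^{U_a}\ket{\Psi_{pk}}$ and $\Pi_b^{U_b}\ket{\Psi_{pk}}$ lie in the images of $\tilde\Pi_a$ and $\tilde\Pi_b$ respectively.

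Next, I construct an efficient single-copy prover $\prover^{\mathsf{sc}}$ that embeds $\prover$ at coordinate $i$: given $pk^*$ from the single-copy verifier, sample the remaining bases $h_{-i}\gets\bit^{n(k-1)}$ together with keys $(pk_j,sk_j)\gets\Gen(1^\lambda,h_j)$ for $j\neq i$ internally, form the composite $pk$ with $pk_i=pk^*$, run $\prover$'s Round~$\prover_1$ preparation to obtain $\ket{\Psi_{pk}}$, and commit to $y_i$; on single-copy challenge $c^*\in\bit$, apply $U_a$ if $c^*=a_i=\test$ or $U_b$ if $c^*=b_i=\hada$, then measure $W_iX_i$ honestly. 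Clearly $\prover^{\mathsf{sc}}$ is \QPT, and from $\prover^{\mathsf{sc}}$'s Hadamard-round attack $U_b$ I will extract the $(nr)$-qubit witness state $\rho$ provided by \lem{perfect-succ-prob}.

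The key step is a sharper use of Cauchy-Schwarz than a naive $\nm{\Pi_a^{U_a}\Psi}\,\nm{\Pi_b^{U_b}\Psi}$ bound. Let $\ket{\xi_a}:=\Pi_a^{U_a}\ket{\Psi_{pk}}/\nm{\Pi_a^{U_a}\ket{\Psi_{pk}}}$; by construction $\tilde\Pi_a\ket{\xi_a}=\ket{\xi_a}$, so $\ket{\xi_a}$ is a ``perfectly test-accepted'' state at coordinate $i$. Then
\begin{align*}
|\bra{\Psi_{pk}}\Pi_a^{U_a}\Pi_b^{U_b}\ket{\Psi_{pk}}| \leq \nm{\Pi_b^{U_b}\Pi_a^{U_a}\ket{\Psi_{pk}}} = \nm{\Pi_a^{U_a}\ket{\Psi_{pk}}}\cdot\nm{\Pi_b^{U_b}\ket{\xi_a}} \leq \nm{\Pi_a^{U_a}\ket{\Psi_{pk}}}\cdot\nm{\tilde\Pi_b\ket{\xi_a}}.
\end{align*}
If one could apply \lem{perfect-succ-prob} to the prover whose initial state is $\ket{\xi_a}$ and whose Hadamard attack is $U_b$, one would conclude $\nm{\tilde\Pi_b\ket{\xi_a}}^2\leq \alpha_{h_i,s_i,\rho}+\negl(n)$ for some $\rho$; combining with $\nm{\Pi_a^{U_a}\ket{\Psi_{pk}}}\leq 1$ and taking expectations over $(pk,sk)\gets\Gen(1^\lambda,h)$ would then bound the anticommutator expectation by $2\alpha_{h_i,s_i,\rho}^{1/2}+\negl(n)$, as claimed.

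The main obstacle is that $\ket{\xi_a}$ is not efficiently preparable from $\ket{\Psi_{pk}}$: the post-selection on the coord-$i$ test-acceptance projector depends on the secret $sk_i$, which is unavailable to the single-copy prover, so \lem{perfect-succ-prob} cannot be invoked verbatim with $\ket{\xi_a}$ as the prover's initial state. I plan to circumvent this by exploiting that Mahadev's extraction (via her Claims~5.7 and~7.3 in~\cite{Mah18}) defines $\rho$ from the Hadamard attack $U_b$ alone, and that the resulting indistinguishability holds computationally across the entire test-accept subspace, not merely for the specific preparation $\ket{\Psi_{pk}}$. The desired bound $\nm{\tilde\Pi_b\ket{\xi_a}}^2\leq\alpha_{h_i,s_i,\rho}+\negl(n)$ then follows from the \LWE-based reduction underlying \lem{perfect-succ-prob} applied to the efficient prover $\prover^{\mathsf{sc}}$, without ever explicitly preparing $\ket{\xi_a}$. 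Carrying out this ``subspace-level'' version of the extraction, so that the reduction remains efficient in the QPT regime, is the technical heart of the proof.
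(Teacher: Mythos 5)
Your overall skeleton matches the paper's: reduce to the coordinate-$i$ factors of the product projectors, bound the anticommutator via Cauchy--Schwarz through the conjugated quantity $\bra{\Psi_{pk}}\Pi^{U_b}_{s,sk,b}\Pi^{U_a}_{s,sk,a}\Pi^{U_b}_{s,sk,b}\ket{\Psi_{pk}}$ (equivalently, through the post-selected state $\ket{\xi_a}$), and invoke \lem{perfect-succ-prob} on a single-copy prover that embeds $\prover$ at coordinate $i$ with internally sampled keys elsewhere. However, the proof stalls exactly at its load-bearing step, and the obstacle you identify there is misdiagnosed. You claim that $\ket{\xi_a}$ cannot be prepared efficiently because ``the post-selection on the coord-$i$ test-acceptance projector depends on the secret $sk_i$.'' This is false: the test-round check is $\Chk$, which by definition runs on the \emph{public} key only (the paper notes explicitly that running $\Chk$ does not require the trapdoor). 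The paper exploits precisely this to prepare the post-selected state by a repeat-until-success procedure: prepare $\ket{\Psi_{pk}}$, apply $U_b$, measure the product projector $\Pi_{s,sk,b}$ --- implementable because coordinate $i$ is a test round (needs only $pk^*$) and all other coordinates use keys the reduction sampled itself --- and retry on rejection. The resulting state is a perfect prover state for \lem{perfect-succ-prob}, and the keys for which the RUS fails to terminate are handled by keeping the weight $\|\Pi^{U_b}_{s,sk,b}\ket{\Psi_{pk}}\|^2$ in the expectation (their contribution is then at most $q^{-1/2}$). Your substitute for this --- a ``subspace-level'' version of Mahadev's extraction valid across the whole test-accept subspace --- is not carried out, is acknowledged as the ``technical heart,'' and is not obviously true as stated (\lem{perfect-succ-prob} is a statement about efficiently preparable prover states, not about arbitrary vectors in the acceptance subspace). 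Relatedly, discarding the factor $\|\Pi^{U_a}_{s,sk,a}\ket{\Psi_{pk}}\|\le 1$ before controlling $\|\tilde\Pi_b\ket{\xi_a}\|$ forfeits the mechanism that saves you on keys where the post-selection probability is tiny and $\ket{\xi_a}$ is genuinely not efficiently reachable. As written, the central inequality is therefore unproven.
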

\begin{proof}
  See \app{lemma-orthogonal-measurements}.
\end{proof}
\noindent
We remark that this adaptive orthogonality is guaranteed under a computational assumption.
Assuming that no efficient quantum adversary can break the underlying security properties based on plain $\LWE$,
the projections are pairwise orthogonal in the sense of averaging over the key pairs $(pk,sk)$  and with respect to any quantum state $\ket{\Psi_{pk}}$ prepared by an efficient quantum circuit.

We also emphasize that, in \lem{nearly-orthogonal-measurements}, for each pair $a \neq b$ the left-hand side is upper-bounded by the acceptance probability of measuring some state $\rho$ in the basis $h_i$, and the quantum state $\rho$ may be different among distinct choices of $(a,b)$ and $i$.
This implies that if $\prover$ succeeds with one particular challenge perfectly\footnote{More concretely, if  for some fixed $a$, $\Pi_{s,sk,a}\ket{\Psi_{pk}}=\ket{\Psi_{pk}}$.} when we average over  $h$ and $s$, \lem{nearly-orthogonal-measurements} and standard amplification techniques (see \sec{indep}) imply that it succeeds on challenge $b \ne a$ with probability at most
$\Exp_{(pk,sk)\gets\Gen(1^\lambda)}\bra{\Psi_{pk}}\Pi_{s,sk,b}\ket{\Psi_{pk}}\leq \negl(n)$.
We note that this strategy leads to acceptance probability at most $2^{-k}+\negl(n)$.

Unfortunately, the above observation  does not rule out that $\prover$ can succeed on several challenges with appreciable probability, achieving a better overall success probability. For instance, one could try to implement this by coherently simulating distinct provers who perfectly win different challenges.

We rule out this possibility  by showing that if the projectors are pairwise nearly orthogonal with respect to a quantum state $\rho$,
then the above strategy is optimal with respect to $\rho$. Since pairwise orthogonality holds with respect to \emph{any} efficiently preparable quantum state by \lem{nearly-orthogonal-measurements}, our parallel repetition theorem follows.

First, we state a key technical lemma, proved in \app{lemmas}.

\begin{restatable}{lemma}{ort}\label{lem:orthogonality-implies-soundness}
  Let $A_1,\ldots,A_m$ be projectors and $\ket{\psi}$ be a quantum state. Suppose there are real numbers $\delta_{ij}\in[0,2]$ such that $\bra{\psi}A_iA_j+A_jA_i\ket{\psi}\leq \delta_{ij}$ for all $i \ne j$. Then
$\bra{\psi}A_1+\cdots+A_m\ket{\psi}\leq 1+\bigl(\sum_{i<j}\delta_{ij}\bigr)^{1/2}$.
\end{restatable}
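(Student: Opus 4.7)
The plan is to exploit that each $A_i$ is idempotent, so that expanding $S := \sum_{i=1}^m A_i$ squared collapses to a sum of the $A_i$ themselves plus the pairwise anticommutators whose expectations are controlled by the hypothesis. Concretely, since $A_i^2 = A_i$,
\begin{equation*}
S^2 = \sum_{i=1}^m A_i + \sum_{i<j}(A_iA_j + A_jA_i),
\end{equation*}
so taking the expectation in $\ket{\psi}$ and applying the hypothesis term by term yields
\begin{equation*}
\bra{\psi}S^2\ket{\psi} \le \bra{\psi}S\ket{\psi} + \sum_{i<j}\delta_{ij}.
\end{equation*}

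The second step is to turn this into a bound on $t := \bra{\psi}S\ket{\psi}$. Because $S$ is Hermitian and $\ket{\psi}$ is a unit vector, Cauchy--Schwarz gives $t^2 = |\langle\psi|S|\psi\rangle|^2 \le \|\ket{\psi}\|^2\,\|S\ket{\psi}\|^2 = \bra{\psi}S^2\ket{\psi}$. Combining this with the previous display and setting $\Delta := \sum_{i<j}\delta_{ij}$, we obtain the quadratic inequality $t^2 - t - \Delta \le 0$, whose positive root bound is
\begin{equation*}
t \le \tfrac{1}{2}\bigl(1 + \sqrt{1 + 4\Delta}\bigr).
\end{equation*}

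Finally, I would check the elementary inequality $\tfrac{1}{2}(1+\sqrt{1+4\Delta}) \le 1 + \sqrt{\Delta}$, which follows by rearranging and squaring (this reduces to $4\sqrt{\Delta} \ge 0$). This gives the stated bound $t \le 1 + \Delta^{1/2}$.

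I do not expect any real obstacle: the only subtle point is noticing that projector idempotency makes the diagonal terms of $S^2$ coincide with $S$ itself, which is what lets a bound on the off-diagonal anticommutators translate directly into a bound on $\bra{\psi}S\ket{\psi}$ via Cauchy--Schwarz. Everything else is an algebraic manipulation.
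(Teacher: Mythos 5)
Your proof is correct and follows essentially the same route as the paper: expand $S^2$ using idempotency, bound $\bra{\psi}S\ket{\psi}^2 \le \bra{\psi}S^2\ket{\psi}$ (the paper phrases this as $\proj{\psi}\preceq\1$, you phrase it as Cauchy--Schwarz, which is the same fact), and solve the resulting quadratic inequality. No issues.
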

\noindent
Observe that when the projectors are mutually orthogonal, we have $A_1 + \cdots + A_m \preceq \1$ and the bound clearly holds.
\lem{orthogonality-implies-soundness} describes a relaxed version of this fact.
In our application, the projectors and the state are parameterized by the key pair, and we use this bound to show that the average of pairwise overlaps is small.
We are now ready to establish our parallel repetition theorem.

\begin{lemma}\label{lem:soundness-k}
  Let $k$ be a positive integer and let $\{U_c\}_{c\in\{0,1\}^k}$ be any set of unitaries that may be implemented by $\prover$ after the challenge coins are sent.
  Let $\ket{\Psi_{pk}}$ be any state $\prover$ holds in the commitment round, and suppose $\prover$ applies $U_c$ followed by honest measurements when the coins are $c$.
  Then there exists a negligible function $\epsilon$ such that $\verifier_1,\ldots,\verifier_k$ accept $\prover$ with probability at most $2^{-k}+\epsilon(n)$.
\end{lemma}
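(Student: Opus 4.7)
The plan is to combine the pairwise near-orthogonality provided by \lem{nearly-orthogonal-measurements} with the geometric fact recorded in \lem{orthogonality-implies-soundness}, then average over the setup randomness. Fix the key tuple $(pk,sk)$, bases $h$, and randomness $s$. The acceptance probability of $\prover$ is
\begin{align*}
\Pr[\text{accept}] = 2^{-k}\,\Exp_{(pk,sk),h,s}\!\left[\,\sum_{c\in\{0,1\}^k}\bra{\Psi_{pk}}\Pi_{s,sk,c}^{U_c}\ket{\Psi_{pk}}\right].
\end{align*}
For each pair $a\neq b\in\{0,1\}^k$ fix a coordinate $i(a,b)$ with $a_{i(a,b)}\neq b_{i(a,b)}$, and let
\begin{align*}
\delta_{ab}(pk,sk,h,s)\;:=\;\bra{\Psi_{pk}}\Pi_{s,sk,a}^{U_a}\Pi_{s,sk,b}^{U_b}+\Pi_{s,sk,b}^{U_b}\Pi_{s,sk,a}^{U_a}\ket{\Psi_{pk}}.
\end{align*}
Applying \lem{orthogonality-implies-soundness} to the $2^k$ projectors $\Pi_{s,sk,c}^{U_c}$ acting on $\ket{\Psi_{pk}}$, one gets $\sum_c\bra{\Psi_{pk}}\Pi_{s,sk,c}^{U_c}\ket{\Psi_{pk}}\leq 1+\bigl(\sum_{a<b}\delta_{ab}\bigr)^{1/2}$ pointwise in $(pk,sk,h,s)$.

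Now take expectation over $(pk,sk),h,s$. Using concavity of the square root (Jensen) twice,
\begin{align*}
\Exp\!\Bigl[\bigl(\!\sum_{a<b}\delta_{ab}\bigr)^{1/2}\Bigr]\;\leq\;\Bigl(\sum_{a<b}\Exp[\delta_{ab}]\Bigr)^{1/2}.
\end{align*}
By \lem{nearly-orthogonal-measurements}, for the chosen index $i=i(a,b)$ there is a state $\rho_{ab}$ with
\begin{align*}
\Exp_{(pk,sk)}[\delta_{ab}]\;\leq\;2\,\alpha_{h_i,s_i,\rho_{ab}}^{1/2}+\negl(n).
\end{align*}
Averaging over $h,s$ and applying Jensen once more gives $\Exp_{h,s,(pk,sk)}[\delta_{ab}]\leq 2\bigl(\Exp_{h_i,s_i}[\alpha_{h_i,s_i,\rho_{ab}}]\bigr)^{1/2}+\negl(n)$. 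But $\Exp_{h_i,s_i}[\alpha_{h_i,s_i,\rho_{ab}}]$ is exactly the success probability of $\rho_{ab}$ in the amplified \prot{modified-MF}; by \lem{modified-mf} applied in the NO case this quantity is negligible in $n$ (independently of $k$, since the amplification parameter $r$ depends only on $n$). Hence $\Exp[\delta_{ab}]\leq\mu(n)$ for a single negligible function $\mu$ shared across all pairs.

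Putting the pieces together, $\sum_{a<b}\Exp[\delta_{ab}]\leq \binom{2^k}{2}\mu(n)\leq 2^{2k-1}\mu(n)$, so
\begin{align*}
\Pr[\text{accept}]\;\leq\;2^{-k}\Bigl(1+\bigl(2^{2k-1}\mu(n)\bigr)^{1/2}\Bigr)\;=\;2^{-k}+2^{-1/2}\mu(n)^{1/2},
\end{align*}
and $\mu(n)^{1/2}$ is negligible. This proves the claim with $\epsilon(n)=2^{-1/2}\mu(n)^{1/2}$.

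The main obstacle is the double application of Jensen combined with the factor $2^{2k}$ from the number of pairs: one has to be careful that the MF amplification is fixed as a function of $n$ alone (not of $k$), so that $\mu(n)$ is a universal negligible function and the blow-up $2^{2k}\mu(n)$ is absorbed by the global $2^{-k}$ prefactor that comes from averaging over challenges. A subtler point is ensuring that \lem{nearly-orthogonal-measurements} can be invoked pair-by-pair even though the ``witness state'' $\rho_{ab}$ may differ across pairs; this is fine because the MF soundness bound holds uniformly in $\rho$ for NO instances.
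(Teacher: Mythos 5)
Your proof is correct and follows essentially the same route as the paper's: the identical starting expression, \lem{orthogonality-implies-soundness} applied to the $2^k$ projectors, \lem{nearly-orthogonal-measurements} invoked pair-by-pair, and Jensen to pull the expectation inside the square roots, with the $\binom{2^k}{2}$ factor cancelled by the $2^{-k}$ prefactor. The only (immaterial) difference is bookkeeping: you absorb the $\alpha^{1/2}$ bound into a single negligible $\mu$ per pair before summing, whereas the paper carries the $\alpha^{1/2}$ terms through and ends with $2^{-k}+\mu^{1/4}+\delta^{1/2}$.
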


\begin{proof}
  The success probability of any prover in the $k$-fold protocol is
  \begin{align*}
    \Pr[\text{success}]
    &= 2^{-k}\Exp_{(pk,sk)\gets\Gen(1^\lambda,h),h,s}[\bra{\Psi_{pk}}\sum_c\Pi_{s,sk,c}^{U_c}\ket{\Psi_{pk}}]
  \end{align*}
  where $h,s$ are drawn from uniform distributions.

  Define a total ordering on $\{0,1\}^k$ such that $a<b$ if $a_i<b_i$ for the smallest index $i$ such that $a_i\neq b_i$.
  Then by \lem{orthogonality-implies-soundness}, we have
  \begin{align*}
    \Pr[\text{success}]
    &\leq 2^{-k} + 2^{-k}\Exp_{h,s}\left[\sum_{a<b} \Exp_{(pk,sk)\gets\Gen(1^\lambda,h)}[\bra{\Psi_{pk}}\Pi_{s,sk,a}^{U_a}\Pi_{s,sk,b}^{U_b}+\Pi_{s,sk,b}^{U_b}\Pi_{s,sk,a}^{U_a}\ket{\Psi_{pk}}]\right]^{1/2}.
  \end{align*}
  By \lem{nearly-orthogonal-measurements},
  there exists a negligible function $\delta$ such that
\begin{align*}
\Exp_{(pk,sk)\gets\Gen(1^\lambda,h)}[\bra{\Psi_{pk}}\Pi_{s,sk,a}^{U_a}\Pi_{s,sk,b}^{U_b}+\Pi_{s,sk,b}^{U_b}\Pi_{s,sk,a}^{U_a}\ket{\Psi_{pk}}]\leq 2\alpha_{h_{i(a,b)},\rho_{ab}}^{1/2}+\delta\,
\end{align*}
for every pair $(a,b)$.
Here $i(a,b)$ is the smallest index $i$ such that $a_i\neq b_i$ and $\rho_{ab}$ is the reduced quantum state associated with $a,b$, as guaranteed by \lem{nearly-orthogonal-measurements}.
Let $\mu$ be the soundness error of the MF protocol. We have
  \begin{align*}\nonumber
    \Pr[\text{success}]
    &\leq 2^{-k} + 2^{-k} \Exp_{h,s}\left[\sum_{a<b} \left(2\alpha_{h_{i(a,b)},s_{i(a,b)},\rho_{ab}}^{1/2}+\delta\right) \right]^{1/2} \\\nonumber
    &\leq 2^{-k} + 2^{-k} \Exp_{h,s}\left[\sum_{a<b} 2\alpha_{h_{i(a,b)},s_{i(a,b)},\rho_{ab}}^{1/2}\right]^{1/2} + 2^{-k}\sqrt{\binom{2^k}{2}}\delta^{1/2} \\\nonumber
    &\leq 2^{-k} + 2^{-k} \left[\sum_{a<b} 2\left(\Exp_{h,s}[\alpha_{h_{i(a,b)},s_{i(a,b)},\rho_{ab}}]\right)^{1/2}\right]^{1/2} + \delta^{1/2} \\\nonumber
    &\leq 2^{-k} + 2^{-k} \left[\sum_{a<b} 2\mu^{1/2}\right]^{1/2} + \delta^{1/2} \\\nonumber
    &\leq 2^{-k} + 2^{-k} \left[2^k(2^k-1)\mu^{1/2}\right]^{1/2} + \delta^{1/2} \\
    &\leq 2^{-k} + \mu^{1/4} + \delta^{1/2}
  \end{align*}
  where the second
  and third inequalities hold by Jensen's inequality.
  Amplifying the soundness of the MF protocol, $\mu$ is negligible using polynomially many copies by \lem{modified-mf}.
  Thus the soundness error is negligibly close to $2^{-k}$.
\end{proof}

We note that Mahadev shows the soundness error for a single-copy protocol is negligibly close to $3/4$ \cite{Mah18}, whereas \lem{soundness-k} implies the error can be upper bounded by $1/2+\negl(n)$.
Mahadev obtains soundness error $3/4+\negl(n)$ by considering a general prover $\prover$ who, for each basis $h$, succeeds in the test round (characterized by $\Pi_{h,sk,\test}$) with probability $1-p_{h,\test}$, in the first stage of the Hadamard round (characterized by $\Lambda_{h,sk,\hada,1}$) with probability $1-p_{h,\hada}$, and in the second stage of the Hadamard round (characterized by $\Lambda_{h,sk,\hada,2}$) with probability at most $\sqrt{p_{h,\test}}+p_{h,\hada}+\alpha_{h,\rho}+\negl(n)$ for some state $\rho$~\cite[Claim~7.1]{Mah18}. These contributions are combined by applying the triangle inequality for trace distance.
This analysis is loose since $\Lambda_{h,sk,\hada,1}$ and $\Lambda_{h,sk,\hada,2}$ commute, and $\prover$ must pass both stages to win the Hadamard round.

Finally, \lem{soundness-k} immediately implies the following theorem.

\begin{theorem}\label{thm:parallel-soundness}
Let $\mathfrak{M}^k$ be the $k$-fold parallel repetition of the three-round Mahadev protocol $\mathfrak{M}$. Under the \LWE assumption, the soundness error of $\mathfrak{M}^k$ is at most $2^{-k} + \negl(n)$.
\end{theorem}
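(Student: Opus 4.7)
The plan is to obtain \thm{parallel-soundness} as an immediate corollary of \lem{soundness-k}, which has already carried out the substantive work. The soundness error of a \QPIA is, by definition, the supremum over \QPT cheating provers $\prover'$ of the probability that $\prover'$ causes the honest verifier of $\mathfrak{M}^k$ to accept on an input $H \in \zx_{\no}$. My first step would be to argue that every such $\prover'$ can, without loss of generality, be put into the canonical form analyzed in \lem{soundness-k}, namely a state family $\{\ket{\Psi_{pk}}\}$ prepared in Round~$\prover_1$ together with a family of unitaries $\{U_c\}_{c\in\bit^k}$ applied in Round~$\prover_2$ after the challenge is revealed. This reduction is a standard deferred-measurement argument: intermediate measurements are postponed, classical coin flips are purified into the workspace register $E$, and after the commitment the register $Y$ can be frozen to its measured outcome $y$ without altering the verifier's view.

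Having reduced to the canonical form, I would then directly invoke \lem{soundness-k}, which provides the uniform bound $2^{-k} + \epsilon(n)$ on the acceptance probability of any prover of that form, for some negligible $\epsilon$. Taking the supremum over \QPT strategies preserves the bound and yields the claimed soundness error $2^{-k} + \negl(n)$. The \LWE assumption enters only through \lem{nearly-orthogonal-measurements}, the ``adaptive orthogonality'' statement that underlies \lem{soundness-k}; without it, an efficient prover could potentially correlate its accepting behavior across distinct challenges and exceed the $2^{-k}$ threshold.

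The one place where a bit of care is warranted is checking that the negligible slack $\epsilon(n)$ is genuinely uniform over the \QPT family of cheating strategies, so that passing to a supremum does not inflate it. Since $\epsilon$ is assembled from the negligible terms in \lem{nearly-orthogonal-measurements} (controlled by the \LWE security parameter, which is polynomially related to $n$) and the amplified soundness error of \prot{modified-MF} from \lem{modified-mf}, and both of these already hold against arbitrary efficient quantum adversaries, this uniformity is inherited for free. The main conceptual obstacles, namely the adaptive near-orthogonality of the accepting projectors and its promotion from pairs of challenges to all of $\bit^k$ via \lem{orthogonality-implies-soundness}, have already been handled in the preceding lemmas, so the theorem itself is essentially a bookkeeping step.
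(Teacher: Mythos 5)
Your proposal matches the paper exactly: Theorem~\ref{thm:parallel-soundness} is stated there as an immediate consequence of Lemma~\ref{lem:soundness-k}, with the canonical-form reduction for provers (state family plus challenge-dependent unitaries) already set up in the preceding ``Characterization of a prover in the $k$-fold protocol'' discussion. Your additional remarks on deferred measurement and uniformity of the negligible slack are correct bookkeeping that the paper leaves implicit.
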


We conclude with the following three-round protocol.

\begin{protocol}[Verification with instance-independent setup]\label{prot:interactive-attempt}~
\begin{proto}
  \item[Setup.]
    $\verifier$ samples random bases $h\in\bit^{nrk}$ and runs the key generation algorithm $(pk,sk)\gets\Gen(1^\lambda,h)$.
    $\verifier$ samples a string $s\in\bit^{prk}$ uniformly.
    $\verifier$ sends the public keys $pk$ to $\prover$.
  \item[Round~$\prover_1$.] $\prover$ queries $\Samp$ coherently on the witness state $\ket{\psi}^{\otimes rk}$, followed by a standard basis measurement on register $Y$.
    The outcome is sent to $\verifier$.

  \item[Round~$\verifier_2$.]
    $\verifier$ samples $c_1,\ldots,c_k\gets\{0,1\}$ and sends $c=(c_1,\ldots,c_k)$ to $\prover$.
  \item[Round~$\prover_2$.]
    For each $i\in[k]$, $j\in[r]$, $\ell\in[n]$,
    \begin{enumerate}
      \item
        if $c_i=0$, $\prover$ performs a standard basis measurement and gets $u_{ij\ell}=(w_{ij\ell},t_{ij\ell})$;
      \item
        if $c_i=1$, $\prover$ performs a Hadamard basis measurment and gets $u_{ij\ell}=(w_{ij\ell},t_{ij\ell})$.
    \end{enumerate}
    $\prover$ sends $u$ to $\verifier$.

  \item[Verdict.]
    For each $i\in[k]$,
    \begin{enumerate}
      \item If $c_i=0$,
        $\verifier$ accepts iff $\bigwedge_{j,\ell}\Chk(pk_{j\ell},w_{j\ell},t_{j\ell},y_{j\ell})=1$.
      \item
        If $c_i=1$, $\verifier$ records the set $A_i\subseteq [r]$ of consistent copies.
        For each $j\in A_i$ and $\ell\in[n]$:
        \begin{enumerate}
          \item If $h_{ij\ell}=0$, run $(b_{ij},x_{b_{ij},y_{ij}})\gets\Inv_\G(sk_{ij},y_{ij})$. Set $e_{ij\ell}=b_{ij\ell}$;
            if $h_{ij}=1$, compute $x_{0,y_{ij\ell}},x_{1,y_{ij\ell}}$ and $e_{ij\ell}=t_{ij\ell}\cdot(x_{0,y_{ij\ell}}\oplus x_{1,y_{ij\ell}})\oplus w_{ij}$.
            If any of the algorithms rejects or any of $t_{ij\ell}$ is trivial (e.g., $t_{ij\ell}=0$, see \cite{Mah18}), $\verifier$ sets $v_{ij}=0$; otherwise enters the next step.
          \item
            $\verifier$ computes the terms $S_{ij}=\term(H,s_{ij})$ for each $i\in[k],j\in[r]$.
            Set $v_{ij}=1$ if $(e_{ij\ell})_{\ell\in S_{ij}}$ satisfies $M_{-m_{S_{ij}}}$ and $v_{ij}=0$ otherwise.
        \end{enumerate}
        Then $\verifier$ sets $v_i=1$ if $\sum_{j\in A_i} v_{ij}\geq (2-a-b)|A_i|/4$ and 0 otherwise.
    \end{enumerate}
    $\verifier$ accepts iff $v_i=1$ for every $i\in[k]$. \\
    The verdict function is $\verdict(H,s,sk,y,c,u) := \bigwedge_{i=1}^k v_i$.
\end{proto}
\end{protocol}
\begin{theorem}\label{thm:interactive}
  For $r=\omega(\frac{\log n}{(b-a)^2})$ and $k=\omega(\log n)$,
  \prot{interactive-attempt} is a quantum prover interactive argument for $\zx_{a,b}$ with negligible completeness error and soundness error. \end{theorem}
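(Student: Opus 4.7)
The plan is to derive the two claims separately by assembling results already established in the excerpt: completeness follows from Lemma~\ref{lem:modified-mf} applied to each of the $k$ parallel copies, and soundness follows immediately from Theorem~\ref{thm:parallel-soundness}.

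For completeness, I would fix the honest prover $\prover$ who prepares $\ket{\psi}^{\otimes rk}$ (ground state copies) and coherently evaluates $\Samp$ on each qubit with the supplied public keys. In any single instance $i\in[k]$, the state on registers $WXY$ is negligibly close to the idealized superposition displayed in \eq{mah-sketch}. If $c_i=0$, measuring $WX$ in the computational basis produces, for every $j,\ell$, a triple $(w_{ij\ell},t_{ij\ell},y_{ij\ell})$ in the support of $f_{pk_{ij\ell}}$, so $\Chk$ accepts with overwhelming probability. If $c_i=1$, the verifier's extraction step~\ref{h1-check} (as in \prot{Mahadev}) produces bits $e_{ij\ell}$ whose joint distribution, given the random bases $h_{ij\ell}$ used in $\Gen$, is negligibly close to the distribution of honest $X/Z$ basis measurements on $\ket{\psi}^{\otimes r}$. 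Then the verdict step is exactly the decision rule of \prot{modified-MF} applied to that distribution, with independently sampled $S_{ij}=\term(H,s_{ij})$. By Lemma~\ref{lem:modified-mf}, for $r=\omega(\log n/(b-a)^2)$ this single-instance test accepts with probability $1-\negl(n)$. A union bound over the $k=\omega(\log n)$ instances keeps the total completeness error negligible.

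For soundness, observe that \prot{interactive-attempt} is literally the protocol $\mathfrak{M}^k$ described in Section~\ref{sec:parallel}: the setup, commitment, challenge, and verdict structure all coincide, and the verdict function $\bigwedge_{i=1}^k v_i$ is exactly the product acceptance projector $\Pi_{s,sk,c}$ from \eq{prod-meas}. Hence Theorem~\ref{thm:parallel-soundness} applies verbatim, giving soundness error at most $2^{-k}+\negl(n)$. For $k=\omega(\log n)$ this is negligible, completing the proof.

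The hard part of the argument has already been discharged upstream, in Lemmas~\ref{lem:perfect-succ-prob}--\ref{lem:soundness-k}; what remains here is bookkeeping. The only point one needs to be slightly careful about is that the verifier's Hadamard-round decoding in \prot{interactive-attempt} (which first extracts $e_{ij\ell}$ via $\Inv_\F,\Inv_\G$, then applies the $M_{-m_{S_{ij}}}$ test on consistent copies) matches the projector $\Pi_{s,sk,\hada}=\Lambda_{sk,\hada,1}\Lambda_{s,sk,\hada,2}$ used in Section~\ref{sec:parallel}; this holds by inspection of the two checks, and the commuting structure of $\Lambda_{sk,\hada,1}$ and $\Lambda_{s,sk,\hada,2}$ means the order of the two stages is immaterial.
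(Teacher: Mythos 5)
Your proposal is correct and matches the paper's intended argument exactly: the theorem is stated as an immediate consequence of \expref{Lemma}{lem:modified-mf} (completeness of the modified MF protocol, inherited by the honest prover through the Mahadev measurement protocol) and \expref{Theorem}{thm:parallel-soundness} (soundness error $2^{-k}+\negl(n)$, negligible for $k=\omega(\log n)$), with only the bookkeeping you describe. No further comment is needed.
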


\section{A classical zero-knowledge argument for $\QMA$}\label{sec:zk}

To turn \prot{interactive-attempt} into a zero-knowledge protocol, we first consider an intermediate protocol in which $\prover$ first encrypts the witness state $\ket{\psi}^{\otimes rk}$ with a quantum one-time pad.
Then in Round $\prover_2$, $\prover$ sends the one-time pad key $\beta,\gamma$ along with the response $u$.
In the verdict stage, $\verifier$ uses the keys to decrypt the response.
We denote the verdict function as
\begin{align}\label{eq:verdict}
\verdict'(H,s,sk,y,c,\beta,\gamma,u):=\verdict(H_{\beta,\gamma}, s, sk, y, c, u)
\end{align}
where $H_{\beta,\gamma}:= X^\beta Z^\gamma H Z^\gamma X^\beta$ is the instance conjugated by the one-time pad.

Obviously, this is not zero-knowledge yet, as the verifier can easily recover the original measurement outcomes on the witness state. To address this issue, we take the approach of \cite{BJSW16,CVZ19} and invoke a NIZK protocol for \NP languages. The language $\LL$ that we consider is defined as follows:
\begin{align*}
  \LL:=\{&(H,s,sk,\xi,y,c,\chi):~\exists~\witness=(\beta,\gamma,u,r_1,r_2),\\
  & \xi = \commit(u;r_1)
  \wedge \chi = \commit(\beta,\gamma; r_2) \\
  & \wedge \verdict'(H,s,sk,y,c,\beta,\gamma,u)=1
  \},
\end{align*}
where $r_1,r_2$ are the randomness for a computationally secure bit commitment scheme (see \app{standard-primitives}).
However, this alone is insufficient since, to agree on an instance without introducing more message exchanges, $\verifier$ must reveal $sk,s$ before $\prover$ sends a NIZK proof. Revealing $sk,s$ enables a simple attack on soundness:
$\prover$ can ensure the verifier accepts all instances by using the secret key to forge a valid response $u$, committing to it, and computing the NIZK proof.

The solution is to invoke a quantum-secure classical FHE scheme and to let $\prover$ homomorphically compute a NIZK proof.
This requires $\prover$ to only use an encrypted instance.
In the setup phase, $\prover$ is given the ciphertexts $csk=\FHE.\Enc_{hpk}(sk)$ and $cs=\FHE.\Enc_{hpk}(s)$.
Next, in Round~$\prover_2$, $\prover$ computes
$cx=\FHE.\Enc_{hpk}(x)$ where $x:=(H,s,sk,\xi,y,c,\chi)$
since the partial transcript $(y,c,\xi,\chi)$ has already been fixed.
$\prover$ then computes
\begin{align*}
ce=\FHE.\Eval_{hpk}(\NIZK.\aP,cc,cx,c\tau)= \FHE.\Enc_{hpk}(\NIZK.\aP(\crs,x,\witness)),
\end{align*}
where $c\tau=\FHE.\Enc_{hpk}(\witness)$,
and sends $ce$ to $\verifier$.
Finally, $\verifier$ decrypts the encrypted NIZK proof $ce$ and outputs $\NIZK.\aV(\crs,x,e)$.
The above transformation yields a three-message zero-knowledge protocol for quantum computation with trusted setup from a third party, described as follows.

\begin{protocol}[Setup phase $\setup(\lambda,N,M)$]\label{prot:setup}
  The algorithm $\setup$ takes two integers $N,M$ as input, and outputs two strings $\st_\verifier,\st_\prover$ with the following steps.
  \begin{enumerate}
    \item Run $\crs\gets\NIZK.\Setup(1^\lambda)$.
    \item Sample uniform bases $h\gets\bit^N$ and run $(pk,sk)\gets\Gen(1^\lambda,h)$.
    \item Run the FHE key generation algorithm $(hpk,hsk)\gets\FHE.\Gen(1^\lambda)$.
    \item Run encryption on the secret key $csk\gets \FHE.\Enc_{hpk}(sk)$.
    \item Choose keys $(\beta,\gamma)$ and randomness $r_1$ uniformly and compute $\xi=\commit(\beta,\gamma;r_1)$.
    \item Sample a random string $s_1,\ldots,s_M\in\bit^{p}$ (see \rmk{alpha}) uniformly and compute its encryption $cs=\FHE.\Enc_{hpk}(s)$.
  \end{enumerate}
  Output $\st_\verifier=(\crs,sk,hsk,hpk,\xi)$ and $\st_\prover=(\crs,pk,hpk,csk,cs,\beta,\gamma,r_1)$.
\end{protocol}
\begin{protocol}[An interactive protocol]\label{prot:interactive}~
\begin{proto}
  \item[Setup.]
    Run $\st_\verifier,\st_\prover\gets\setup(\lambda,nrk,rk)$.
    Send $\st_\verifier=(\crs,sk,hsk,hpk,\xi)$ to $\verifier$ and $\st_\prover=(\crs,pk,hpk,csk,cs,\beta,\gamma,r_1)$ to $\prover$.
  \item[Round~$\prover_1$.]
    $\prover$ aborts if $pk$ is invalid.
    $\prover$ queries $\Samp$ coherently on the witness state $X^\beta Z^\gamma\ket{\psi}^{\otimes rk}$.

  \item[Round~$\verifier_2$.]
    $\verifier$ samples $c_1,\ldots,c_k\gets\{0,1\}$ and sends $c=(c_1,\ldots,c_k)$ to $\prover$.
  \item[Round~$\prover_2$.]
    For each $i\in[k]$, $j\in[r]$, $\ell\in[n]$,
    \begin{enumerate}
      \item
        if $c_i=0$, $\prover$ performs a standard basis measurement and gets $u_{ij\ell}=(w_{ij\ell},t_{ij\ell})$.
      \item
        if $c_i=1$, $\prover$ performs a Hadamard basis measurement and gets $u_{ij\ell}=(w_{ij\ell},t_{ij\ell})$.
    \end{enumerate}
    $\prover$ sends $\chi:=\commit(u; r_2)$ and
    \begin{align*}
      ce:=\FHE.\Eval_{hpk}(\NIZK.\aP, cc,cx,c\tau),
    \end{align*}
    where $cc$, $cx$ and $c\tau$ are the encryptions of $\crs$, $x$ and $\tau$ respectively.

  \item[Verdict.]
    $\verifier$ accepts if
    $\NIZK.\aV(\crs,x,\FHE.\Dec_{hsk}(ce))=1$.
\end{proto}
\end{protocol}

\subsection{Completeness and soundness}

\begin{theorem}\label{thm:soundness-zk}
  \prot{interactive} has  negligible  completeness and soundness errors.
\end{theorem}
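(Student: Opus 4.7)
Completeness is routine. An honest prover OTP-encrypts the witness state with keys $(\beta,\gamma)$ received during setup, then runs the honest strategy of \prot{interactive-attempt} on the conjugated Hamiltonian $H_{\beta,\gamma}$, which shares the spectrum of $H$. By \thm{interactive}, this yields measurement data $u$ forming a valid witness $\tau=(\beta,\gamma,u,r_1,r_2)$ for $x\in\LL$ with probability $1-\negl(n)$. NIZK completeness together with FHE correctness of $\Eval$ applied to the circuit $\NIZK.\aP$ on the encrypted inputs then ensures that $\Dec_{hsk}(ce)$ is an accepting NIZK proof with overwhelming probability.

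For soundness, my plan is a hybrid argument along the sequence $H_0,H_1,H_2,H_3,H_4$ listed earlier in the paper, where $H_0$ is \prot{interactive-attempt} and $H_4$ is \prot{interactive}. Let $\epsilon_i$ denote the maximum acceptance probability of any \QPT cheating prover in $H_i$ on a no-instance $H\in\zx_{\no}$. \thm{interactive} gives $\epsilon_0=\negl(n)$, and this persists under Pauli conjugation of $H$ because conjugation preserves the spectrum. The transitions $H_0\to H_1\to H_2$ are essentially syntactic: $H_1$ adds a verifier-sampled OTP pair $(\beta,\gamma)$ revealed to the prover together with the commitment $\xi$ (which the prover trivially opens), and switches from $\verdict$ to $\verdict'$, but accepting $\verdict'$ applied to $H$ with keys $(\beta,\gamma)$ is, by \eq{verdict}, equivalent to accepting $\verdict$ on $H_{\beta,\gamma}\in\zx_{\no}$; $H_2$ only adds a self-commitment $\chi$ to $u$ that the prover trivially opens. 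This yields $\epsilon_2\le\epsilon_0+\negl(n)$. For $H_2\to H_3$, semantic security of the FHE scheme against \QPT distinguishers shows that replacing $csk,cs$ with encryptions of zero changes any cheater's winning probability by at most $\negl(n)$, so $\epsilon_3\le\epsilon_2+\negl(n)$.

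The main obstacle is the final transition $H_3\to H_4$. A winning $H_4$ prover outputs $(y,\chi,ce)$ with $\NIZK.\aV(\crs,x,\Dec_{hsk}(ce))=1$ for $x=(H,s,sk,\xi,y,c,\chi)$, where $x$ is fixed by the transcript together with the verifier's setup randomness and can be computed by an adversary against NIZK that simulates both the setup and $\prover^\ast$. By adaptive soundness of NIZK, this event forces $x\in\LL$ up to negligible error, producing $(\beta,\gamma,u,r_1,r_2)$ with $\xi=\commit(\beta,\gamma;r_1)$, $\chi=\commit(u;r_2)$, and $\verdict'(H,s,sk,y,c,\beta,\gamma,u)=1$. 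Computational binding of $\commit$ pins $\xi$'s opening to the values sampled by the honest setup and pins $\chi$ to a unique $u^\ast$ (w.h.p.), so the probability of winning $H_4$ is bounded by the probability that the $\chi$ output by the cheater opens to some $u^\ast$ satisfying $\verdict'=1$. The delicate step is reducing to $H_3$, where the cheater must explicitly output $(u,r_2)$: NIZK adaptive soundness provides existence rather than extraction of a witness, so one must argue that the very procedure by which $\prover^\ast$ produces $\chi$ determines $(u^\ast,r_2)$, allowing a \QPT reduction to output them. Formalizing this bridge is the technical heart of the proof; once done, $\epsilon_4\le\epsilon_3+\negl(n)$ and the chain collapses to $\epsilon_4=\negl(n)$, giving the claimed soundness.
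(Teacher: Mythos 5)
Your hybrid chain is exactly the paper's: the same sequence $H_0,\ldots,H_4$, the same spectrum-preservation argument for the one-time pad (via an intermediate hybrid where $(\beta,\gamma)$ is given to both parties in the clear), the same ``adding a commitment only shrinks the acceptance event'' step for $H_1\to H_2$, and the same appeal to FHE semantic security (encrypting $(0,0)$ in place of $(s,sk)$) for $H_2\to H_3$. So the first four steps match the paper's Lemmas on $A_1,A_2,A_3$ essentially verbatim.

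The step you leave open, $H_3\to H_4$, is closed in the paper \emph{without} any witness extraction, which is precisely the obstacle you were trying to route around. The argument is a case split on whether the statement $x=(H,s,sk,\xi,y,c,\chi)$ produced by the cheating prover lies in $\LL$. First, because the commitment scheme is \emph{perfectly} binding (not merely computationally binding, as you wrote --- perfect binding is what makes this clean), an accepting view determines a \emph{unique} candidate witness $(u,\beta,\gamma,r_1,r_2)$, or else $x\notin\LL$; in particular $x\in\LL$ is a well-defined event about the transcript, pinned down by $\xi$, $\chi$, and the setup randomness, with no reference to what the prover ``knows.'' The paper then argues that $\Pr[x\in\LL]$ must be negligible, since a prover achieving $x\in\LL$ non-negligibly would translate into a successful prover for $H_3$ (where the openings are sent in the clear), contradicting $A_3\le\negl(n)$. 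On the complementary event $x\notin\LL$, one does not need to extract anything: the adaptive soundness of the NIZK directly bounds the probability that the prover produces an accepting proof of a false statement, and the fact that the proof travels FHE-encrypted is irrelevant because the reduction holds $hsk$. So the ``existence vs.\ extraction'' tension you flag dissolves: the NIZK is only ever invoked on statements that are false with overwhelming probability, where plain soundness (not knowledge soundness) suffices. With that step in place your chain closes exactly as you intended.
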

\begin{proof}
  The completeness follows from the correctness of $\FHE$, the completeness of $\NIZK$, and the completeness of \prot{interactive-attempt}.

To prove soundness, we consider the following hybrid protocols:
\hybrids

 We show in \app{soundness-zk} that for each $H_i$, the maximum acceptance probability of any bounded prover is at most negligible for no instances, proving soundness.
\end{proof}

\subsection{The zero-knowledge property}
\label{sec:zk-property}
To show \prot{interactive} is zero-knowledge, we consider the following simulator. \\[1em]
\textbf{Simulator $\simulator(H)^{\verifier_2^*}$.}
  \begin{proto}
  \item[Setup.] $\simulator$ samples pair $\st_\verifier,\st_\prover$ according to the correct distributions.
    \item[$\prover_1$.] For each bit, $\simulator$ samples random $b_{ij\ell},x_{ij\ell}$ and queries the function $f_{pk_{ij\ell}}$ to get a sample $y_{ij\ell}$.
      $\simulator$ computes $\xi=\commit(\beta,\gamma;r_1)$ for random $\beta$ and $\gamma$ and feeds $\verifier^*$ with $y$.
    \item[$\verifier_2$.] $\simulator$ queries $\verifier_2^*$ to get coins $c=(c_1,\ldots,c_k)$.
    \item[$\prover_2$.]
      $\simulator$ uses $\st_\verifier$ to compute a valid response $u$.\footnote{The response $u$ is computed as follows: for each $i$ such that $c_i=0$, $\simulator$ sets $u_{ij\ell}=(b_{ij\ell},x_{ij\ell})$.
        For each $i$ such that $c_i=1$, $\simulator$ computes the terms $S_{ij}=\term(H,s_{ij})$. For each copy $i,j$ such that $S_{ij}$ is consistent with $h_{ij}$, do the following:
        $\simulator$ samples nontrivial $t_{ij\ell}$ uniformly and uses the secret key $sk_{ij\ell}$ to compute the hardcore bit $o_{ij\ell}=t_{ij\ell}\cdot(x_{0y_{ij\ell}}\oplus x_{1y_{ij\ell}})$.
        Then for each copy $i,j$ and qubits $\ell_1,\ell_2\in S_{ij}$ such that $S_{ij\ell_1}=X$ (and $S_{ij\ell_2}=Z$), set $b_{ij\ell_2}'$ randomly and $b_{ij\ell_1}'= \beta_{ij\ell}\oplus b_{ij\ell_2}\oplus\frac{1-m_{S_{ij}}}{2}$; for $\ell\notin S_{ij}$, samples $b_{ij\ell}'$ uniformly. The response $u_{ij\ell}=(b_{ij\ell}'\oplus o_{ij\ell}\oplus\gamma_{ij\ell} ,t_{ij\ell})$.
      For inconsistent copies, $\simulator$ samples a random response. }
      $\simulator$ computes the commitment $\chi=\commit(u;r_2)$ and $ce=\FHE.\Enc_{hpk}(\NIZK.\aS(x))$.
  \end{proto}
  Finally, $\simulator$ outputs $(\st_\verifier,\xi,y,c,\chi,ce)$.

To prove zero-knowledge, we consider an intermediate (quantum) simulator $\simulator_Q$ that simulates the interaction between $\prover$ and $\verifier^*$, but that sets the last message as
$ce=\FHE.\Enc_{hpk}(\NIZK.\aS(x))$ instead of the $\NIZK$ proof.

\begin{lemma}\label{lem:zk-1}
  The output distributions of the original protocol and $\simulator_Q$ are computationally indistinguishable.
\end{lemma}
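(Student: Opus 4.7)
The plan is a two-hybrid argument that modifies only the last message $ce$ of the protocol: first replace the homomorphic evaluation with a fresh encryption of the honest NIZK proof (by circuit privacy of $\FHE$), and then replace the honest NIZK proof with the simulated one (by zero-knowledge of $\NIZK$). Every other step of the prover--in particular, the preparation and measurement of the encrypted witness state, the sampling of $\beta,\gamma,r_1,r_2$, the commitments $\xi$ and $\chi$, and the construction of the classical input $x$--is identical in both the original protocol and $\simulator_Q$, so the transitions only touch the value of $ce$.

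In more detail, let $\mathcal{H}$ be the hybrid identical to the original protocol except that $\prover$ computes $ce = \FHE.\Enc_{hpk}(\NIZK.\aP(\crs,x,\witness))$ directly, rather than by homomorphic evaluation of $\NIZK.\aP$ on $(cc,cx,c\witness)$. Since $\FHE$ is circuit-private, the distribution of the evaluated ciphertext is computationally indistinguishable from that of a fresh encryption of the same plaintext, even given $hsk$; by correctness of $\FHE$ the two plaintexts coincide, so the full view of $\verifier^*$ in $\mathcal{H}$ is indistinguishable from its view in the original protocol. Next, transition from $\mathcal{H}$ to $\simulator_Q$ by replacing $(\crs, \NIZK.\aP(\crs,x,\witness))$ with $\NIZK.\aS(x)$ and setting $ce$ to be an $\FHE$ encryption of the simulated proof. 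Because $(x,\witness)\in\R_\LL$ whenever $\prover$ runs honestly on a valid witness state (using the completeness of \prot{interactive-attempt}), the adaptive zero-knowledge property of $\NIZK$ shows these distributions are computationally indistinguishable. The reduction samples $(hpk,hsk)$ and all other setup itself, runs the honest quantum prover to obtain $\witness$, forwards $x$ to its NIZK challenger, and encrypts the returned proof under $hpk$; any distinguisher between the two hybrids yields a NIZK distinguisher. The resulting distribution is exactly the output of $\simulator_Q$.

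The main obstacle will be checking that both reductions are sound against a quantum $\verifier^*$. This requires quantum security of both primitives: the Peikert--Shiehian \NIZK is zero-knowledge against quantum distinguishers, and the assumed $\FHE$ scheme is circuit-private against quantum adversaries even when they hold $hsk$; both are standing assumptions of the paper. A subtler point is that the witness $\witness=(\beta,\gamma,u,r_1,r_2)$ is obtained by measuring a quantum state prepared from the witness register, so the $\NIZK$ reduction must actually execute the quantum prover (and its measurements) before invoking the classical NIZK challenger--this is permitted because the reduction, like $\simulator_Q$, is itself a quantum algorithm. Combining the two hybrid transitions yields the claimed computational indistinguishability.
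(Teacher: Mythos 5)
Your proposal matches the paper's proof: it uses exactly the same two hybrid transitions, first invoking circuit privacy of the FHE scheme to replace the homomorphically evaluated ciphertext with a fresh encryption of the honest NIZK proof, and then invoking the zero-knowledge property of the NIZK to swap in the simulated proof. The additional care you take about the reductions being quantum algorithms and about circuit privacy holding even against a holder of $hsk$ is consistent with the paper's (statistical) malicious circuit-privacy assumption and does not change the argument.
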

\begin{proof}
  We prove this by showing an intermediate quantum simulator as follows:
  \begin{enumerate}
    \item $\tilde{\simulator}_0$: A quantum simulator that fully simulates $\prover$ and $\verifier$ in the original protocol.
    \item $\tilde{\simulator}_{1}$:
      Same as $\tilde{\simulator}_{0}$, except that $\tilde{\simulator}_{1}$ sets $ce=\FHE.\Enc_{hpk}(\NIZK.\aP(\crs,x,\witness))$, instead of computing it homomorphically  (this is possible since the simulator has access to the private keys).
    \item $\tilde{\simulator}_{2}$: Same as $\tilde{\simulator}_{1}$, except that $\tilde{\simulator}_2$ sets $ce=\FHE.\Enc_{hpk}(\NIZK.\aS(x))$. Notice that this is $\simulator_Q$.
  \end{enumerate}

  The output distribution of $\tilde{\simulator}_0$ is indistinguishable from the output of $\tilde{\simulator}_1$ by the circuit privacy property of the FHE scheme and that the encryption of $sk$ and $s$ are provided by a trusted party.

  The output distribution of $\tilde{\simulator}_1$ is indistinguishable from the output of $\tilde{\simulator}_2$ by the computational zero-knowledge property of $\NIZK$.
\end{proof}

\begin{lemma}\label{lem:zk-2}
  The output distributions of $\simulator_Q$ and $\simulator$ are computationally indistinguishable.
\end{lemma}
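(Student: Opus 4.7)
The plan is to bridge $\simulator_Q$ and $\simulator$ by a short hybrid argument, addressing each component of the output $(\st_\verifier, \xi, y, c, \chi, ce)$ in turn. The setup objects $\st_\verifier$ and $\xi$ are drawn from the same distribution in both cases (both invoke $\setup$), and the final ciphertext $ce = \FHE.\Enc_{hpk}(\NIZK.\aS(x))$ is computed by an identical algorithm once the statement $x$ is determined. So it suffices to show that the joint distribution of the remaining quantities $(y, c, \chi)$ is computationally close and that the statement $x$ fed to $\NIZK.\aS$ lies in $\LL$ in both simulators with overwhelming probability.

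First I would show that the marginal distribution of $y$ is statistically close between the two. In $\simulator_Q$, $y$ arises from applying $\Samp$ coherently to the one-time-padded witness $X^\beta Z^\gamma \ket{\psi}^{\otimes rk}$ and measuring the $Y$ register, so by \eq{samp} the state on $(W,X,Y)$ is negligibly close to a superposition indexed by $(b,x)$ whose $W$-amplitudes are determined by the padded witness qubits. Averaging over the fresh uniform keys $\beta$ and $\gamma$ randomizes those amplitudes so that each $b \in \bit$ arises with probability $1/2$ and each $x$ with probability $1/|\X|$; the induced marginal of $y_i$ is therefore $\frac{1}{2|\X|}\sum_{b,x} p_{f_{pk_i}(b,x)}(\cdot)$, which is exactly what $\simulator$ produces by sampling $(b,x)$ uniformly and querying $f_{pk_i}$. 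Since both simulators then forward $y$ to the same verifier algorithm $\verifier_2^*$, the induced distribution of $(y,c)$ is statistically close in the two experiments. The commitment $\chi = \commit(u;r_2)$ is then a function of fresh independent randomness $r_2$ and some $u$-value; although the two simulators produce different $u$-values (one via quantum measurement on the post-$Y$ state, the other via the explicit classical recipe in its footnote), the computational hiding of $\commit$ makes the resulting distributions of $\chi$ computationally indistinguishable regardless of $u$.

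The main obstacle is ensuring that the statement $x = (H, s, sk, \xi, y, c, \chi)$ whose simulated NIZK proof $\NIZK.\aS(x)$ is encrypted inside $ce$ actually lies in $\LL$ with overwhelming probability in both experiments, since the zero-knowledge guarantee of $\NIZK$ (already invoked in \lem{zk-1}) is only useful for true statements. The simulator's recipe is explicitly engineered so that its $u$ deterministically satisfies $\verdict'(H,s,sk,y,c,\beta,\gamma,u)=1$, placing $x \in \LL$ with probability one. For $\simulator_Q$, the honestly measured $u$ passes $\verdict' = 1$ with probability $1-\negl(n)$ by the amplified completeness of \prot{interactive-attempt} (\thm{interactive}), combined with the identity $\verdict'(H,\ldots,\beta,\gamma,u) = \verdict(H_{\beta,\gamma},\ldots,u)$ from \eq{verdict} and the fact that $X^\beta Z^\gamma \ket{\psi}^{\otimes rk}$ is a valid witness for the padded instance $H_{\beta,\gamma}$. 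Combining these four observations---statistical closeness of $(y,c)$, computational hiding of $\commit$ for $\chi$, validity of $x$ in both cases, and semantic security of $\FHE$ together with zero-knowledge of $\NIZK$ for $ce$---yields the desired computational indistinguishability of the outputs of $\simulator_Q$ and $\simulator$.
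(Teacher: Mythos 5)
Your overall strategy (match the marginal of $y$ by averaging over the one-time pad, invoke hiding for $\chi$, and note that $ce$ is computed by the same algorithm from the instance $x$ in both simulators) is essentially the paper's, but there is one genuine gap, and it is precisely the subtlety this lemma exists to handle. You reduce the claim to showing that $(y,c,\chi)$ is close, on the grounds that ``$\st_\verifier$ and $\xi$ are drawn from the same distribution in both cases.'' The marginals of $\xi$ indeed agree, but the output of $\simulator_Q$ contains $\xi=\commit(\beta,\gamma;r_1)$ \emph{jointly} with a $y$ obtained by measuring $\Samp$ applied to $X^\beta Z^\gamma\ket{\psi}^{\otimes rk}$, so $y$ is correlated with the very $(\beta,\gamma)$ committed in $\xi$; in $\simulator$, $y$ is sampled independently of $(\beta,\gamma)$. (The paper flags exactly this in a footnote: take $\ket{\psi}=\ket{0^n}$ and the correlation is visible.) Consequently your step ``averaging over the fresh uniform keys $\beta$ and $\gamma$ randomizes those amplitudes'' is not available as stated: conditioned on $\xi$, which perfectly binds $(\beta,\gamma)$, there is nothing left to average over, and the conditional distribution of $y$ given $\xi$ genuinely differs between the two experiments.

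The missing idea is an intermediate simulator $\tilde{\simulator}_Q$, identical to $\simulator_Q$ except that the commitment is $\xi'=\commit(\beta',\gamma';r_1)$ for fresh independent $\beta',\gamma'$ while the state is still padded with $\beta,\gamma$. Computational hiding of the commitment gives $\simulator_Q\approx\tilde{\simulator}_Q$; only \emph{after} this decoupling is $(\beta,\gamma)$ unconstrained in the distinguisher's view, so the padded witness averages to the maximally mixed state and your (correct, and identical to the paper's) computation of the distribution of $y$ applies to give $\tilde{\simulator}_Q\approx\simulator$. One further remark: your paragraph about ensuring $x\in\LL$ is not needed for this lemma. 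Both $\simulator_Q$ and $\simulator$ already compute $ce$ as an encryption of $\NIZK.\aS(x)$, so it suffices that the distributions of the instances $x$ agree; the truth of $x$ only matters in \lem{zk-1}, where the zero-knowledge property of the $\NIZK$ is invoked.
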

\begin{proof}
  The main difference between the protocols is that the instance $x$ for the \NP language $\LL$ is generated by the quantum simulator $\simulator_Q$, who honestly measures the quantum witness.
  Here again we introduce an intermediate step $\tilde{\simulator}_Q$ which is the same as $\simulator_Q$ but the commitment
   $\xi'=\commit(\beta',\gamma';r_1)$  is provided instead of the commitment of the one-time pad key $\beta,\gamma$, for uniformly independently random $\beta'$ and $\gamma'$.

  Notice that the output distribution of $\simulator_Q$ is computationally indistinguishable from the output of $\tilde{\simulator}_{Q}$ from the hiding property of the commitment scheme, and we just need to argue now that the output of $\tilde{\simulator}_{Q}$ is computationally indistinguishable from the output of $\simulator$.

  Since the last message is obtained by homomorphic evaluation of $\NIZK.\aS$ on the encryption of $\crs$ and $x$, it suffices to show the instances from the simulators are computationally indistinguishable.

  Notice that in $\tilde{\simulator}_Q$, $\xi'$ is independent of the other parts of the message, and in particular there is no information leaked from $\beta$ and $\gamma$ that were used to one-time pad the state. In this case, the distribution
  of the output $\Samp$ on the maximally mixed state, i.e.,
  for each qubit $(i,j,\ell)\in[k]\times[r]\times[n]$, $y_{ij\ell}$ is obtained by measuring $\Samp(pk,\1/2)$, is negligibly close to
  \begin{align*}
    \frac{1}{2|\X|}\sum_{b\in\bit}\sum_{x,x'\in\X} \proj{b}_{W_{ij\ell}}\otimes \ket{x}\!\bra{x'}_{X_{ij\ell}} \otimes \ket{\psi_{f_{pk}(b,x)}}\!\bra{\psi_{f_{pk}(b,x')}}_{Y_{ij\ell}}.
  \end{align*}
  Measuring register $Y_{ij\ell}$, by the injective pair property, there exists a unique $x$ such that $y$ lies in the support of $f_{pk}(b,x)$, and thus upon measurement the state collapses to
  \begin{align*}
    \frac{1}{2|\X|}\sum_{b,x}\sum_{y} f_{pk}(b,x)(y) \proj{b}_{W_{ij\ell}}\otimes \ket{x}\!\bra{x}_{X_{ij\ell}}\otimes \proj{y}_{Y_{ij\ell}},
  \end{align*}
  which is identical to the state obtained by querying $f_{pk}$ on uniform $b,x$.
  This implies that the marginal distributions of $y$ on the two experiments are statistically close.
  By the hiding property of the commitment scheme, the distributions over $\xi$ are computationally indistinguishable.\footnote{Note that $y$ of $\simulator$ is independent of $\beta,\gamma$, whereas in \prot{interactive}, $y,(\beta,\gamma)$ may not be independent; for example, consider the special case where the witness is $\ket{\psi}=\ket{0^n}$.}

  In Round $\verifier_2$, since the distribution over $c$ depends on $\st_\verifier,\xi,y$,
  the distributions over $(\st_\verifier,\xi,y,c)$ are computationally indistinguishable.

  In Round $\prover_2$, by the hiding property, conditioned on $(\st_\verifier,\xi,y,c)$, the distributions over $\chi$ are computationally indistinguishable.\footnote{Namely, the commitment to a response sampled from $\prover$'s witness state is computationally indistinguishable from a deterministically computed response by $\simulator$.}
\end{proof}
\noindent
From \lem{zk-1} and \lem{zk-2}, we conclude the following theorem.
\begin{theorem}
  Assuming the existence of a non-interactive bit commitment scheme with perfect binding and computational hiding, \prot{interactive} is zero-knowledge.
\end{theorem}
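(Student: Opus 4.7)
The plan is to observe that zero-knowledge has been set up so that the main technical work is entirely contained in Lemmas~\ref{lem:zk-1} and~\ref{lem:zk-2}, and what remains is essentially a triangle-inequality argument. First I would recall the definition of computational zero-knowledge: we need an efficient simulator $\simulator$ such that for every \PPT (possibly malicious) verifier $\verifier^*$, the distribution of $\simulator(H)^{\verifier_2^*}$ is computationally indistinguishable from the distribution of transcripts produced by the real interaction between the honest quantum prover $\prover$ (with a valid witness $\ket{\psi}$) and $\verifier^*$ in \prot{interactive}. I would then verify that $\simulator$ as described above is classical and polynomial-time: it samples $\st_\verifier,\st_\prover$ via $\setup$, uses $sk$ (from $\st_\verifier$) to classically ``fake'' a valid response $u$ from NTCF/NTIF preimages and hardcore bits, commits to this $u$, and invokes the NIZK simulator $\NIZK.\aS$ homomorphically via $\FHE.\Enc_{hpk}$. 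All of these steps are \PPT.

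Next, I would combine the two lemmas. Let $\mathrm{Real}$ denote the output distribution of the real interaction, $\simulator_Q$ the quantum intermediate simulator (which runs the honest prover's measurements on $X^\beta Z^\gamma \ket{\psi}^{\otimes rk}$ but replaces the last message by $\FHE.\Enc_{hpk}(\NIZK.\aS(x))$), and $\simulator$ the final efficient classical simulator. By \lem{zk-1}, $\mathrm{Real}$ and $\simulator_Q$ are computationally indistinguishable; by \lem{zk-2}, $\simulator_Q$ and $\simulator$ are computationally indistinguishable. Computational indistinguishability satisfies the triangle inequality (a distinguisher for $\mathrm{Real}$ vs.\ $\simulator$ with advantage $\epsilon$ yields a distinguisher against one of the two pairs with advantage $\geq \epsilon/2$), so $\mathrm{Real}$ and $\simulator$ are computationally indistinguishable. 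Combined with the efficiency of $\simulator$, this establishes the zero-knowledge property.

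There is no real obstacle at this step; the two lemmas have done the heavy lifting. The only minor care I would take is to make the hybrid argument uniform in the choice of malicious verifier $\verifier_2^*$: since both \lem{zk-1} and \lem{zk-2} are stated for the simulators $\simulator_Q,\simulator$ which have oracle access to (or simulate) $\verifier_2^*$, the indistinguishability guarantees hold for arbitrary \PPT $\verifier_2^*$, so the triangle-inequality composition gives a single efficient simulator that works against all \PPT malicious verifiers. I would also note explicitly that the hypothesis on the commitment scheme (perfect binding, computational hiding) is used inside \lem{zk-2} (to argue that $\xi$ and $\chi$ reveal no information about $\beta,\gamma$ or about $u$), while the other building blocks (FHE circuit privacy, NIZK zero-knowledge) are invoked in \lem{zk-1}; consequently the hypotheses in the theorem statement match exactly what the proof requires.
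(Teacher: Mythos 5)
Your proposal is correct and matches the paper's own argument: the paper defines the classical simulator $\simulator$ and the intermediate quantum simulator $\simulator_Q$, proves \lem{zk-1} and \lem{zk-2}, and concludes the theorem exactly by composing the two indistinguishability statements. Your added observations (efficiency of $\simulator$, uniformity over malicious verifiers, and that the hiding property of the commitment is invoked inside \lem{zk-2}) are consistent with what the paper does implicitly.
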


\section{Round reduction by Fiat-Shamir transformation}\label{sec:FS}

In this section we show that the Fiat-Shamir transformation can be used make the $k$-fold parallel three-round Mahadev protocol $\mathfrak{M}$ non-interactive with a setup phase, while keeping both the completeness and the soundness errors negligible. This will also be the case for the zero-knowledge variant of the same, i.e., \prot{interactive}.

\subsection{Fiat-Shamir for $\Sigma$-protocols in the QROM}

The Fiat-Shamir (FS) transformation turns any public-coin three-message interactive argument, also called a $\Sigma$-protocol,
into a single-message protocol in the random oracle model (ROM). A $\Sigma$-protocol consists of the following type of interaction between $\verifier$ and $\prover$:

\begin{protocol}[$\Sigma$-protocol for a language $L$]\label{prot:sigma}
  $\verifier$ and $\prover$ receive an input~$x$.
\begin{proto}
\item[Round~1.]
  $\prover$ sends a message $y$ (called the \emph{commitment}).
\item[Round~2.]
  $\verifier$ samples a random $c$ (called the \emph{challenge}) uniformly from a finite set $\C$.
\item[Round~3.]
  $\prover$ responds with a message $m$ (called the \emph{response}).
\item[Verdict.]
  $\verifier$ outputs $V(x,y,c,m)$, where $V$ is some Boolean function.
\end{proto}
\end{protocol}

In the Fiat-Shamir transformation, the prover generates their own challenge by making a query to hash function $\H$, and then computing their response $m$ as usual. The transformed protocol thus does not require $\verifier$ to send any messages.

\begin{protocol}[FS-transformed protocol for $L$]\label{prot:FS-sigma}
$\verifier_{FS}$ and $\prover_{FS}$ receive an input $x$ and are given access to a random oracle $\H$.
\begin{proto}
\item[Round~1.]
  $\prover_{FS}$ sends a message $(y,m)$.
\item[Verdict.]
  $\verifier_{FS}$ outputs $V(x,y,\H(x,y),m)$.
\end{proto}
\end{protocol}

In the standard approach, one proves that the Fiat-Shamir transformation preserves soundness in the ROM. In this idealized cryptographic model, all parties receive oracle access to a uniformly random function $\mathcal H$. Against quantum adversaries, there is a well-known complication: a quantum computer can easily evaluate any actual instantiation of $\mathcal H$ (with a concrete public classical function) in superposition via
$$
U_\mathcal H \colon \ket{x,y}\ket{z} \mapsto \ket{x,y}\ket{z \oplus \mathcal H(x,y)}\,.
$$
We thus work in the Quantum Random Oracle Model (QROM), in which all parties receive quantum oracle access to $U_\mathcal H$.

We will make use of the following theorem of~\cite{DFMS19}; we describe the underlying reduction in \app{QROM}.
\begin{theorem}[Quantum security of Fiat-Shamir~{\cite[Theorem~2]{DFMS19}}]\label{thm:FS-sigma}
For every QPT prover $\A^\H$ in \prot{FS-sigma}, there exists a QPT prover $\mathcal S$ for the underlying $\Sigma$-protocol such that
  \begin{align*}\nonumber
  &\Pr_\Theta[V(x,y,\Theta,m)=1:(y,m)\gets \langle\mathcal{S}^\A,\Theta\rangle] \\
  & \geq \frac{1}{2(2q+1)(2q+3)}
  \Pr_\H[V(x,y,\H(x,y),m)=1,~(y,m)\gets\A^\H(x)] - \frac{1}{(2q+1)|\Y|}.
  \end{align*}
\end{theorem}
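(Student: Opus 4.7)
The plan is to adapt the classical rewinding-based proof of Fiat-Shamir soundness to the QROM via the \emph{measure-and-reprogram} paradigm of Don--Fehr--Majenz--Schaffner. Given a QPT prover $\A^\H$ making at most $q$ oracle queries, I would construct the extractor $\mathcal{S}$ as follows. It simulates a quantum random oracle internally for $\A$ (using, e.g., a $2q$-wise independent function, so that $\A$'s view is statistically identical to a true QROM). It picks a uniformly random pair of indices $(i,j)$ from $\{1, \dots, 2q+1\}^2$ with $i \leq j$, indexing the $2q+1$ points before/between/after $\A$'s oracle queries. It runs $\A$ up through position $i$, measures the query-input register in the computational basis to extract a candidate $(x',y)$, forwards $y$ to the external $\Sigma$-protocol verifier to obtain the challenge $\Theta$, reprograms the simulated oracle by overwriting $\H(x, y) := \Theta$, and then continues $\A$'s execution with the reprogrammed oracle until position $j$, at which point $\A$'s output register is read out to recover the response $m$. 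The output of $\mathcal{S}$ is $(y, m)$.

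The analysis splits into two parts. The \emph{averaging} part argues that, conditioned on any successful run of $\A$ in the non-reduced game, the ``correct'' positions $(i^*, j^*)$ of the critical query on $(x,y)$ and the subsequent emission of $m$ are hit with probability at least $\Omega(1/q^2)$; this accounts for the $\tfrac{1}{2(2q+1)(2q+3)}$ factor in the denominator. The \emph{disturbance} part argues that the joint distribution of $(y, m)$ produced by the measure-and-reprogram procedure is statistically close to the distribution produced by an idealized experiment in which an oracle answer at $(x,y)$ is \emph{freshly resampled} halfway through $\A$'s execution. The additive term $\tfrac{1}{(2q+1)|\Y|}$ handles the corner case in which $\A$ never actually queries $(x,y)$, in which case the only way $\A$ can produce an accepting transcript under a reprogrammed oracle is to have guessed $\Theta \in \Y$ in advance, which happens with probability $1/|\Y|$.

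The main obstacle is the measure-and-reprogram lemma itself, which is the heart of the DFMS result. Classical rewinding would run $\A$ twice with identical randomness past a critical query, answering differently each time; this fails quantumly because the intermediate coherent state is destroyed by the measurement used to extract $y$. The DFMS approach circumvents this by decomposing $\A$'s unitary evolution into blocks $V_0, V_1, \dots, V_q$ separated by oracle queries, and bounding, via a careful hybrid argument, the trace distance between (i) the state obtained by measuring the input of the $i$-th query and then proceeding with a reprogrammed oracle, and (ii) the state one would obtain by running $\A$ coherently on an oracle already reprogrammed at $(x,y)$. Showing that this loss is only a polynomial factor $O(q^{-2})$ requires a delicate telescoping argument and is precisely where all the quantum technicalities live; once this lemma is in hand, combining it with the index-averaging bound and the guessing bound yields the stated inequality.
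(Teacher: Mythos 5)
Your proposal follows essentially the same route as the paper, which does not prove this statement from scratch but cites it as Theorem~2 of Don--Fehr--Majenz--Schaffner and merely sketches the same measure-and-reprogram reduction (simulate $\H$ with a $2q$-wise independent function, measure the commitment register at a uniformly random query index, forward $y$, reprogram $\H$ at $(x,y)$ to the external challenge $\Theta$, and continue the run), with the hard work deferred to the DFMS lemma exactly as you do. One detail worth correcting: the actual simulator always runs $\A$ through all $q$ queries and measures $Y,M$ only at the end (aborting if the final commitment differs from $y$), with the second piece of randomness being a coin $b$ deciding whether reprogramming takes effect at query $i+1$ or $i+2$ --- not a second index $j$ at which the run is truncated, since $\A$'s output is only defined at the end of its execution.
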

In the above, $(y,m) \gets \langle\mathcal{S}^\A,\Theta\rangle$ indicates that $y$ and $m$ are the first-round and third-round (respectively) messages of $\mathcal{S}^\A$, when it is given the random challenge $\Theta$ in the second round. 

\subsection{Extension to generalized $\Sigma$-protocols}

In this section, we show that Fiat-Shamir also preserves soundness for a more general family of protocols, which we call ``generalized $\Sigma$-protocols.'' In such a protocol, $\verifier$ can begin the protocol by sending an initial message to $\prover$.

\begin{protocol}[Generalized $\Sigma$-protocol]\label{prot:generalized-sigma}
Select a public function $f\colon\R\times L\to\W$, a finite set $\C$, and a distribution $D$ over $\R$. The protocol begins with $\prover$ and $\verifier$ receiving an input $x$.
\begin{proto}
\item [Round 1.] $\verifier$ samples randomness $r\in\R$ from distribution $D$ and computes message $w=f(r,x)$, which is sent to $\prover$.
\item [Round 2.] $\prover$ sends a message $y$ to $\verifier$.
\item [Round 3.] $\verifier$ responds with a uniformly random classical challenge $c\in\C$.
\item [Round 4.] $\prover$ sends a response $m$ to $\verifier$.
\item [Verdict.] $\verifier$ outputs a bit computed by a Boolean function $V(r,x,y,c,m)$.
\end{proto}
\end{protocol}

Notice that the original  Mahadev protocol~\cite{Mah18} is a \genprot:
the distribution $D$ describes the distribution for the secret key, and $f$ computes the public key.
Similarly, the $k$-fold parallel repetition of our instance-independent protocol is also a \genprot since our trusted setup phase can be seen as a message from the verifier.

\paragraph{Fiat-Shamir for generalized $\Sigma$ protocols.}\label{sec:FS-generalized-sigma}

The FS transformation for \genprots is similar to standard ones:
 in the Verdict stage, $\verifier$ computes $c=\H(x,w,y)$  and accepts if and only if $V(r,x,y,c,m)=1$.

\begin{protocol}[FS-transformed \genprot]\label{prot:fs-generalized-sigma}
Select a public function $f:\R\times L\to\W$, a finite set $\C$, and a distribution $D$ over $\R$. $\prover$ and $\verifier$ receive an input $x$ and are given access to a random oracle $\H$.
\begin{proto}
\item [Round 1.] $\verifier$ samples randomness $r\in\R$ from distribution $D$, and computes message $w=f(r,x)$, which is sent to $\prover$.
\item [Round 2.] $\prover$ sends a message $(y,m)$ to $\verifier$.
\item [Verdict.] $\verifier$ computes $c=\H(x,w,y)$ and then outputs a bit computed by a Boolean function $V(r,x,y,c,m)$.
\end{proto}
\end{protocol}

To show that \genprots remain secure under FS transformation, similarly to the idea for $\Sigma$-protocols in \app{sigma-reduction},
we give a reduction.
Conditioned on any randomness $r$, the prover $\A_r^\H(x):=\A^\H(x,f(r,x))$ is characterized by unitaries $A_{0,f(r,x)},\ldots,A_{q,f(r,x)}$.
The prover $\B$ in the $\Sigma$-protocol runs $\mathcal{S}^{\A_r}$ and outputs its decision.
Given the success probability of $\A$, we establish a lower bound on that of $\B$, formally stated as follows.

\begin{restatable}[Fiat-Shamir transformation for \genprots]{lemma}{lemfs}\label{lem:genprots}
  Suppose that
\begin{align*}
  \Pr_{r,\H}[V(r,x,y,\H(x,f(r,x),y),m)=1:~(y,m)\gets \A^\H(x,f(r,x))] = \epsilon.
\end{align*}
Then
\begin{align*}
  \Pr_{r,\Theta}[V(r,x,y,\Theta,m)=1:~(y,m)\gets \langle \B,\Theta\rangle] \geq \frac{\epsilon}{2(2q+1)(2q+3)}-\frac{1}{(2q+1)|\Y|}.
\end{align*}
\end{restatable}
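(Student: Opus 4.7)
The plan is to reduce \lem{genprots} to \thm{FS-sigma} by conditioning on the verifier's first-round randomness $r$ and treating the \genprot as a family of standard $\Sigma$-protocols indexed by $w=f(r,x)$.

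First, for each $r\in\R$, define the derived adversary $\A_r^\H(x) := \A^\H(x, f(r,x))$. Observe that $\A_r$ depends on $r$ only through $w:=f(r,x)$, and once $w$ is fixed, Rounds~2--4 of \prot{fs-generalized-sigma} are exactly the FS transformation of a standard $\Sigma$-protocol on the ``instance'' $(x,w)$ with Boolean predicate $V_r(x,y,c,m):=V(r,x,y,c,m)$. Applying \thm{FS-sigma} to this $\Sigma$-protocol with adversary $\A_r$ yields an efficient extractor $\mathcal{S}^{\A_r}$ such that
\begin{equation*}
\Pr_{\Theta}\!\left[V(r,x,y,\Theta,m)=1:(y,m)\gets\langle\mathcal{S}^{\A_r},\Theta\rangle\right] \geq \frac{\epsilon_r}{2(2q+1)(2q+3)}-\frac{1}{(2q+1)|\Y|},
\end{equation*}
where $\epsilon_r:=\Pr_\H[V(r,x,y,\H(x,f(r,x),y),m)=1:(y,m)\gets\A^\H(x,f(r,x))]$ is the FS-success probability conditioned on $r$.

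Next, define the prover $\B$ for the underlying \genprot as follows. On receiving the first-round message $w$ from the honest verifier, $\B$ internally constructs the oracle algorithm $\A_w^\H(\cdot):=\A^\H(\cdot,w)$ (this only requires knowledge of $w$, not of $r$) and then simulates $\mathcal{S}^{\A_w}$, forwarding the second-round challenge $\Theta$ received from the verifier to $\mathcal{S}^{\A_w}$ and returning its output pair $(y,m)$ as $\B$'s own first- and third-round messages. Since $\A_r\equiv\A_w$ whenever $w=f(r,x)$, the success probability of $\B$ conditioned on the verifier's randomness being $r$ equals the success probability of $\mathcal{S}^{\A_r}$ on a uniform challenge $\Theta$.

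Finally, averaging over $r\sim D$ by linearity of expectation,
\begin{align*}
\Pr_{r,\Theta}[V(r,x,y,\Theta,m)=1:(y,m)\gets\langle\B,\Theta\rangle]
&\geq \Exp_{r}\!\left[\frac{\epsilon_r}{2(2q+1)(2q+3)}-\frac{1}{(2q+1)|\Y|}\right] \\
&= \frac{\epsilon}{2(2q+1)(2q+3)}-\frac{1}{(2q+1)|\Y|},
\end{align*}
using $\Exp_r[\epsilon_r]=\epsilon$ by definition. The proof is thus essentially bookkeeping on top of \thm{FS-sigma}; the only subtlety I anticipate is checking that the extractor $\mathcal{S}$ from \thm{FS-sigma} uses $\A_r$ purely as a quantum oracle, so that $\B$ can implement it knowing only $w$. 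This is immediate from the form of the reduction in \cite{DFMS19} sketched in \app{QROM}, and no new quantum rewinding analysis is required.
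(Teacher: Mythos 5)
Your proposal is correct and matches the paper's proof: both condition on the verifier's randomness $r$, observe that the conditional experiment is exactly the Fiat--Shamir transform of a standard $\Sigma$-protocol run by the derived adversary $\A_r^\H(x):=\A^\H(x,f(r,x))$ (with the restricted oracle $\H_r(x,y):=\H(x,f(r,x),y)$ still uniformly random), apply \thm{FS-sigma} per $r$, and average using $\Exp_r[\epsilon_r]=\epsilon$. No substantive differences.
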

\begin{proof}
  See \app{sigma-reduction}.
\end{proof}

\noindent
\lem{genprots} immediately gives the following theorem.
\begin{theorem}\label{thm:g-sigma}
  If a language $L$ admits a \genprot with soundness error $s$,
  then after the Fiat-Shamir transformation, the soundness error against provers who make up to $q$ queries to a random oracle is $O(sq^2+q|\Y|^{-1})$.
\end{theorem}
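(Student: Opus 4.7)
The plan is to treat \thm{g-sigma} as an essentially immediate consequence of \lem{genprots} applied contrapositively. Fix any no-instance $x\notin L$ and any QPT prover $\A^\H$ in the FS-transformed \prot{fs-generalized-sigma} that makes at most $q$ queries to the random oracle. Let $\epsilon$ denote its acceptance probability, averaged over the verifier's randomness $r$, the oracle $\H$, and $\A$'s internal randomness. Applying \lem{genprots} produces a prover $\B$ for the underlying \prot{generalized-sigma} whose acceptance probability (averaged over $r$ and the uniform challenge $\Theta$) is at least
$$\frac{\epsilon}{2(2q+1)(2q+3)} - \frac{1}{(2q+1)|\Y|}.$$
Since the underlying \genprot has soundness error at most $s$ on the no-instance $x$, the above quantity is bounded by $s$, and rearranging yields
$$\epsilon \;\leq\; 2(2q+1)(2q+3)\,s + \frac{2(2q+3)}{|\Y|} \;=\; O\!\left(sq^2 + \frac{q}{|\Y|}\right),$$
which is exactly the claimed bound since $\A$ was arbitrary.

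The one ingredient that needs attention is that the reduction produces an \emph{efficient} prover $\B$, so that the soundness guarantee of the original protocol legitimately applies. From the construction sketched before \lem{genprots}, $\B$ runs the measure-and-reprogram simulator $\mathcal{S}^{\A_r}$ of \cite{DFMS19} once, which in turn invokes $\A$ a constant number of times and performs a polynomial amount of classical bookkeeping on the oracle; thus $\B$ is QPT whenever $\A$ is. All the substantive technical work is therefore located in \lem{genprots} itself (proved in \app{sigma-reduction}), and the main obstacle there, which I expect to be the only non-trivial step, is to lift the standard $\Sigma$-protocol result \thm{FS-sigma} to the generalized setting: one conditions on the verifier's first-round randomness $r$ so that, for each fixed $r$, the residual protocol is an ordinary $\Sigma$-protocol on instance $(x,f(r,x))$ with predicate $V(r,x,\cdot,\cdot,\cdot)$, applies \thm{FS-sigma} to the conditioned prover $\A_r^\H(\cdot)=\A^\H(\cdot,f(r,x))$, and then takes an expectation over $r\sim D$ to recover the unconditional bound.
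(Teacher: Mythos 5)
Your proposal is correct and follows the same route as the paper: apply \lem{genprots} contrapositively to the FS-transformed prover, invoke the soundness bound $s$ of the underlying \genprot on the resulting prover $\B$, and rearrange to get $\epsilon = O(sq^2 + q|\Y|^{-1})$; your sketch of how \lem{genprots} itself is obtained (conditioning on $r$, applying \thm{FS-sigma} to $\A_r$, and averaging) also matches the paper's appendix. The added remark that $\B$ is efficient is a worthwhile point the paper leaves implicit.
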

\begin{proof}
  Suppose that there is a prover who succeeds in the transformed protocol with success probability $\epsilon$.
  Then by \lem{genprots}, we may construct a prover who succeeds with probability at least $\frac{\epsilon}{O(q^2)}-O\left(\frac{1}{q|\Y|}\right)$.
  By the soundness guarantee, we have $\frac{\epsilon}{O(q^2)}-O\left(\frac{1}{q|\Y|}\right)\leq s$
  and thus $\epsilon\leq O(q^2s+q|\Y|^{-1})$.
\end{proof}
By \thm{g-sigma}, if both $s$ and $|\Y|^{-1}$ are negligible in security parameter $\lambda$,
the soundness error of the transformed protocols remains negligible against an efficient prover who makes $q=\poly(\lambda)$ queries. \thm{noninteractive-intro} follows directly from \thm{g-sigma}.

\subsection{Non-interactive zero-knowledge for $\QMA$}

We now show that, using the Fiat-Shamir transformation, our three-round protocol proposed in \prot{interactive} can be converted into a non-interactive zero-knowledge argument (with trusted setup) for $\QMA$ in the Quantum Random Oracle model. The resulting protocol is defined exactly as \prot{interactive}, with two modifications: (i.) instead of Round $\verifier_2$, the prover $\prover$ computes the coins $c$ by evaluating the random oracle $\mathcal H$ on the protocol transcript thus far, and (ii.) the NIZK instance $x$ is appropriately redefined using these coins.

We remark that since the setup in this protocol is trusted, it follows from \thm{g-sigma} that the compressed protocol is complete and sound, and therefore we just need to argue about the zero-knowledge property.

\begin{theorem}
  The Fiat-Shamir transformation of \prot{interactive} is zero-knowledge.
\end{theorem}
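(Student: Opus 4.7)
The plan is to construct a simulator $\simulator_{\FS}$ for the Fiat-Shamir transformed protocol by combining the simulator $\simulator$ of \sec{zk-property} with an adaptive reprogramming step in the quantum random oracle model. Concretely, given a malicious verifier $\verifier^*$ with quantum access to the random oracle $\H$, the simulator first samples $(\st_\verifier,\st_\prover)\gets\setup(\lambda,nrk,rk)$ honestly and forwards $\st_\verifier$ to $\verifier^*$. It then runs Round $\prover_1$ of $\simulator$ to obtain the NTCF/NTIF images $y$ and the commitment $\xi$, samples a uniform challenge $c\in\{0,1\}^k$, and programs $\H$ so that its value on the appropriate FS hash input equals $c$. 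Finally it runs Round $\prover_2$ of $\simulator$ with this $c$ to obtain $\chi$ and the simulated NIZK ciphertext $ce=\FHE.\Enc_{hpk}(\NIZK.\aS(x))$, and outputs the transcript $(\st_\verifier,\xi,y,\chi,ce)$ together with the reprogrammed-oracle interface.

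The indistinguishability argument proceeds by a short sequence of hybrids. In $H_0$, $\verifier^*$ interacts with the real honest prover in the FS-transformed protocol, with $c$ produced via a single query to the unprogrammed random oracle. In $H_1$, everything is identical except that the reduction samples $c$ uniformly in advance and adaptively reprograms $\H$ at the FS hash input so that it returns $c$. In $H_2$, the real honest-prover transcript is replaced by the output of $\simulator$ instantiated with this $c$; by construction this matches the output of $\simulator_{\FS}$. Indistinguishability of $H_1$ and $H_2$ follows directly from \lem{zk-1} and \lem{zk-2}, applied pointwise in $c$ and then averaged, since once $c$ is fixed the rest of the transcript is produced exactly as in the three-round protocol of \sec{zk-property}.

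The main obstacle is establishing indistinguishability of $H_0$ and $H_1$ in the QROM. Because $\verifier^*$ queries $\H$ in superposition, classical lazy-sampling arguments do not apply, and we must instead invoke an adaptive reprogramming lemma for the quantum random oracle, which bounds the distinguishing advantage by roughly $q\cdot 2^{-H_\infty/2}$, where $q$ is the query bound and $H_\infty$ is the min-entropy, from the adversary's point of view, of the reprogrammed input. In our setting the FS hash input incorporates the commitment string $y$, which is obtained by measuring the output register of $\Samp$ on the honest prover's state and inherits super-logarithmic min-entropy from the NTCF/NTIF construction; the setup commitment $\xi$ and the response commitment $\chi$ further contribute independent randomness. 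Hence the reprogramming is detected with only negligible probability, and composing the three hybrids yields the claimed zero-knowledge guarantee.
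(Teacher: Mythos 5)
Your proposal is correct in outline, but it takes a genuinely different and heavier route than the paper. The paper's proof is essentially one observation: the simulator $\simulator$ of \sec{zk-property} is \emph{straight-line} --- because the trusted setup hands it the NTCF/NTIF trapdoors and the FHE secret key, it first produces the commitment $y$ and $\xi$ and can then answer \emph{any} challenge $c$ it is given afterwards. Consequently, in the Fiat-Shamir-transformed protocol the simulator does not need to choose $c$ in advance at all; it simply queries the (unprogrammed) random oracle $\H$ on the transcript, exactly as the honest prover would, and the hybrid argument of \lem{zk-1} and \lem{zk-2} goes through verbatim with the oracle-derived challenge in place of the malicious verifier's challenge. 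Your construction instead follows the classical FS-ZK template: pre-sample $c$, adaptively reprogram $\H$ at the hash input, and pay a $q\cdot 2^{-H_\infty/2}$ loss justified by the min-entropy of $y$. That machinery is needed only when the simulator must know the challenge \emph{before} committing (as for standard special-honest-verifier $\Sigma$-protocol simulators), which is not the case here; so your approach works but imports an adaptive-reprogramming lemma in the QROM and a min-entropy analysis of the NTCF/NTIF image distribution (conditioned on the verifier holding $sk$) that the paper never has to carry out. If you keep your route, you should state precisely which QROM reprogramming lemma you invoke and verify its hypotheses; but the cleaner argument is to notice that no reprogramming is necessary.
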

\begin{proof}
  The simulator $\simulator^{\verifier_2^*}$ can sample the trapdoor keys for NTCF/NTIF functions and private keys for the FHE scheme, enabling simulation of the transcript for every challenge sent by the verifier. In particular, one can run the same proof of \sec{zk-property} with the variant $\simulator^{\H}$ that queries the random oracle $\H$ for the challenges instead of receiving it from a malicious verifier $\verifier^*$.
\end{proof}

\section*{Acknowledgments}

We thank Kai-Min Chung, Andrea Coladangelo, Bill Fefferman, Serge Fehr, Christian Majenz, Christian Schaffner, Umesh Vazirani, and Thomas Vidick for helpful discussions.

AMC and SHH acknowledge support from NSF grant CCF-1813814 and from the U.S. Department of Energy, Office of Science, Office of Advanced Scientific Computing Research, Quantum Testbed Pathfinder program under Award Number DE-SC0019040. GA acknowledges support from NSF grant CCF-1763736. Part of this work was completed while GA was supported by the Dutch Research Council (NWO) through travel grant 040.11.708.
Part of this work was completed while AG was visiting the
Joint Center for Quantum Information and Computer Science, University of Maryland and the Simons Institute for the Theory of Computing.

\bibliographystyle{plain}
\bibliography{refs}

\appendix

\section{Standard cryptographic primitives}
\label{app:standard-primitives}
\subsection{Commitment schemes}

The following definition is taken from \cite{CVZ19} (with modification).
A trapdoor commitment scheme is a tuple of algorithms $\Com=(\gen,\commit,\verify)$, described as follows.
\begin{enumerate}
  \item $\gen(1^\lambda)$, on input the security parameter, outputs a public key $pk$ and a secret key $sk$.\footnote{The secret key is never used.}

  \item $\commit_{pk}(b,s)$, on input the public key $pk$, a bit $b\in\bit$ (to commit to) and a string $s$, outputs the commitment $z$.

  \item $\verify_{pk}(b,z,s)$, on input the public key $pk$, a bit $b\in\bit$, a string $s$, and the commitment $z$, outputs ``accept'' or ``reject.''
\end{enumerate}
The scheme based on LWE \cite{CVZ19} satisfies the following properties.
\begin{enumerate}
  \item Perfectly binding: if $\commit_{pk}(b,s)=\commit_{pk}(b',s')$, then $b=b'$.
  \item Quantum computational concealing: for any quantum adversary $\A$,
    \begin{align*}
      \Pr\left[\A(pk,z)=b \middle|\begin{aligned} (pk,sk) &\gets \gen(1^\lambda) \\ s &\gets\bit^\ell \\ z &\gets\commit_{pk}(b,s)\end{aligned} \right] \leq \frac{1}{2}+\negl(\lambda).
    \end{align*}
\end{enumerate}

\subsection{Fully homomorphic encryption with circuit privacy}

A fully homomorphic encryption scheme $\FHE=(\Gen,\Enc,\Dec,\Eval)$ with malicious circuit privacy \cite{CVZ19,BS19} consists of the following algoithms.
\begin{enumerate}
  \item $\Gen(1^\lambda)$: a probabilistic algorithm that, on input $1^\lambda$, outputs a secret key $sk$ and a public key $pk$.
  \item $\Enc_{pk}(x)$: a probabilistic algorithm that, on input the public key $pk$ and message $x$, outputs a ciphertext.
  \item $\Dec_{sk}(c)$: a deterministic algorithm that, on input the secret key $sk$ and ciphertext $c$, outputs a message $x$.
  \item $\Eval_{pk}(C,c_1,\ldots,c_m)$: a probabilistic algorithm that, on input the public key $pk$, a circuit description $C\in\C_\lambda$, and ciphertexts $c_1,\ldots,c_m$, outputs another ciphertext $c'$.
\end{enumerate}
The scheme satisfies the following properties for any polynomial-sized classical circuits $\{\C_\lambda\}_\lambda$:
\begin{enumerate}
  \item Correctness: for any $\lambda\in\mathbb{N}$, $m\in\bit^*$ and $C\in\C_\lambda$,
    \begin{align*}
      \Pr\left[\Dec_{sk}(c')=C(x)\middle|\begin{aligned} (pk,sk)&\gets\Gen(1^\lambda) \\ c&\gets \Enc_{pk}(m) \\ c'&\gets\Eval_{pk}(C,c)\end{aligned}\right] = 1.
    \end{align*}

  \item Semantic security against quantum adversaries: for every $\lambda\in\mathbb{N}$, there exists a negligible function $\mu$ such that for any pair of messages $m_0,m_1$ of polynomial size and any quantum adversary $\A$,
    \begin{align*}
      \Pr\left[\A(pk,c)=b\middle|\begin{aligned}b&\gets\bit \\ (pk,sk)&\gets\Gen(1^\lambda) \\ c&\gets\Enc_{pk}(m_b)\end{aligned}\right]
      \leq \frac{1}{2}+\mu(\lambda).
    \end{align*}

  \item Malicious\footnote{In contrast to the semi-honest counterpart, it is not required that the public key and the ciphertext to $\Eval$ are well-formed.} circuit privacy: there exist unbounded algorithms $\Sim$, $\Ext$ such that for any $x\in\bit^*$, possibly malformed $pk^*$, and
    $ct\gets\Enc_{pk^*}(x)$, we have $\Ext_{pk^*}(1^\lambda,ct)=x$.
    Furthermore, for any $C$ and possibly malformed $pk^*,ct^*$,
    \begin{align*}
      \Eval_{pk^*}(C,ct^*) \approx_s \Sim_{pk^*}(C(\Ext_{pk^*}(ct^*;1^\lambda));1^\lambda),
    \end{align*}
    where $\approx_s$ denotes that the two distributions are statistically indistinguishable.
\end{enumerate}
An FHE scheme with malicious circuit privacy \cite{OPCPC14} is known to exist, assuming that LWE is (quantum) computationally intractable.

\section{Proof of \lem{modified-mf}}\label{app:modifiedmf}

\modifiedmf*
\begin{proof}
First we observe that for each copy, with probability $1/4$, $\verifier$ measures the quantum state with a term sampled from the distribution $\pi$; otherwise $\verifier$ accepts.
Thus for an instance $H$, the effective Hamiltonian to verify is $\widetilde{H}^{\otimes r}$ where $\widetilde{H}=\frac{3\1+H}{4}$.
Following the standard parallel repetition theorem for $\QMA$, we know that $\prover$'s optimal strategy is to present the the ground state of $\widetilde{H}$, which is also the ground state of $H$.

With probability $\binom{r}{t}(\frac{1}{4})^t(\frac{3}{4})^{r-t}$, there are $t$ consistent copies.
Now for $i\in A$, we let $X_i$ be a binary random variable corresponding to the decision of $\verifier_i$.
For soundness,
by Hoeffding's inequality\footnote{$\Pr[\frac{1}{n}\sum_i X_i-\mu \geq\delta]\leq e^{-2t\delta^2}$ for i.i.d.\ $X_1,\ldots,X_n\in[0,1]$.}
the success probability for $A$ such that $|A|=t$ is
\begin{align*}\nonumber
  \Pr[\text{accept}|A]
  &= \Pr\left[\frac{1}{t}\sum_{i\in A} X_i \geq \frac{c+s}{2} \right]  \\
  &\leq \Pr\left[\frac{1}{t}\sum_{i\in A} X_i - s \geq \frac{c-s}{2} \right]
  \leq 2e^{-\frac{tg^2}{2}},
\end{align*}
where $g=c-s$ is the promise gap.
Then the overall success probability is
\begin{align}\label{eq:soundness}\nonumber
  \Pr[\text{accept}]
  &= 2\cdot 4^{-r}\sum_{t=0}^r\binom{r}{t}3^{r-t} e^{-tg^2/2} \\
  &= 2\left(\frac{e^{-g^2/2}+3}{4}\right)^r
  \leq 2(1-g^2/16)^r \leq 2e^{-rg^2/16}
\end{align}
since $1-x/2\geq e^{-x}$ for $x\in[0,1]$ and $1-x\leq e^{-x}$ for $x\geq 0$.
Thus $r=\omega(g^{-2}\log n)$ suffices to suppress the soundness error to $n^{-\omega(1)}$.
Since $g^{-1}=\poly(n)$, polynomially many copies suffice to achieve negligible soundness error.

For completeness, again by Hoeffding's inequality,
\begin{align*}\nonumber
  \Pr[\text{reject}|A]
  &= \Pr\left[\frac{1}{t}\sum_{i\in A} X_i < \frac{c+s}{2} \right] \\
  &\leq \Pr\left[c-\frac{1}{t}\sum_{i\in A} X_i  > \frac{c-s}{2} \right]
  \leq 2e^{-\frac{tg^2}{2}}.
\end{align*}
By the same calculation as in \eq{soundness}, the completeness error is negligible if we set $r=\omega(g^{-2}\log n)$.
\end{proof}

\section{Proof of \lem{nearly-orthogonal-measurements}}\label{app:lemma-orthogonal-measurements}
\ortm*
\begin{proof}
Since we are proving an upper bound for a quantity that is symmetric under the interchange of $b$ and $a$, we can assume that $b_i=0$ and $a_i=1$ without loss of generality.

We first claim that there exists a quantum state $\rho$ such that
\begin{equation}\label{eq:desired}
\Exp_{(pk,sk)\gets\Gen(1^\lambda,h)}\left[\bra{\Psi_{pk}}\Pi_{s,sk,b}^{U_b}\Pi_{s,sk,a}^{U_a}\Pi_{s,sk,b}^{U_b}\ket{\Psi_{pk}}\right]
\leq \alpha_{h_i,s_i,\rho}+\negl(n)
\end{equation}
for all basis choices $h$ and randomness $s$. For a contradiction, suppose that is not the case.
Then there exists a basis choice $h^*$ and $s^*$ and a polynomial $\eta$ such that for every state $\rho$,
\begin{align*}
\Exp_{(pk,sk)\gets\Gen(1^\lambda,h^*)}\left[\bra{\Psi_{pk}}\Pi_{s^*,sk,b}^{U_b}\Pi_{s^*,sk,a}^{U_{a}}\Pi_{s^*,sk,b}^{U_b}\ket{\Psi_{pk}}\right]
>\alpha_{h_i^*,s_i^*,\rho}+1/\eta(n)\,.
\end{align*}
We show that this implies the existence of an efficient prover $\prover^*$ for the single-copy three-round Mahadev protocol $\mathfrak{M}$ who violates \lem{perfect-succ-prob}.
Define the following projector on $WXYE$:
  \begin{align*}
    \Sigma_a := U_a^\dag(H^a\otimes\1_E)((\1\otimes\cdots\otimes\1\otimes\Pi \otimes\1\otimes\cdots\otimes\1)\otimes\1_E)(H^a\otimes\1_E)U_a\,.
  \end{align*}
Here $\Pi$ denotes the single-copy protocol acceptance projector for the Hadamard round, with key $sk_i$ and basis choice $h^*_i,s_i^*$.
In the above, $\Pi$ acts on the $i$th set of registers, i.e., $W_iX_iY_i$. The projector $\Sigma_a$ corresponds to performing the appropriate Hadamard test in the $i$th protocol copy, and simply accepting all other copies unconditionally. It follows that $\Pi_{s,sk,a}^{U_{a}}\preceq \Sigma_a$, and we thus have
\begin{align}\label{eq:contra}\nonumber
    &\Exp_{(pk,sk)\gets\Gen(1^\lambda,h^*)}\left[\bra{\Psi_{pk}}\Pi_{s^*,sk,b}^{U_b}\Sigma_a\Pi_{s^*,sk,b}^{U_b}\ket{\Psi_{pk}}\right]\\\nonumber
    &\qquad\geq \Exp_{(pk,sk)\gets\Gen(1^\lambda,h^*)}\left[\bra{\Psi_{pk}}\Pi_{s^*,sk,b}^{U_b}\Pi_{s^*,sk,a}^{U_{a}}\Pi_{s^*,sk,b}^{U_b}\ket{\Psi_{pk}}\right] \\
    &\qquad>\alpha_{h^*_i,s_i^*,\rho}+1/\eta.
\end{align}
The single-copy prover $\prover^*$ interacts with the single-copy verifier $\verifier^*$ as follows.
\begin{itemize}
\item In the Setup phase, after receiving the public key $pk^*$, initialize $k-1$ internally simulated verifiers, and set $pk$ to be the list of their keys, with $pk^*$ inserted in the $i$th position. Let $h = (h_1, \dots, h_k)$ be the basis choices, and note that all but $h_i$ are known to $\prover^*$.
\item Using the algorithms of $\prover$, perform the following repeat-until-success (RUS) procedure for at most $q=\eta^4$ steps.
\begin{enumerate}
\item Prepare the state $\ket{\Psi_{pk}}$ on registers $WXYE$, and then apply the unitary $U_b$.
\item Apply the measurement determined by $\Pi_{s, sk, b}$ (defined in \eq{prod-meas}); for index $i$ we can use $pk^*$ because $b_i = 0$; for the rest we know the secret keys.
\item If the measurement rejects, go to step (1.), and otherwise apply $U_b^\dagger$ and output the state.
\end{enumerate}
If the RUS procedure does not terminate within $q$ steps, then $\prover^*$ prepares a state\footnote{To pass the test round, any efficiently preparable state suffices.} $\ket{\Phi_{pk}^*}$ by performing $\Samp$ coherently on $\ket{0^n}_W$ (see Round~2 of \prot{Mahadev}).

Note that if $\prover^*$ terminates within $q$ steps, the resulting state is
\begin{align*}
\ket{\Phi_{pk}} : = \frac{\Pi_{s^*,sk,b}^{U_b}\ket{\Psi_{pk}}}{\|\Pi_{s^*,sk,b}^{U_b}\ket{\Psi_{pk}}\|}\,;
\end{align*}
otherwise $\ket{\Phi_{pk}^*}$ is prepared.
\item For the Round~$\prover_1$ message, measure the $Y_i$ register of $\ket{\Phi_{pk}}$ and send the result to $\verifier^*$.
\item When $\verifier^*$ returns the challenge bit $w$ in Round 3, if $w = b_i = 0$, apply $U_b$ (resp. $\1$) to $\ket{\Phi_{pk}}$ (resp. $\ket{\Phi_{pk}^*}$), and otherwise apply $U_a$. Then behave honestly, i.e., measure $W_iX_i$ in computational or Hadamard bases as determined by $w$, and send the outcomes.
\end{itemize}
By the RUS construction and the fact that $b_i = 0$, the state $\ket{\Phi_{pk}}$ or $\ket{\Phi_{pk}^*}$ is in the image of the test-round acceptance projector in the $i$th coordinate. This means that, when $\verifier^*$ enters a test round, i.e., $w = 0 = b_i$, $\prover^*$ is accepted perfectly. In other words, $\prover^*$ is a perfect prover\footnote{While we used $\Pi_{h^*,sk,b}$ in the RUS procedure, and $h_i^*$ is (almost always) not equal to the $h_i$ selected by $\verifier^*$, the result is still a perfect prover state. This is because, as described in \prot{Mahadev}, the acceptance test in the test round is independent of the basis choice.} and thus satisfies the hypotheses of \lem{perfect-succ-prob}.

Now consider the case when $\verifier^*$ enters a Hadamard round, i.e., $w=1$.
Let
\begin{align*}
  \Omega:=\{(pk,sk):\bra{\Psi_{pk}}\Pi_{s^*,sk,b}^{U_b}\ket{\Psi_{pk}}>q^{-1/2}\}
\end{align*}
denote the set of ``good'' keys.
For $(pk,sk)\in\Omega$, the probability of not terminating within $q = \poly(n)$ steps is at most $(1-q^{-1/2})^q \le e^{-\sqrt q}$.  Therefore, the success probability of RUS for the good keys is $1-\negl(n)$. Thus we have
\begin{align*}
\Exp_{sk|\Omega}[\bra{\Phi_{pk}}\Sigma_a\ket{\Phi_{pk}}]\Pr[\Omega]
\leq\alpha_{h_i^*,s_i^*,\rho}+\negl(n)
\end{align*}
where we let $\Exp_{X|E}[f(X)]:=\frac{1}{\Pr[E]}\sum_{x\in E}p(x)f(x)$ denote the expectation value of $f(X)$ conditioned on event $E$ for random variable $X$ over finite set $\X$ with distribution $p$ and function $f\colon\X\to[0,1]$.
Now we divide \eq{contra} into two terms and find
\begin{align*}\nonumber
  \alpha_{h_i^*,s_i^*,\rho}+\eta^{-1}
  &< \Exp_{(pk,sk)}\left[\bra{\Psi_{pk}}\Pi_{s^*,sk,b}^{U_b}\Sigma_a\Pi_{s^*,sk,b}^{U_b}\ket{\Psi_{pk}}\right] \\\nonumber
  &=
    \Pr[\Omega]
    \Exp_{(pk,sk)|\Omega}\left[\bra{\Psi_{pk}}\Pi_{s^*,sk,b}^{U_b}\Sigma_a\Pi_{s^*,sk,b}^{U_b}\ket{\Psi_{pk}}\right] \\\nonumber
  &\qquad + \Pr[\overline{\Omega}]
    \Exp_{(pk,sk)|\overline{\Omega}}\left[\bra{\Psi_{pk}}\Pi_{s^*,sk,b}^{U_b}\Sigma_a\Pi_{s^*,sk,b}^{U_b}\ket{\Psi_{pk}}\right] \\\nonumber
  & \leq
    \Pr[\Omega]
    \Exp_{(pk,sk)|\Omega}\left[\bra{\Psi_{pk}}\Pi_{s^*,sk,b}^{U_b}\Sigma_a\Pi_{s^*,sk,b}^{U_b}\ket{\Psi_{pk}}\right]
    + q^{-1/2} \\
  & \leq
    \alpha_{h_i^*,\rho} + \negl(n) + q^{-1/2}.
\end{align*}
Since $q=\eta^4$, this is a contradiction. Therefore \eq{desired} holds for every $h,s$, i.e.,
$$
\Exp_{(pk,sk)\gets\Gen(1^\lambda,h)}[\bra{\Psi_{pk}}\Pi_{s,sk,b}^{U_b}\Pi_{s,sk,a}^{U_a}\Pi_{s,sk,b}^{U_b}\ket{\Psi_{pk}}]\leq \alpha_{h_i,s_i,\rho}+\negl(n).
$$
It then follows that
  \begin{align*}\nonumber
    &\Exp_{(pk,sk)\gets\Gen(1^\lambda,h)}\left[\bra{\Psi_{pk}}\Pi_{h,sk,b}^{U_b}\Pi_{h,sk,a}^{U_{a}}+\Pi_{h,sk,a}^{U_{a}}\Pi_{h,sk,b}^{U_{b}}\ket{\Psi_{pk}}\right] \\\nonumber
    &\qquad =2\Exp_{(pk,sk)\gets\Gen(1^\lambda,h)}\left[\Re(\bra{\Psi_{pk}}\Pi_{h,sk,b}^{U_b}\Pi_{h,sk,a}^{U_{a}}\ket{\Psi_{pk}})\right] \\\nonumber
    &\qquad \leq 2\Exp_{(pk,sk)\gets\Gen(1^\lambda,h)}\left[|\bra{\Psi_{pk}}\Pi_{h,sk,b}^{U_b}\Pi_{h,sk,a}^{U_{a}}\ket{\Psi_{pk}}|\right] \\\nonumber
    &\qquad \leq 2\Exp_{(pk,sk)\gets\Gen(1^\lambda,h)}\left[\bra{\Psi_{pk}}\Pi_{h,sk,b}^{U_b}\Pi_{h,sk,a}^{U_{a}}\Pi_{h,sk,b}^{U_b}\ket{\Psi_{pk}}^{1/2}\right] \\
    &\qquad \leq 2\Exp_{(pk,sk)\gets\Gen(1^\lambda,h)}\left[\bra{\Psi_{pk}}\Pi_{h,sk,b}^{U_b}\Pi_{h,sk,a}^{U_{a}}\Pi_{h,sk,b}^{U_b}\ket{\Psi_{pk}}\right]^{1/2}
      \leq 2\alpha_{h_i,s_i,\rho}^{1/2}+\negl(n)
  \end{align*}
as claimed.
\end{proof}

\section{Proof of \lem{orthogonality-implies-soundness}}\label{app:lemmas}

\ort*
\begin{proof}
  Let $\alpha:=\bra{\psi}A_1+\ldots+A_m\ket{\psi}$. We have
  \begin{align}
    \alpha^2
    &\leq \bra{\psi}(A_1+\cdots+A_m)^2\ket{\psi} \nonumber \\
    &= \alpha + \sum_{i< j}\bra{\psi}A_iA_j+A_jA_i\ket{\psi} \label{eq:idempotent} \\
    &\leq \alpha + \sum_{i<j}\delta_{ij} \nonumber
  \end{align}
  The first inequality  holds since $\proj{\psi}\preceq\1$, and thus
  \begin{align*}
    \bra{\psi}(A_1+\cdots+A_m)\proj{\psi}(A_1+\cdots+A_m)\ket{\psi}\leq \bra{\psi} (A_1+\cdots+A_m)^2\ket{\psi}.
  \end{align*}
  The equality \eq{idempotent} holds since each $A_i$ is idempotent, and thus
  \begin{align*}
    \bra{\psi}(A_1+\cdots+A_m)^2\ket{\psi}
    &= \bra{\psi}A_1^2+\cdots+A_m^2\ket{\psi} + \sum_{i<j}\bra{\psi}A_iA_j+A_jA_i\ket{\psi} \\
    &= \bra{\psi}A_1+\cdots+A_m\ket{\psi} + \sum_{i<j}\bra{\psi}A_iA_j+A_jA_i\ket{\psi}.
  \end{align*}
  Now observe that for $\beta>0$, $x^2\leq x+\beta$ implies $x\leq \frac{1}{2}(1+\sqrt{1+4\beta})\leq \frac{1}{2}(1+(1+2\sqrt{\beta}))=1+\sqrt{\beta}$. Thus
  $\alpha\leq 1+\sqrt{\sum_{i<j}\delta_{ij}}$ as claimed.
\end{proof}

\section{Proof of soundness for \prot{interactive}}
\label{app:soundness-zk}

Recall that in \thm{soundness-zk}, we define the following hybrids:
\hybrids

Let $A_i$ be the maximum acceptance probability of the prover in the hybrid $H_i$ among all no-instances. We show that $A_i \leq \negl(n)$ for all $i \in \{0,\ldots,4\}$. Notice that $A_0 \leq \negl(n)$ by \thm{interactive}.

\begin{lemma}\label{lem:hyb1}
  $A_1 \leq \negl(n)$.
\end{lemma}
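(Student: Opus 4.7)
My plan is to reduce any soundness break of $H_1$ to a soundness break of $H_0 = $ \prot{interactive-attempt}, and then invoke \thm{interactive} to conclude. The reduction rests on two facts. First, $\commit$ is perfectly binding (see \app{standard-primitives}), so the $(\beta',\gamma',r_1')$ revealed by any accepting prover in $H_1$ must equal the $(\beta,\gamma,r_1)$ sampled in setup. Second, the conjugated Hamiltonian $H_{\beta,\gamma} = X^\beta Z^\gamma H Z^\gamma X^\beta$ that appears inside $\verdict'$ (see~\eq{verdict}) is again a valid \zx{} instance with the same spectrum as $H$: a one-line Pauli computation gives
\[
H_{\beta,\gamma} \;=\; \sum_{i<j} J_{ij}\bigl((-1)^{\gamma_i+\gamma_j} X_i X_j + (-1)^{\beta_i+\beta_j} Z_i Z_j\bigr),
\]
so the normalization $2\sum_{i<j}|J_{ij}|=1$ is preserved, and unitary conjugation preserves spectra. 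In particular $H\in\zx_{\no}$ iff $H_{\beta,\gamma}\in\zx_{\no}$ for every $(\beta,\gamma)$.

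With these in hand, suppose for contradiction that some \QPT prover $\prover^*$ wins $H_1$ on a no-instance $\bar H$ with non-negligible probability $\epsilon$. For each $(\beta_0,\gamma_0)$, let $p(\beta_0,\gamma_0)$ denote $\prover^*$'s conditional acceptance probability given that setup sampled this one-time pad. Since $\Exp_{\beta,\gamma}[p(\beta,\gamma)] = \epsilon$, there is some choice $(\beta^*,\gamma^*)$ with $p(\beta^*,\gamma^*) \ge \epsilon$. I then construct a (non-uniform) \QPT prover $\prover^{**}$ attacking $H_0$ on the no-instance $H' := X^{\beta^*} Z^{\gamma^*} \bar H Z^{\gamma^*} X^{\beta^*}$ as follows: upon receiving $pk$ from the external $H_0$ verifier, $\prover^{**}$ samples a fresh $r_1$, computes $\xi := \commit(\beta^*,\gamma^*,r_1)$, and simulates the $H_1$ setup $(pk,\beta^*,\gamma^*,r_1,\xi)$ to an internal copy of $\prover^*$ running on $\bar H$. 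It forwards $\prover^*$'s commitment $y$ to the external verifier, relays the challenge $c$ back to $\prover^*$, and outputs $\prover^*$'s response $u$ (discarding the revealed $(\beta',\gamma',r_1')$).

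The external $H_0$ setup and the simulated $H_1$ setup induce identical distributions over $(pk,sk,s,c)$, so $\prover^*$'s view inside the reduction matches its view in a real $H_1$ execution conditioned on $(\beta,\gamma)=(\beta^*,\gamma^*)$. By perfect binding, $\prover^*$'s acceptance forces $(\beta',\gamma',r_1')=(\beta^*,\gamma^*,r_1)$, and hence
\[
\verdict'(\bar H,s,sk,y,c,\beta^*,\gamma^*,u) = \verdict(H_{\beta^*,\gamma^*},s,sk,y,c,u) = \verdict(H',s,sk,y,c,u) = 1,
\]
which is precisely the acceptance condition of $\prover^{**}$ in $H_0$ on the no-instance $H'$. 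Thus $\prover^{**}$ wins with probability at least $p(\beta^*,\gamma^*) \ge \epsilon$, contradicting $A_0 \le \negl(n)$. I do not expect any real obstacle: the only step requiring a calculation is checking that the Pauli twist preserves the $\zx$ structure and spectrum, after which the argument is just bookkeeping around the binding property and the invariance of $\verdict'$ under the twist.
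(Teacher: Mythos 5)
Your proof is correct and follows essentially the same route as the paper's: condition on the one-time pad key, observe that $H_{\beta,\gamma}$ is a no-instance of \zx{} with the same spectrum, reduce to $H_0$ on the conjugated instance, and use perfect binding to force the revealed keys to match the committed ones. The only cosmetic difference is that you fix the best pair $(\beta^*,\gamma^*)$ non-uniformly, whereas the paper bounds the conditional success probability for \emph{every} pair by the soundness of $H_0$ on $H_{\beta,\gamma}$ and then averages; both are standard and equivalent here.
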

\begin{proof}
  Let the soundness error of $H_0$ be $\epsilon=\negl(n)$.
  First we introduce a hybrid protocol $H_{0.1}$, which is the same as $H_1$ except that in the setup phase, both $\prover$ and $\verifier$ receive a uniform key pair $(\beta,\gamma)$ (and the instance $H$), and $\verifier$ verifies the instance $H_{\beta,\gamma}$, which has the same ground state energy as $H$.

  Conditioned on any key pair $(\beta,\gamma)$, let the success probability of any prover $\prover$ be $\epsilon_{\beta,\gamma}$ in $H_{0.1}$.
  Furthermore, since we have fixed a pair $(\beta,\gamma)$, $H_{0.1}$ is exactly the same as $H_0$ except that $H$ is replaced with $H_{\beta,\gamma}$.
  Thus we conclude that $\epsilon_{\beta,\gamma}\leq\epsilon$.
  The success probability of $\prover$ is $\Exp_{\beta,\gamma}[\epsilon_{\beta,\gamma}]\leq\epsilon$, and thus $H_{0.1}$ has soundness error $\epsilon$.

  Next, we observe that $H_1$ is the same as $H_{0.1}$ except that $\verifier$ receives a commitment $\xi$ to a uniform key pair $(\beta,\gamma)$ in the setup phase, and $\prover$ reveals the key pair in Round $\prover_2$.
  Since $\verifier$'s only message (Round~$\verifier_2$) does not depend on the key pair in either $H_{0.1}$ or $H_1$ (public coins) and the commitment is perfectly binding (and therefore the prover cannot change the values of $\beta$ and $\gamma$ depending on the verifier's challenge), in the reduction, $\verifier$ does not reject with higher probability if $\verifier$ knows the keys in an earlier step.
\end{proof}

\begin{lemma}\label{lem:hyb2}
  $A_2 \leq  \negl(n)$.
\end{lemma}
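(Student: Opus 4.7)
The plan is to argue that the move from $H_1$ to $H_2$ can only shrink the set of accepting transcripts, so that $A_2 \le A_1$ and the conclusion follows immediately from \lem{hyb1}.

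First I would read off from the definition of $H_2$ that, compared to $H_1$, the prover's Round~$\prover_2$ message is augmented only by the two additional items $\chi$ and $r_2$, while the verifier's accept predicate is augmented by exactly one new conjunct, namely $\chi = \commit(u, r_2)$. The other two conjuncts, that $\verdict'$ returns $1$ and that $\xi = \commit(\beta, \gamma, r_1)$, are inherited verbatim from $H_1$. Consequently the $H_2$ accept event is literally the $H_1$ accept event intersected with one additional commitment check on $u$.

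Next I would give a black-box reduction. Given any QPT prover $\prover_2^\ast$ for $H_2$ achieving success probability $A_2$, construct a prover $\prover_1^\ast$ for $H_1$ that internally simulates $\prover_2^\ast$ end to end, and in Round~$\prover_2$ forwards only $(u, \beta, \gamma, r_1)$ to the external $H_1$ verifier while discarding $(\chi, r_2)$. Whenever $H_2$'s verifier accepts the original transcript, both $H_1$ checks hold, so $H_1$'s verifier accepts the forwarded transcript. Hence $A_2 \le A_1$, and \lem{hyb1} yields $A_2 \le \negl(n)$.

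There is essentially no obstacle here: this is a purely syntactic monotonicity argument and uses neither the hiding nor the binding property of $\commit$. The role of the commitment $\chi$ will only become substantive in the later hybrids, where the binding of $\chi$ is what pins down a value of $u$ independently of the setup randomness $(csk, cs, \crs)$ introduced in $H_3$ and $H_4$; this is what will enable the reductions to $\FHE$ semantic security, $\NIZK$ adaptive soundness, and ultimately back to the soundness of \prot{interactive-attempt}.
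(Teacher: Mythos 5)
Your argument is correct and is essentially the paper's own proof: $H_2$ only adds an extra conjunct (the check $\chi=\commit(u,r_2)$) to the $H_1$ accept predicate, so any $H_2$ prover yields an $H_1$ prover with at least the same success probability, giving $A_2 \le A_1 \le \negl(n)$. The paper states this monotonicity in two lines; your black-box reduction merely makes the same observation explicit.
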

\begin{proof}
  The difference between $H_1$ and $H_2$ is whether $\prover$ sends a commitment to $u$ in Round $\prover_2$, and $\verifier$ checks if $\chi$ is a commitment to $u$.
  Since the probability can only become smaller under this change, the success probability in $H_2$ is negligible.
\end{proof}

\begin{lemma}\label{lem:hyb3}
  $A_3 \leq \negl(n)$.
\end{lemma}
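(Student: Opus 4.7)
The plan is to reduce the soundness of $H_3$ to that of $H_2$ via the quantum semantic security of the underlying $\FHE$ scheme. The key observation is that $H_3$ differs from $H_2$ only by broadcasting the ciphertexts $csk = \FHE.\Enc_{hpk}(sk)$ and $cs = \FHE.\Enc_{hpk}(s)$ to both parties during the setup phase, and that the honest verifier's verdict does not depend on these extra ciphertexts at all. Hence the only way they can increase the cheating probability is through the information they leak about $sk$ and $s$, which is precisely what semantic security forbids.

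First, I would introduce an intermediate hybrid $H_{2.5}$, identical to $H_3$ except that $csk$ and $cs$ are replaced by $\FHE.\Enc_{hpk}(0^{|sk|})$ and $\FHE.\Enc_{hpk}(0^{|s|})$ respectively. In $H_{2.5}$ these extra ciphertexts are independent of the true $sk$ and $s$, so any \QPT prover succeeding in $H_{2.5}$ can be perfectly simulated by a \QPT prover in $H_2$ that internally samples a fresh $(hpk,hsk)\gets\FHE.\Gen(1^\lambda)$ and computes its own encryptions of zero before running the $H_{2.5}$ prover as a subroutine. This simulation preserves the acceptance probability, so $A_{2.5}\leq A_2\leq \negl(n)$ by \lem{hyb2}.

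Next, I would reduce distinguishing $H_3$ from $H_{2.5}$ to the semantic security game for $\FHE$. Suppose a \QPT prover $\prover^*$ achieves acceptance $\epsilon$ in $H_3$. Build a distinguisher $D$ that samples every component of $\st_\verifier$ and $\st_\prover$ honestly (in particular $sk$, $hsk$, $\crs$, $r_1$, and $s$), submits the message pair $((sk,s),(0^{|sk|},0^{|s|}))$ to the $\FHE$ challenger, and plugs the returned ciphertext into the setup in place of $(csk,cs)$. Since $D$ generated all of the verifier's secrets itself, it can execute the honest verifier against $\prover^*$ end-to-end and output the resulting verdict bit. A real ciphertext perfectly simulates $H_3$, and a zero ciphertext perfectly simulates $H_{2.5}$, so the distinguishing advantage of $D$ is at least $A_3-A_{2.5}\geq A_3-\negl(n)$. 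Quantum semantic security of $\FHE$ then forces $A_3\leq\negl(n)$, as desired.

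The only delicate point is verifying that $D$ faithfully reproduces every component of $H_3$ without ever needing to decrypt the challenge ciphertext; this works precisely because $sk$, $hsk$, $r_1$, and $s$ are all locally generated. A secondary subtlety is that $\prover^*$ is quantum and may use $csk,cs$ coherently inside its homomorphic evaluation in Round $\prover_2$, which is exactly why we invoke the quantum-adversary version of semantic security defined in \app{standard-primitives} rather than its classical analogue.
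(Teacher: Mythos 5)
Your proposal is correct and takes essentially the same approach as the paper, which likewise swaps the encryptions of $(sk,s)$ for encryptions of zeros and invokes the quantum semantic security of the FHE scheme to reduce a cheating prover in $H_3$ to one in $H_2$. One small fix: your distinguisher $D$ must use the challenger's $hpk$ and therefore cannot sample $hsk$ locally, but this is harmless because the verdict in $H_3$ never performs FHE decryption, so $hsk$ is not needed to run the experiment.
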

\begin{proof}
  Let us suppose that there is a prover who wins $H_{3}$ with non-negligible probability $\epsilon$. We can construct an adversary in $H_2$ that simulates such a prover by providing encryptions of $(0,0)$ instead of $(s,sk)$. By the security of the FHE scheme, such an adversary cannot distinguish both ciphertexts, and therefore $H_3$ would still make the verifier accept with probability at least $\epsilon - \negl(n)$. In this case the adversary in $H_2$ would succeed with the same probability, which is a contradiction.
\end{proof}
\begin{lemma}\label{lem:hyb4}
  $A_4 \leq \negl(n)$.
\end{lemma}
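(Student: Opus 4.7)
The plan is to reduce acceptance in $H_4$ to acceptance in $H_3$ by invoking adaptive soundness of the NIZK argument system. Suppose for contradiction that some QPT prover $\prover_4$ makes $\verifier$ accept in $H_4$ with non-negligible probability $\epsilon(n)$ on some no-instance $H$. The only differences between $H_3$ and $H_4$ are that (i) the setup also samples and distributes $\crs$, and (ii) the Round $\prover_2$ message is replaced by an FHE ciphertext $ce$ from which $\verifier$ decrypts a NIZK proof $e$ and accepts iff $\NIZK.\aV(\crs,x,e)=1$, where $x=(H,s,sk,\xi,y,c,\chi)$.

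First, I will build a QPT adversary $\A$ against the adaptive soundness of $\NIZK$. On input $\crs$, $\A$ freshly samples the remaining $H_4$ setup material --- $(pk,sk)\gets\Gen(1^\lambda,h)$ for random $h$, $(hpk,hsk)\gets\FHE.\Gen(1^\lambda)$, the one-time-pad keys $(\beta,\gamma)$ with $r_1$, the randomness $s$, and then $\xi=\commit(\beta,\gamma;r_1)$, $csk=\FHE.\Enc_{hpk}(sk)$, $cs=\FHE.\Enc_{hpk}(s)$ --- feeds $(\crs,pk,hpk,csk,cs,\beta,\gamma,r_1)$ to $\prover_4$, supplies a uniformly random $c$ as $\verifier_2$'s message, and collects the prover's Round $\prover_2$ message $(\chi,ce)$. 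Finally $\A$ outputs $(x,e)$ with $x=(H,s,sk,\xi,y,c,\chi)$ and $e=\FHE.\Dec_{hsk}(ce)$. Since the joint distribution of $(\crs,x,e)$ produced by $\A$ is identical to that arising in a genuine $H_4$ execution with $\prover_4$, we have $\Pr[\aV(\crs,x,e)=1]=\epsilon$. Adaptive NIZK soundness then gives $\Pr[x\in \LL]\geq \epsilon-\negl(n)$.

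Second, I will transform this into a QPT prover $\prover_3$ for $H_3$ with acceptance probability at least $\epsilon-\negl(n)$, contradicting $A_3\leq\negl(n)$. Concretely, $\prover_3$ receives the $H_3$ setup, generates its own $\crs\gets\NIZK.\Setup(1^\lambda)$, and runs $\prover_4$ with the combined inputs. It relays $y$ in Round $\prover_1$, forwards the verifier's $c$ to $\prover_4$, and receives $(\chi,ce)$. It then needs to send the $H_3$-format message $(\beta,\gamma,r_1,u,r_2)$ together with $\chi$. The values $(\beta,\gamma,r_1)$ come from the setup, and by the perfect-binding property of $\commit$, whenever $x\in\LL$ the commitment $\chi$ has a unique opening $(u^*,r_2^*)$ that satisfies $\verdict'(H,s,sk,y,c,\beta,\gamma,u^*)=1$.

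The main obstacle is the efficient extraction of $(u^*,r_2^*)$: plain NIZK adaptive soundness gives existence of a witness, not an efficient extractor. My plan is to close this gap by leveraging the structure of the Peikert--Shiehian construction, whose underlying $\Sigma$-protocol for $\LL$ admits a witness extractor that yields the committed openings from the prover's encrypted proof (note that $\prover_3$ holds $hsk$ since it generated the FHE keys in Step 1's simulation --- wait, here $\prover_3$ in $H_3$ receives $(hpk,csk,cs)$ but not $hsk$; so we instead use that $\prover_3$ can itself internally re-run $\A$ of Step 1 with an independently sampled $(hpk,hsk)$ to extract openings, invoking FHE semantic security to argue that $\prover_4$'s Round $\prover_2$ output distribution is essentially unchanged when the FHE ciphertexts are swapped). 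Once $(u^*,r_2^*)$ is extracted, $\prover_3$ outputs $(\chi,\beta,\gamma,r_1,u^*,r_2^*)$; whenever $x\in\LL$, the $H_3$ verifier accepts, so $\prover_3$ succeeds with probability at least $\epsilon-\negl(n)$, contradicting $A_3\leq\negl(n)$. The hardest and most delicate part of the argument is bridging NIZK soundness (existential) to an efficient extraction procedure supplying openings to $\prover_3$, and any slack there is absorbed into the negligible error.
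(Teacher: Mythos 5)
Your overall route is the paper's: use perfect binding to pin down a unique witness behind $\chi$, use adaptive soundness of the NIZK (with the reduction sampling the FHE keys itself so that it can decrypt $ce$) to conclude that acceptance forces $x\in\LL$ up to negligible error, and then argue that no efficient prover can produce $x\in\LL$ for a no-instance by reducing to $H_3$. Your first step is correct and matches the paper. The problem is the step you yourself flag as the delicate one: converting a $\prover_4$ that achieves $x\in\LL$ with probability $\epsilon$ into an $H_3$-prover, which must output $u$ and $r_2$ \emph{in the clear}. You cannot sidestep this with an existence argument, because for a no-instance accepting responses $u$ do exist for essentially every $(y,c,s,sk)$ (the Hadamard-round checks are parity conditions on the decoded bits, so suitable $(w,t)$ always exist); the hardness lies entirely in \emph{computing} one without the trapdoor, so the reduction genuinely must get $u$ out of $\prover_4$.

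Neither of your proposed fixes closes this gap. The Peikert--Shiehian NIZK is used in the paper only with completeness, adaptive soundness, and zero-knowledge; no knowledge-soundness or witness-extraction property is assumed, so there is no extractor to invoke, and even a successful decryption of $ce$ yields the proof string $e$, not the witness $(u,r_2)$. The fallback of re-running $\prover_4$ under independently sampled FHE keys also fails: the $H_3$-prover does not know $sk$ or $s$ (it receives only $csk$ and $cs$), so it cannot regenerate those ciphertexts under a key pair whose secret key it holds; replacing them by encryptions of $0$ (as in the $H_3$-to-$H_2$ step) preserves the prover's behaviour up to negligible error but still leaves you holding only a NIZK proof rather than a witness. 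For comparison, the paper does not attempt to extract from $\chi$ or from $ce$: it notes that by perfect binding an accepting $x$ has a unique witness and asserts the reduction to $H_3$ in a single parenthetical. It is terse at that point, but it does not rest on an extractability property the NIZK was never assumed to have. As written, your argument is incomplete at exactly the step you identified.
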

\begin{proof}
  In $H_3$, by the perfect binding property of the commitment scheme, for an accepting view $x=(H,s,sk,\xi,y,c,\chi)$,
  there is a unique witness $\witness=(u,\beta,\gamma,r_1,r_2)$ such that $\verdict'(x,\witness)=1$; otherwise $x$ is unsatisfiable.
Since no efficient quantum prover can generate $(\xi,y,\chi)$ such that $x \in \LL$ except with negligible probability (otherwise it would be possible for a prover in $H_3$ to succeed with non-negligible probability),
for any prover $\prover$ and $H\in\zx_{no}$,
  \begin{align}\label{eq:no}
    \Pr_{s,sk,c}[(\xi,y,\chi)\gets \prover:~(H,s,sk,\xi,y,c,\chi)\notin\LL]>1-\delta
  \end{align}
  for some negligible function $\delta$.

  By the soundness property of the $\NIZK$ scheme, no bounded provers could then provide a proof that makes the verifier accept such a no instance except with negligible probability. The fact that the $\NIZK$ is encrypted does not change this.
\end{proof}

\section{The reduction for $\Sigma$-protocols in the QROM}\label{app:QROM}
\label{app:sigma-reduction}

We now briefly describe the reduction in \expref{Theorem}{thm:FS-sigma}.

While this is nontrivial to prove, the Fiat-Shamir transformation remains secure in the QROM, up to some mild conditions on the $\Sigma$-protocol~\cite{DFMS19}. The proof proceeds by a reduction that uses a successfully cheating prover $\A^\H$ in the FS-transformed protocol to build a successfully cheating prover $\mathcal S$ in the $\Sigma$-protocol. This can actually be done in a black-box way, so that $\mathcal S$ only needs to query $\A$.

We let $Y,C,M,E$ denote the registers storing the commitment, challenge, response, and prover workspace, respectively. A $q$-query prover $\A^\H$ is characterized by a sequence of unitaries $A_0,A_1,\ldots,A_q$ on registers $Y,C,M,E$. After $i$ queries, it is in the state
\begin{align*}
  \ket{\psi_i^\H} := A_i U_\H A_{i-1} U_\H \ldots A_1 U_\H A_0 \ket{0}_{YCME}\,.
\end{align*}
In a normal execution, $\A^\H$ first prepares $\ket{\psi_q^\H}$, then measures $Y$ and $M$ in the standard basis and sends the outcome to $\verifier_{FS}$ (see \prot{FS-sigma}). The success probability is
\begin{align*}
  \Pr_\H[V(x,y,\H(x,y),m)=1,~(y,m)\gets\A^\H(x)].
\end{align*}

The prover $\mathcal{S}^\A$ begins by internally instantiating a quantum-secure pseudorandom function\footnote{If an upper bound on $q$ is known, a $2q$-wise independent function would suffice.} $\F$~\cite{Zha12}. It then chooses a random index $i\in\{0,\ldots,q\}$ and prepares the state $\ket{\psi_i^{\F}}$. Note that the queries of $\A$ are answered using $\F$. Next, the simulator performs a standard basis measurement on register $Y$ and sends the outcome $y$ to $\verifier$. After $\verifier$ returns a random challenge $\Theta$, $\mathcal{S}^\A$ constructs a reprogrammed oracle, denoted by $\F*\Theta y$, with
\begin{align*}
  \F * \Theta y (x, y') = \begin{cases}\Theta & \text{if } y'= y \\ \F(x,y') & \text{otherwise}\end{cases}
\end{align*}
which will be used for the remaining simulation of $\A$.
$\mathcal{S}^\A$ then tosses a random coin $b$ and performs the following:
\begin{enumerate}
  \item If $b=0$, $\mathcal{S}^\A$ runs $\A$ with the reprogrammed oracle $\F*\Theta y$ for query $i+1,\ldots,q$.
  \item If $b=1$, $\mathcal{S}^\A$ runs $\A$ with the original oracle $\F$ for query $i+1$ and with $\F*\Theta y$ for query $i+2,\ldots,q$.
\end{enumerate}
To generate the message to $\verifier$, $\mathcal{S}^\A$ measures registers $Y,M$ and obtains the outcomes $y'',m$.
If $y''\neq y$ then $\mathcal{S}^\A$ aborts; otherwise $\mathcal{S}^\A$ outputs the measurement outcome $(y,m)$.
The success probability of $\mathcal{S}^\A$ is
\begin{align*}
  \Pr_\Theta[V(x,y,\Theta,m)=1:(y,m)\gets \langle\mathcal{S}^\A,\Theta\rangle],
\end{align*}
where $\langle\mathcal{S}^\A,\Theta\rangle$ denotes the interaction between $\verifier$, who sends a challenge $\Theta$, and $\mathcal{S}^\A$ in the $\Sigma$-protocol.
\expref{Theorem}{thm:FS-sigma} by \cite{DFMS19}  establishes that the success probabilities of $\mathcal{S}^\A$ and $\A^\H$ are polynomially related.

Now we use the same reduction to prove \lem{genprots}.
\lemfs*
\begin{proof}
Conditioned on $r=r^*$, the success probability of $\A$ is
\begin{align*}\nonumber
  &\Pr_\H[V(r,x,y,\H(x,f(r,x),y),m)=1:~(y,m)\gets \A^\H(x,f(x,r))|r=r^*] \\
  &\qquad=\Pr_{\H_{r^*}}[V(r^*,x,y,\H_{r^*}(x,y),m)=1:~(y,m)\gets \A_{r^*}^\H(x)] =:
  \epsilon(r^*),
\end{align*}
where $\H_{r^*}(x,y):=\H(x,f(r,x),y)$
and by definition $\Exp_r[\epsilon(r)]=\epsilon$.
Since $\H$ is a random function over $L\times\W\times\Y\to\C$, for any $r^*$, $\H_{r^*}$ is a random function over $L\times\Y\to\C$.
Then by \thm{FS-sigma}, the success probability of $\mathcal{S}^{\A_{r^*}}$ is
\begin{align*}
  \Pr_\Theta[V(r^*,x,y,\Theta,m)=1:~(y,m)\gets \langle \mathcal{S}^{\A_{r^*}},\Theta\rangle]
  \geq
  \frac{\epsilon(r^*)}{2(2q+1)(2q+3)}-\frac{1}{(2q+1)|\Y|}.
\end{align*}
Taking the average over $r^*$, we have
\begin{align*}
  &\Pr_{r,\Theta}[V(r,x,y,\Theta,m)=1:~(y,m)\gets \langle\B,\Theta\rangle] \\\nonumber
  & \qquad = \Exp_{r^*}\left[\Pr_\Theta[V(r^*,x,y,\Theta,m)=1:~(y,m)\gets \langle\mathcal{S}^{\A_{r^*}},\Theta\rangle|r=r^*]\right] \\
  & \qquad = \Exp_{r^*}\left[
    \frac{\epsilon(r^*)}{2(2q+1)(2q+3)}-\frac{1}{(2q+1)|\Y|}
    \right] \\
  & \qquad = \frac{\epsilon}{2(2q+1)(2q+3)}-\frac{1}{(2q+1)|\Y|}
\end{align*}
as claimed.
\end{proof}

\end{document}